\newtheorem{theorem}{Theorem}[section]
\newtheorem{lemma}{Lemma}[section]
\def\>{\rangle}
\def\<{\langle}
\def\poly{{\rm poly}}
\def\O{\mathcal{O}}
\def\R{\mathbb{R}}
\def\Tr{\text{Tr}}
\def\E{\mathbb{E}}
\def\CNOT{\text{CNOT}}
\def\CZ{\text{CZ}}
\def\C{\mathbb{C}}
\begin{document}

\title{Toward Trainability of Quantum Neural Networks}

\author[$*$]{Kaining Zhang}
%\email{kzha3670@uni.sydney.edu.au}
\author[$\dag$]{Min-Hsiu Hsieh}
\author[$*$]{Liu Liu}
\author[$*$]{Dacheng Tao}
\affil[$*$]{UBTECH Sydney AI Centre and the School of Computer Science, Faculty of Engineering and Information Technologies, The University of Sydney, Australia}
\affil[$\dag$]{Centre for Quantum Software and Information, Faculty of Engineering and Information Technology, University of Technology Sydney, Australia}

\maketitle

\begin{abstract}
Quantum Neural Networks (QNNs) have been recently proposed as generalizations of classical neural networks to achieve the quantum speed-up. Despite the potential to outperform classical models, serious bottlenecks exist for training QNNs; namely, QNNs with random structures have poor trainability due to the vanishing gradient with rate exponential to the input qubit number. The vanishing gradient could seriously influence the applications of large-size QNNs. In this work, we provide a viable solution with theoretical guarantees. Specifically, we prove that QNNs with tree tensor and step controlled architectures have gradients that vanish at most polynomially with the qubit number. 
%Moreover, our result holds irrespective of which encoding methods are employed. 
We numerically demonstrate QNNs with tree tensor and step controlled structures for the application of binary classification. Simulations show faster convergent rates and better accuracy compared to QNNs with random structures.  
\end{abstract}

\section{Introduction}
\label{tqnn_intro}

Neural Networks \cite{hecht1992theory} using gradient-based optimizations have dramatically advanced researches in discriminative models, generative models, and reinforcement learning. To efficiently {utilize the parameters and practically improve the trainability}, neural networks with specific architectures \cite{lecun2015deep} are introduced for different tasks, including convolutional neural networks \cite{krizhevsky2012imagenet} for image tasks, recurrent neural networks \cite{zaremba2014recurrent} for the time series analysis, and  graph neural networks \cite{scarselli2008graph} for tasks related to %\zkn{spatially structured datasets}
graph-structured data. Recently, the neural architecture search \cite{elsken2019neural} is proposed to improve the performance of the networks by optimizing the neural structures. 

Despite the success in many fields, the development of the neural network algorithms could be limited by the large computation resources required for the model training. In recent years, quantum computing has emerged as one solution to this problem, and has evolved into a new interdisciplinary field known as the quantum machine learning (QML) \cite{biamonte2017quantum, havlivcek2019supervised}. 
Specifically, variational quantum circuits \cite{benedetti2019parameterized} have been explored as efficient protocols for quantum chemistry \cite{kandala2017hardware} and combinatorial optimizations \cite{zhou2018quantum}. Compared to the classical circuit models, %variational 
quantum circuits have shown greater expressive power \cite{Du_2020}, and demonstrated quantum advantage for the low-depth case \cite{bravyi2018quantum}.
Due to the robustness against noises, variational quantum circuits have attracted significant interest for the hope to achieve the quantum supremacy on near-term quantum computers~\cite{arute2019quantum}.

Quantum Neural Networks (QNNs) \cite{farhi2018classification, schuld2020circuit, beer2020training} are the special kind of quantum-classical hybrid algorithms that run on trainable quantum circuits. 
Recently, small-scale QNNs have been implemented on real quantum computers \cite{havlivcek2019supervised} for supervised learning tasks.
The training of QNNs aims to minimize the objective function $f$ with respect to parameters~$\bm{\theta}$. 
Inspired by the classical optimizations of neural networks, a natural strategy to train QNNs is to exploit the gradient of the loss function \cite{crooks2019gradients}. 
However, the recent work \cite{mcclean2018barren} shows that $n$-qubit quantum circuits with random structures and large depth $L=\O(\poly(n))$ tend to be approximately unitary $2$-design \cite{harrow2009random}, and the partial derivative vanishes to zero exponentially with respect to $n$. 
The vanishing gradient problem is usually referred to as the Barren Plateaus \cite{mcclean2018barren}, and could affect the trainability of QNNs in two folds. 
Firstly, simply using the gradient-based method like Stochastic Gradient Descent (SGD) to train the QNN takes a large number of iterations. 
Secondly, %even gradient norms are utilized for adaptive optimizations \cite{chen2018gradnorm}, 
the estimation of the derivatives needs an extremely large number of samples from the quantum output to guarantee a relatively accurate update direction \cite{chen2018gradnorm}. 
%In short, the vanishing gradients could aggravate the training of large-scale QNNs
%, which was considered to be of quantum advantages before, 
%so a relatively large gradient is preferred. 
To avoid the Barren Plateaus phenomenon, we explore QNNs with special structures to gain fruitful results.

In this work, we introduce QNNs with special architectures, including the tree tensor (TT) structure \cite{huggins2019towards} referred to as TT-QNNs and the setp controlled structure referred to as SC-QNNs.
We prove that for TT-QNNs and SC-QNNs, the expectation of the gradient norm of the objective function is bounded.

\begin{theorem}{(Informal)}\label{tqnn_ttn_theorem_informal}
Consider the $n$-qubit TT-QNN and the $n$-qubit SC-QNN defined in Figure~\ref{tqnn_ttn_circuit}-\ref{tqnn_sc_circuit} and corresponding objective functions $f_{\text{TT}}$ and $f_{\text{SC}}$ defined in (\ref{tqnn_ttn_loss}-\ref{tqnn_sc_loss}), then we have:
\begin{align*}
\frac{1+\log n}{2n} \cdot \alpha(\rho_{\text{in}}) &\leq \E_{\bm{\theta}} \|\nabla_{\bm{\theta}} f_{\text{TT}}\|^2 \leq 2n-1, \\
\frac{1+n_c}{2^{1+n_c}} \cdot \alpha(\rho_{\text{in}}) &\leq \E_{\bm{\theta}} \|\nabla_{\bm{\theta}} f_{\text{SC}}\|^2 \leq 2n-1, 
\end{align*}
where $n_c$ is the number of CNOT operations that directly link to the first qubit channel in the SC-QNN, the expectation is taken for all parameters in $\bm{\theta}$ with uniform distributions in $[0,2\pi]$, and $\alpha(\rho_{\text{in}}) \geq 0$ is a constant that only depends on the input state $\rho_{\text{in}} \in \C^{2^n \times 2^n}$.
Moreover, by preparing $\rho_{\text{in}}$ using the $L$-layer encoding circuit in Figure~\ref{tqnn_input_model_circuit_encode}, the expectation of $\alpha(\rho_{\text{in}})$ could be further lower bounded as $\E \alpha(\rho_{\text{in}}) \geq 2^{-2L}$.
\end{theorem}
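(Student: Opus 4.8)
The plan is to reduce the gradient-norm bound to an analysis of the individual partial derivatives, and then exploit the uniform averaging over each rotation angle. First I would write the objective as $f=\Tr\!\big(O\,U(\bm{\theta})\,\rho_{\text{in}}\,U(\bm{\theta})^\dagger\big)$, where $O$ is the single-qubit measurement observable on the first channel (so $\|O\|\le 1$) and $U(\bm{\theta})$ is decomposed into elementary single-qubit rotations $e^{-i\theta_j P_j/2}$ interleaved with the fixed $\CNOT$/$\CZ$ entanglers dictated by the TT or SC wiring. Since $\partial_{\theta_j}e^{-i\theta_j P_j/2}=-\tfrac{i}{2}P_j\,e^{-i\theta_j P_j/2}$, each partial derivative equals $-\tfrac{i}{2}\Tr\!\big([\tilde O_j,P_j]\,\sigma_j\big)$ for a conjugated observable $\tilde O_j$ and density matrix $\sigma_j$, so that $\|\nabla_{\bm{\theta}}f\|^2=\sum_j(\partial_{\theta_j}f)^2$ is a sum of squared commutator expectations. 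The upper bound is then immediate: $|\partial_{\theta_j}f|\le\tfrac12\|[\tilde O_j,P_j]\|\le\|O\|\le1$, and summing over the $2n-1$ trainable angles gives $\E_{\bm{\theta}}\|\nabla_{\bm{\theta}}f\|^2\le 2n-1$ in both architectures.

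For the lower bound I would take the expectation $\E_{\bm{\theta}}$ and push the angle integrals inward. The key device is that averaging one rotation uniformly over $[0,2\pi]$ acts as a partial dephasing channel, $\tfrac{1}{2\pi}\int_0^{2\pi}e^{-i\theta P/2}\,X\,e^{i\theta P/2}\,d\theta=\tfrac12\big(X+PXP\big)$, which retains the part of $X$ commuting with $P$ and annihilates the anticommuting part. Applying this gate by gate collapses the cross-terms and leaves a tractable diagonal expression. The architecture enters precisely here: for the TT circuit only the gates lying in the causal cone of the measured qubit survive the averaging, and these span the $\O(\log n)$ levels of the tree along the path to the root; collecting the surviving diagonal terms isolates a single input-dependent functional $\alpha(\rho_{\text{in}})\ge0$ multiplied by the combinatorial weight $\tfrac{1+\log n}{2n}$. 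The SC case is completely analogous, with the tree depth replaced by the number $n_c$ of $\CNOT$s feeding the first channel, which produces the weight $\tfrac{1+n_c}{2^{1+n_c}}$.

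Finally, for the encoding estimate I would regard $\alpha(\rho_{\text{in}})$ as an explicit quadratic functional of the input state and compute its expectation over the randomized $L$-layer encoder of Figure~\ref{tqnn_input_model_circuit_encode}. Each encoding layer contributes a contraction under this expectation, and the same dephasing average---now appearing squared because $\alpha$ is quadratic---shows that one layer cannot shrink $\alpha$ by more than a factor of $\tfrac14$; iterating over the $L$ layers gives $\E\,\alpha(\rho_{\text{in}})\ge 4^{-L}=2^{-2L}$.

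I anticipate the main obstacle to be the lower-bound bookkeeping in the second step: after the per-angle dephasing one must verify that the surviving diagonal terms do not cancel and that they aggregate to exactly the claimed coefficient rather than washing out to zero as they would for an approximate unitary $2$-design. This rests on an accurate description of the causal-cone geometry of each architecture---determining which gates lie in the causal past of the measured qubit---and confirming that the partial averages over the remaining gates leave a strictly positive, input-dependent residue. The upper bound and the encoding contraction should be comparatively routine once this dephasing lemma and the causal-cone count are in hand.
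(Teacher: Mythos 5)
Your skeleton (per-parameter bound of $1$ for the upper bound; restriction to the parameters on the measured qubit's path for the lower bound; a per-layer factor-$\tfrac14$ contraction for the encoder) matches the paper's architecture, and your upper-bound argument is fine. However, there is a genuine gap in the central device of your lower bound. The identity you propose, $\tfrac{1}{2\pi}\int_0^{2\pi}e^{-i\theta P/2}Xe^{i\theta P/2}\,d\theta=\tfrac12(X+PXP)$, is a statement about the \emph{first} moment of a conjugated operator, but every quantity you must control --- $\E_{\bm{\theta}}\|\nabla_{\bm{\theta}}f\|^2$, $\E_{\bm{\theta}}(f-\tfrac12)^2$, and $\E_{\bm{\beta}}\,\alpha(\rho_{\text{in}})$ --- is a \emph{second} moment: an expectation of a product of two traces, each containing the same rotation angle. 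Since $\E_\theta\bigl[\Tr[WAW^\dag B]\,\Tr[WCW^\dag D]\bigr]\neq\E_\theta\Tr[WAW^\dag B]\cdot\E_\theta\Tr[WCW^\dag D]$ (the integrand involves $\cos^4\theta$, $\cos^2\theta\sin^2\theta$, etc., which do not factorize), pushing the angle integrals inward with the first-moment dephasing formula is invalid, and this is exactly the step where your plan would fail. The paper instead computes these second moments exactly (Lemma~\ref{ttn_lemma_wawbwcwd}), and, crucially, needs a companion computation in which one trace carries the \emph{derivative} $G=\partial W/\partial\theta$ of the rotation (Lemma~\ref{ttn_lemma_gawbgcwd}) --- an object your commutator formulation introduces but your dephasing device never addresses.

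The reason the bounds come out nonzero (rather than washing out as for a $2$-design) is also more specific than a causal-cone count: because the observable is $\sigma_3$ on the first wire, the rotations are $R_Y$ (so $\sigma_k=\sigma_2$), and CNOT conjugation maps $\sigma_1\otimes\sigma_0\mapsto\sigma_1\otimes\sigma_0$ and $\sigma_3\otimes\sigma_0\mapsto\sigma_3\otimes\sigma_3$, the surviving Pauli strings always have first index in $\{1,3\}$, and for such indices the two second-moment lemmas \emph{coincide}. This yields the exact identity $\E_{\bm{\theta}}\bigl(\partial f/\partial\theta_j^{(1)}\bigr)^2=4\,\E_{\bm{\theta}}(f-\tfrac12)^2$ for each of the $1+\log n$ (resp.\ $1+n_c$) parameters on the first channel (Lemma~\ref{ttn_lemma_theta_0}), after which a layer-by-layer peel-off gives $\E_{\bm{\theta}}(f-\tfrac12)^2\geq\alpha(\rho_{\text{in}})/(8n)$ (Lemma~\ref{tqnn_ttn_lemma_Ef2}), and analogously $\alpha(\rho_{\text{in}})/2^{3+n_c}$ for the SC case; multiplying these pieces produces the claimed coefficients. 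Your encoder step has the same issue (the factor $\tfrac14$ per layer is a second-moment contraction, derived from the same lemmas), and it additionally needs the unstated base case $\alpha\bigl(|1\>\<1|^{\otimes n}\bigr)=1$ supplied by the initial $X^{\otimes n}$ layer. In short: replace your first-moment dephasing lemma by its second-moment analogues (with and without the derivative insertion) and the rest of your outline can be carried through essentially as the paper does.
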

Compared to random QNNs with $2^{-\O(\poly(n))}$ derivatives, the gradient norm of TT-QNNs ad SC-QNNs is greater than $\Omega(1/n)$ or $\Omega(2^{-n_c})$ that could lead to better trainability. Our contributions are summarized as follows:
\begin{itemize}
\item We prove $\tilde{\Omega}(1/n)$ and $\tilde{\Omega}(2^{-n_c})$ lower bounds on the expectation of the gradient norm of TT-QNNs and SC-QNNs, respectively, that guarantees the trainability on related optimization problems. Our theorem does not require the unitary $2$-design assumption in existing works and is more realistic to near-term quantum computers. 
\item We prove that by employing the encoding circuit in Figure~\ref{tqnn_input_model_circuit_encode} to prepare $\rho_{\text{in}}$, the expectation of term $\alpha(\rho_{\text{in}})$ is lower bounded by a constant $2^{-2L}$. Thus, we further lower bounded the expectation of the gradient norm to the term independent from the input state.
\item We simulate the performance of TT-QNNs, SC-QNNs, and random structure QNNs on the binary classification task. All results verify proposed theorems. Both TT-QNNs and SC-QNNs show better trainability and accuracy than random QNNs.
\end{itemize}
Our proof strategy could be adopted for analyzing QNNs with other architectures as future works. With the proven assurance on the trainability of TT-QNNs and SC-QNNs, we eliminate one bottleneck in front of the application of large-size Quantum Neural Networks.

The rest parts of this paper are organized as follows. We address the preliminary including the definitions, the basic quantum computing knowledge and related works in Section~\ref{tqnn_pre}. The QNNs with special structures and the corresponding results are presented in Section~\ref{tqnn_ttn}. We implement the binary classification using QNNs with the results shown in Section~\ref{tqnn_experiments}. We make conclusions in Section~\ref{tqnn_conclusions}.

\section{Preliminary}
\label{tqnn_pre}

%We present notations and the quantum knowledge in Section~\ref{tqnn_pre_notation}, and summarize related works in Section~\ref{tqnn_pre_related_works}.

%We present notations in Section~\ref{tqnn_pre_notation}. Basic quantum computing knowledge is provided in Section~\ref{tqnn_pre_quantum_knowledge}. We summarize related works in Section~\ref{tqnn_pre_related_works}.

\subsection{Notations and the Basic Quantum Computing}
%\subsection{Notations}
\label{tqnn_pre_notation}
We use \([N]\) to denote the set \(\{ 1,2,\cdots,N\}\).
The form \(\|\cdot\|\) denotes the \(\|\cdot\|_2\) norm for vectors. We denote $a_j$ as the $j$-th component of the vector $\bm{a}$. 
The tensor product operation is denoted as ``$\otimes$". The conjugate transpose of a matrix $A$ is denoted as $A^{\dag}$. The trace of a matrix $A$ is denoted as $\Tr[A]$. We denote $\nabla_{\bm{\theta}}f$ as the gradient of the function $f$ with respect to the vector $\bm{\theta}$. We employ notations $\O$ and $\tilde{\O}$ to describe the standard complexity and the complexity ignoring minor terms, respectively.

%\subsection{Quantum Computing}
%\label{tqnn_pre_quantum_knowledge}
Now we introduce the quantum computing. 
The pure state of a qubit could be written as $|\phi\> = a|0\>+b|1\>$, where $a,b \in \mathbb{C}$ satisfies $|a|^2 + |b|^2 =1$, and $\{|0\> = (1,0)^T, |1\> = (0,1)^T\}$, respectively. % forms the computational basis. 
The $n$-qubit space is formed by the tensor product of $n$ single-qubit spaces.
For the vector \(\bm{x} \in \R^{2^n}\), the amplitude encoded state \(|\bm{x}\>\) is defined as \(\frac{1}{\|\bm{x}\|} \sum_{j=1}^{2^n} x_j |j\>\). The dense matrix is defined as $\rho=|\bm{x}\>\<\bm{x}|$ for the pure state, in which $\<\bm{x}| = (|\bm{x}\>)^{\dag}$. %We could perform single-qubit operations to a qubit, which includes $RX(\theta)=e^{-i\theta X}$, $RY(\theta)=e^{-i\theta Y}$, and $RZ(\theta)=e^{-i\theta Z}$:
A single-qubit operation to the state behaves like the matrix-vector multiplication and can be referred to as the gate
$\Qcircuit @C=0.8em @R=1.5em {
\lstick{} & \gate{} & \qw 
}$ in the quantum circuit language.
Specifically, single-qubit operations are often used including $R_X (\theta)=e^{-i\theta X}$, $R_Y (\theta)=e^{-i\theta Y}$, and $R_Z (\theta)=e^{-i\theta Z}$:
\begin{equation*}
X = \begin{pmatrix}
0 & 1 \\
1 & 0
\end{pmatrix},
Y = \begin{pmatrix}
0 & -i \\
i & 0
\end{pmatrix},
Z = \begin{pmatrix}
1 & 0 \\
0 & -1
\end{pmatrix}.
\end{equation*}
Pauli matrices $\{I, X, Y, Z\}$ will be referred to as $\{\sigma_0, \sigma_1, \sigma_2, \sigma_3\}$ for the convenience. Moreover,  two-qubit operations, the CNOT gate and the CZ gate, are employed for generating quantum entanglement:
$$\text{CNOT} = \Qcircuit @C=0.5em @R=0.5em {
\lstick{} & \ctrl{1} & \qw \\
\lstick{} & \targ & \qw 
} = |0\>\<0| \otimes \sigma_0 + |1\>\<1| \otimes \sigma_1,\ \text{CZ} = \Qcircuit @C=0.8em @R=0.8em {
\lstick{} & \ctrl{1} & \qw \\
\lstick{} & \ctrl{-1} & \qw 
} = |0\>\<0| \otimes \sigma_0 + |1\>\<1| \otimes \sigma_3.$$
We could obtain information from the quantum system by performing measurements, for example, measuring the state $|\phi\> =a|0\>+b|1\>$ generates $0$ and $1$ with  probability $p(0)=|a|^2$ and $p(1)=|b|^2$, respectively. Such a measurement operation could be mathematically referred to as calculating the average of the observable $O=\sigma_3$ under the state $|\phi\>$:
\begin{equation*}
\<\sigma_3\>_{|\phi\>} \equiv \<\phi|\sigma_3|\phi\> \equiv \Tr [|\phi\>\<\phi| \cdot \sigma_3] = |a|^2 - |b|^2 = p(0) - p(1) = 2p(0)-1.
\end{equation*}
The average of a unitary observable under arbitrary states is bounded by $[-1,1]$.

\subsection{Related Works}
\label{tqnn_pre_related_works}
The barren plateaus phenomenon in QNNs is first noticed by Ref.~\cite{mcclean2018barren}. They prove that for $n$-qubit random quantum circuits with depth $L=\O(\poly(n))$, the expectation of the derivative to the objective function is zero, and the variance of the derivative vanishes to zero with rate exponential in the number of qubits $n$. 
Later, Ref.~\cite{cerezo2020cost} prove that for $L$-depth quantum circuits consisting of $2$-design gates, the gradient with local observables %of the objective function 
vanishes with the rate $\O(2^{-O(L)})$. The result implies that in the low-depth $L=\O(\log n)$ case, the vanishing rate could be $\O(\frac{1}{\poly n})$, which is better than previous exponential results. 
Recently, some techniques have been proposed to address the barren plateaus problem, including the special initialization strategy \cite{grant2019initialization} and the layerwise training method \cite{skolik2020layerwise}. We remark that these techniques rely on the assumption of low-depth quantum circuits. Specifically, Ref.~\cite{grant2019initialization} initialize parameters such that the 
initial quantum circuit is equivalent to an identity matrix ($L=0$). Ref.~\cite{skolik2020layerwise} train parameters in subsets in each layer, so that a low-depth circuit is optimized during the training of each subset of parameters. 
%Besides, quantum circuits with low-depth are robust against noises, which is crucially important for near-term quantum computers~\cite{preskill2018quantum}.

Since random quantum circuits tend to be approximately unitary $2$-design\footnote{We refer the readers to Appendix~\ref{tqnn_app_2_design} for a short discussion about the unitary $2$-design.} as the circuit depth increases \cite{harrow2009random}, and $2$-design circuits lead to exponentially vanishing gradients \cite{mcclean2018barren}, the natural idea is to consider circuits with special structures. On the other hand, %quantum circuits could be naturally recognized as one kinds of tensor network \cite{markov2008simulating}. Specifically, the
tensor networks with hierarchical structures have been shown an inherent relationship with classical neural networks \cite{Liu_2019, hayashi2019exploring}. Recently, quantum classifiers using hierarchical structure QNNs have been explored \cite{Grant_2018}, including the tree tensor network and the multi-scale entanglement renormalization ansatz. %Besides, QNNs with dissipative layers have shown the ability to avoid the barren plateaus \cite{beer2020training}. 
However, theoretical analysis of  the trainability of QNNs with certain layer structures has been little explored, except some very recent works on the QNNs with dissipative circuits\cite{sharma2020trainability} and quantum convolutional neural networks \cite{pesah2020absence}.
Moreover, the $2$-design assumption in the existing theoretical analysis \cite{mcclean2018barren, cerezo2020cost, sharma2020trainability} is hard to {implement exactly} on near-term quantum devices.

\section{Quantum Neural Networks}
\label{tqnn_ttn}

In this section, we discuss the quantum neural networks in detail. Specifically, the optimizing of QNNs is presented in Section~\ref{tqnn_opt_qnn}. We analyze QNNs with special structures in Section~\ref{tqnn_ttn_method}. We introduce an approximate quantum input model in Section~\ref{tqnn_ttn_input_model} which helps for deriving further theoretical bounds.

\subsection{The Optimizing of Quantum Neural Networks}
\label{tqnn_opt_qnn}
In this subsection, we introduce the gradient-based strategy for optimizing QNNs. %and state the barren plateaus problem related to the gradient of QNNs. 
Like the weight matrix in classical neural networks, the QNN involves a parameterized quantum circuit that mathematically equals to a parameterized unitary matrix $V(\bm{\theta})$.
The training of QNNs aims to optimize the function $f$ defined as: 
%the expectation of the quantum observable $O$ with respect to variables $\bm{\theta}$:
\begin{equation}\label{tqnn_intro_loss_function}
	f(\bm{\theta}; \rho_{\text{in}}) =\frac{1}{2} + \frac{1}{2} \Tr \left[ O \cdot V(\bm{\theta}) \cdot \rho_{\text{in}} \cdot V(\bm{\theta})^\dag  \right] = \frac{1}{2} + \frac{1}{2} \<O\>_{V(\bm{\theta}), \rho_{\text{in}}},
\end{equation}
where $O$ denotes the quantum observable and $\rho_{\text{in}}$ denotes the density matrix of the input quantum state. Generally, we could deploy the parameters $\bm{\theta}$ in a tunable quantum circuit arbitrarily. A practical tactic is to encode parameters $\{\theta_j\}$ as the phases of the single-qubit gates $\{e^{-i \theta_j \sigma_k}, k \in \{1,2,3\}\}$ while employing two-qubit gates $\{\CNOT,\ \CZ\}$ among them to generate quantum entanglement. This strategy has been frequently used in existing quantum circuit algorithms \cite{schuld2020circuit, benedetti2019parameterized, du2020learnability} since the model suits the noisy near-term quantum computers.
%, and the parameters could be restricted in the size polynomial to the qubit number for the quantum advantage. 
Under the single-qubit phase encoding case, the partial derivative of the function $f$ could be calculated using the parameter shifting rule \cite{crooks2019gradients},
\begin{equation}\label{tqnn_gradient_calculate}
\frac{\partial f}{\partial \theta_j} = \frac{1}{2}\<O\>_{V(\bm{\theta_+}), \rho_{\text{in}}} - \frac{1}{2} \<O\>_{V(\bm{\theta_-}), \rho_{\text{in}}} = f(\bm{\theta}_{+};\rho_{\text{in}}) - f(\bm{\theta}_{-};\rho_{\text{in}}),
\end{equation}
where $\bm{\theta}_+$ and $\bm{\theta}_-$ are different from $\bm{\theta}$ only at the $j$-th parameter: $\theta_j \rightarrow \theta_j \pm \frac{\pi}{4}$. Thus, the gradient of $f$ could be obtained by estimating quantum observables, which allows employing quantum computers for fast optimizations using stochastic gradient descents.

%Apart from the gradient calculation method, the suitable architecture of QNNs for certain tasks remains little explored. Deep $n$-qubit QNNs with random designed structure have been shown the barren plateaus phenomenon \cite{mcclean2018barren}: $\E_{\bm{\theta}} (\frac{\partial f}{\partial \theta_j})^2 \leq 2^{-\poly(n)}$. Existing works \cite{grant2019initialization, cerezo2020cost} provide some theoretical results to show that QNNs with low-depth $\O(\log n)$ could avoid the barren plateaus. However all of these bound results are derived relying on the global or local $2$-design assumptions \cite{harrow2009random} on gates in the quantum circuit, which contradict circuits with the single-qubit phase encoding. At the same time $2$-design assumptions (even to two local qubits) are hard to exactly achieve on near-term quantum computers.

\subsection{Quantum Neural Networks with Special Structures}
\label{tqnn_ttn_method}

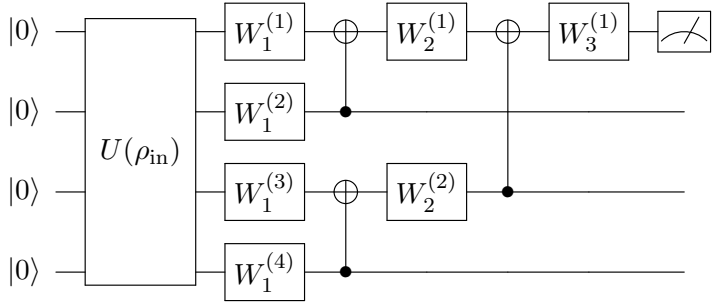
\begin{figure}[t]
\centerline{
\Qcircuit @C=1.0em @R=0.8em {
\lstick{|0\>}  & \multigate{3}{U(\rho_{\text{in}})}   & \gate{W_{1}^{(1)}} & \targ & \gate{W_{2}^{(1)}} & \targ  & \gate{W_{3}^{(1)}}  & \meter\\
\lstick{|0\>}  & \ghost{U(\rho_{\text{in}})} & \gate{W_{1}^{(2)}} & \ctrl{-1} & \qw & \qw & \qw & \qw & \\
\lstick{|0\>}  & \ghost{U(\rho_{\text{in}})} & \gate{W_{1}^{(3)}} & \targ & \gate{W_{2}^{(2)}} & \ctrl{-2} & \qw & \qw & \\
\lstick{|0\>}  & \ghost{U(\rho_{\text{in}})} & \gate{W_{1}^{(4)}} & \ctrl{-1} & \qw & \qw & \qw & \qw &
}\normalsize
}
\caption{Quantum Neural Network with the Tree Tensor structure ($n=4$).}
\label{tqnn_ttn_circuit}
\end{figure}

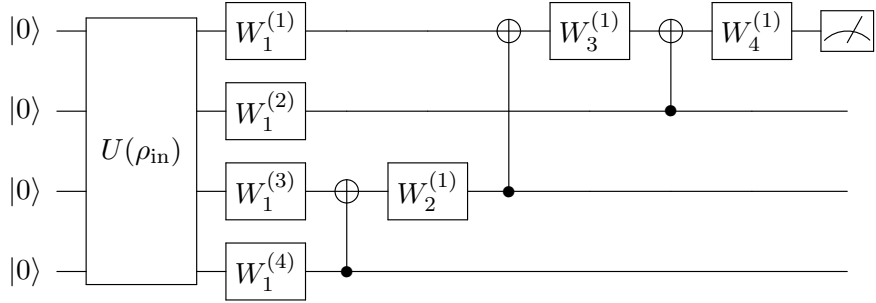
\begin{figure}[t]
\centerline{
\Qcircuit @C=1.0em @R=0.8em {
\lstick{|0\>} & \multigate{3}{U(\rho_{\text{in}})} & \gate{W_{1}^{(1)}}   &  \qw & \qw & \targ & \gate{W_{3}^{(1)}}  & \targ &  \gate{W_{4}^{(1)}}  & \meter\\
\lstick{|0\>} & \ghost{U(\rho_{\text{in}})} & \gate{W_{1}^{(2)}} &  \qw &  \qw & \qw & \qw & \ctrl{-1} & \qw & \qw &  \\
\lstick{|0\>} & \ghost{U(\rho_{\text{in}})} & \gate{W_{1}^{(3)}} & \targ & \gate{W_{2}^{(1)}} & \ctrl{-2} & \qw & \qw & \qw & \qw  &  \\
\lstick{|0\>} & \ghost{U(\rho_{\text{in}})} & \gate{W_{1}^{(4)}} & \ctrl{-1} & \qw & \qw & \qw & \qw & \qw & \qw & 
}\normalsize
}
\caption{Quantum Neural Network with the Step Controlled structure ($n=4$, $n_c = 2$).}
\label{tqnn_sc_circuit}
\end{figure}

In this subsection, we introduce quantum neural networks with tree tensor (TT) \cite{Grant_2018} and step controlled (SC) architectures. We prove that the expectation of the square of gradient $\ell_2$-norm for the TT-QNN and the SC-QNN are lower bounded by $\tilde{\Omega}(1/n)$ and $\tilde{\Omega}(2^{-n_c})$, respectively, where $n_c$ is a parameter in the SC-QNN that is independent from the qubit number $n$. Moreover, the corresponding theoretical analysis does not rely on $2$-design assumptions for quantum circuits.

Now we discuss proposed quantum neural networks in detail. We consider the $n$-qubit QNN constructed by the single-qubit gate $W_j^{(k)}=e^{-i\theta_{j}^{(k)} \sigma_2}$ and the CNOT gate $\sigma_1 \otimes |1\>\<1| + \sigma_0 \otimes |0\>\<0|$. We define the $k$-th parameter in the $j$-th layer as $\theta_j^{(k)}$. 
We only employ $R_Y$ rotations for single-qubit gates, due to the fact that the real world data lie in the real space, while applying $R_X$ and $R_Z$ rotations would introduce imaginary term to the quantum state.

We demonstrate the TT-QNN in Figure~\ref{tqnn_ttn_circuit} for the $n=4$ case, which employs CNOT gates in the binary tree form to achieve the quantum entanglement. The circuit of the SC-QNN could be divided into two parts: in the first part, CNOT operations are performed between adjacent qubit channels; in the second part, CNOT operations are performed between different qubit channels and the first qubit channel. An illustration of the SC-QNN is shown in Figure~\ref{tqnn_sc_circuit} for the $n=4$ and $n_c=2$ case, where $n_c$ denotes the number of CNOT operations that directly link to the first qubit channel.
The number of parameters in both the TT-QNN and the SC-QNN are $2n-1$. 
%We focus on the measurement to the first qubit, and 
We consider correspond objective functions that are defined as
\begin{align}
f_{\text{TT}}(\bm{\theta}) &= \frac{1}{2} + \frac{1}{2} \Tr[\sigma_3 \otimes I^{\otimes (n-1)} V_{\text{TT}}(\bm{\theta}) \rho_{\text{in}} V_{\text{TT}}(\bm{\theta})^{\dag}],	\label{tqnn_ttn_loss}\\
f_{\text{SC}}(\bm{\theta}) &= \frac{1}{2} + \frac{1}{2} \Tr[\sigma_3 \otimes I^{\otimes (n-1)} V_{\text{SC}}(\bm{\theta}) \rho_{\text{in}} V_{\text{SC}}(\bm{\theta})^{\dag}],	\label{tqnn_sc_loss}
\end{align}
where $V_{\text{TT}}(\bm{\theta})$ and $V_{\text{SC}}(\bm{\theta})$ denotes the parameterized quantum circuit operations for the TT-QNN and the SC-QNN, respectively. 
We employ the observable $\sigma_3 \otimes I^{\otimes (n-1)}$ in Eq.~(\ref{tqnn_ttn_loss}-\ref{tqnn_sc_loss}) such that objective functions could be easily estimated by measuring the first qubit in corresponding quantum circuits.

Main results of this section are stated in Theorem~\ref{tqnn_ttn_theorem}, in which we prove $\tilde{\Omega}(1/n)$ and $\tilde{{\Omega}}(2^{-n_c})$ lower bounds on the expectation of the square of the gradient norm for the TT-QNN and the SC-QNN, respectively. By setting $n_c =\O(\log n)$, we could obtain the linear inverse bound for the SC-QNN as well. We provide the %detail of our theoretical framework and the 
proof of Theorem~\ref{tqnn_ttn_theorem} in Appendix~\ref{tqnn_app_proof_main_theorem} and Appendix~\ref{tqnn_app_sc}.

\begin{theorem}\label{tqnn_ttn_theorem}
Consider the $n$-qubit TT-QNN and the $n$-qubit SC-QNN defined in Figure~\ref{tqnn_ttn_circuit}-\ref{tqnn_sc_circuit} and corresponding objective functions $f_{\text{TT}}$ and $f_{\text{SC}}$ defined in Eq.~(\ref{tqnn_ttn_loss}-\ref{tqnn_sc_loss}), then we have:
\begin{align}
\frac{1+\log n}{2n} \cdot \alpha(\rho_{\text{in}}) &\leq \E_{\bm{\theta}} \|\nabla_{\bm{\theta}} f_{\text{TT}}\|^2 \leq 2n-1,\label{tqnn_ttn_theorem_gradient_bound} \\
\frac{1+n_c}{2^{1+n_c}} \cdot \alpha(\rho_{\text{in}}) &\leq \E_{\bm{\theta}} \|\nabla_{\bm{\theta}} f_{\text{SC}}\|^2 \leq 2n-1, \label{tqnn_sc_theorem_gradient_bound}
\end{align}
where $n_c$ is the number of CNOT operations that directly link to the first qubit channel in the SC-QNN, the expectation is taken for all parameters in $\bm{\theta}$ with uniform distributions in $[0,2\pi]$, $\rho_{\text{in}}\in \C^{2^n \times 2^n}$ denotes the input state, $\alpha(\rho_{\text{in}}) =  \Tr \left[ \sigma_{(1,0,\cdots,0)} \cdot \rho_{\text{in}} \right] ^2 + \Tr \left[ \sigma_{(3,0,\cdots,0)} \cdot \rho_{\text{in}} \right] ^2 $, and 
$ \sigma_{(i_1,i_2,\cdots,i_{n})} \equiv \sigma_{i_1} \otimes \sigma_{i_2} \otimes \cdots \otimes \sigma_{i_{n}}$.
\end{theorem}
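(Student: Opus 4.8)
The plan is to read the two inequalities as matching upper and lower bounds and to treat the two architectures in parallel, since they differ only in how the measured (first) qubit is wired to the rest of the circuit. For the upper bounds I would first note that $\sigma_{(3,0,\cdots,0)}$ is a unitary observable, so its expectation lies in $[-1,1]$ and hence $f_{\text{TT}},f_{\text{SC}}\in[0,1]$ pointwise. By the parameter-shifting rule~(\ref{tqnn_gradient_calculate}) each partial derivative equals $f(\bm{\theta}_+)-f(\bm{\theta}_-)$, a difference of two numbers in $[0,1]$, so $(\partial f/\partial\theta_j^{(k)})^2\leq1$; summing over the $2n-1$ parameters gives $\|\nabla_{\bm{\theta}}f\|^2\leq 2n-1$ already before averaging, which yields the right-hand sides of~(\ref{tqnn_ttn_theorem_gradient_bound}) and~(\ref{tqnn_sc_theorem_gradient_bound}).

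For the lower bounds, since $\E_{\bm{\theta}}\|\nabla_{\bm{\theta}}f\|^2=\sum_{j,k}\E_{\bm{\theta}}(\partial f/\partial\theta_j^{(k)})^2$ is a sum of nonnegative terms, I would keep only the rotations sitting on qubit $1$ and discard the rest. A direct count shows there are exactly $1+\log n$ such gates in the TT-QNN (one below each of the $\log n$ tree levels, plus the root rotation) and $1+n_c$ in the SC-QNN, so the task reduces to proving that each qubit-$1$ rotation contributes at least $\alpha(\rho_{\text{in}})/(2n)$, respectively $\alpha(\rho_{\text{in}})/2^{1+n_c}$. Fixing one such rotation $W=e^{-i\theta\sigma_2^{(1)}}$ (with $\sigma_2^{(1)}$ denoting $\sigma_2$ on the first qubit) and writing the circuit as $V=R\,W\,L$, the fact that $\sigma_2^{(1)}$ commutes with $W$ gives the clean form $\partial f/\partial\theta=-\tfrac{i}{2}\Tr\bigl[[O_R,\sigma_2^{(1)}]\,\rho_L\bigr]$, where $O_R=(RW)^{\dag}O(RW)$ is the observable pulled back through the later gates and $\rho_L=L\rho_{\text{in}}L^{\dag}$ is the forward-propagated state. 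The commutator annihilates every Pauli string carrying $\sigma_0$ or $\sigma_2$ on qubit $1$, so only its $\sigma_1$- and $\sigma_3$-on-qubit-$1$ content survives. I would then propagate $O=\sigma_{(3,0,\cdots,0)}$ backwards using $e^{i\theta\sigma_2}\sigma_3e^{-i\theta\sigma_2}=\cos(2\theta)\sigma_3+\sin(2\theta)\sigma_1$ (with its $\sigma_1$ analogue) and the CNOT rule that sends a target-$Z$ to $Z_{\text{control}}Z_{\text{target}}$ but leaves a target-$X$ unchanged. The structural crux is that, qubit $1$ always being the target, a $\sigma_1$ on qubit $1$ passes through every CNOT unchanged while a $\sigma_3$ spreads to a two-qubit string; hence there is a \emph{unique} lineage supported on qubit $1$ alone throughout, accumulating one trigonometric factor from each of the $1+\log n$ (resp.\ $1+n_c$) qubit-$1$ rotations.

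Squaring and integrating every angle over $[0,2\pi]$ with $\E[\cos^2 2\theta]=\E[\sin^2 2\theta]=\tfrac12$ and $\E[\cos2\theta]=\E[\sin2\theta]=\E[\sin2\theta\cos2\theta]=0$, the single-qubit lineage contributes $2^{-(1+\log n)}\bigl(\Tr[\sigma_{(1,0,\cdots,0)}\rho_{\text{in}}]^2+\Tr[\sigma_{(3,0,\cdots,0)}\rho_{\text{in}}]^2\bigr)=\alpha(\rho_{\text{in}})/(2n)$, and likewise $\alpha(\rho_{\text{in}})/2^{1+n_c}$ for SC. Every remaining term is a square of a multi-qubit correlator, hence nonnegative, while each cross term between the lineage and a multi-qubit term vanishes, because the two differ by a sine-versus-cosine choice at the qubit-$1$ rotation just before the CNOT at which the multi-qubit term first spreads, and $\E[\sin2\theta\cos2\theta]=0$. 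Summing the per-gate bound over all $1+\log n$ (resp.\ $1+n_c$) qubit-$1$ rotations gives the left-hand inequalities; the SC case is identical once one observes that the first block of CNOTs acts only among the other qubits and never touches the qubit-$1$ lineage, so only the $n_c$ CNOTs into qubit $1$ enter the count.

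The part I expect to be the main obstacle is the Pauli-propagation bookkeeping underlying the averaging step: one must verify that no multi-qubit string can ever collapse back onto a single-qubit operator on qubit $1$ under the $R_Y$/CNOT conjugations (so the surviving lineage is genuinely unique and no hidden contribution is lost), that every cross term really does carry an orthogonal $\sin/\cos$ pair of a common angle, and that the lineage always threads through precisely all $1+\log n$ (resp.\ $1+n_c$) qubit-$1$ rotations regardless of which one is differentiated, so that each gate yields the same clean factor and the sum telescopes to the stated bounds.
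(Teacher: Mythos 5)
Your proposal reaches the correct bounds and its overall skeleton matches the paper's (identical upper bound via the parameter-shift rule and $(\partial f/\partial\theta_j^{(k)})^2\le 1$ summed over $2n-1$ parameters; identical first reduction, namely discarding all but the $1+\log n$, resp.\ $1+n_c$, parameters on the first qubit wire), but from that point on you take a genuinely different route. The paper factors the lower bound through two lemmas: Lemma~\ref{ttn_lemma_theta_0} (resp.\ Lemma~\ref{tqnn_app_sc_lemma_theta_0}) proves the exact identity $\E_{\bm{\theta}}\bigl(\partial f/\partial\theta_j^{(1)}\bigr)^2=4\,\E_{\bm{\theta}}\bigl(f-\tfrac{1}{2}\bigr)^2$ for every first-wire parameter --- the mechanism being that the one-angle moment formulas for a rotation gate (Lemma~\ref{ttn_lemma_wawbwcwd}) and for its derivative gate (Lemma~\ref{ttn_lemma_gawbgcwd}) coincide whenever the observable carries $\sigma_1$ or $\sigma_3$ on the first qubit --- and then Lemma~\ref{tqnn_ttn_lemma_Ef2} (resp.\ Lemma~\ref{tqnn_app_sc_lemma_Ef2}) bounds the derivative-free quantity $\E(f-\tfrac{1}{2})^2$ below by $\alpha(\rho_{\text{in}})/(8n)$ (resp.\ $\alpha(\rho_{\text{in}})/2^{3+n_c}$) by integrating one layer at a time. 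You instead expand each squared derivative directly as a Pauli-path (``lineage'') sum and invoke orthogonality of trigonometric monomials under the uniform average; your per-gate bound $\alpha(\rho_{\text{in}})/(2n)=4\cdot\alpha(\rho_{\text{in}})/(8n)$ agrees exactly with what the paper's two lemmas combine to give. What the paper's organization buys is that the bookkeeping you flag as your main obstacle never arises: since Lemma~\ref{ttn_lemma_wawbwcwd} computes the expectation of a product of traces \emph{exactly}, each one-angle integration turns one square into a sum of two squares, so cross terms are never created and nothing needs to be said about multi-qubit strings collapsing back; the derivative is disposed of once and for all by the identity, rather than being threaded through the path expansion for each differentiated gate separately. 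What your route buys is a more transparent physical picture (a single surviving single-qubit lineage accumulating one factor $1/2$ per first-wire rotation), which generalizes readily to other architectures where the measured qubit is always a CNOT target.

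For completeness, the structural facts you left unverified are indeed true for these two circuits, so your argument can be closed. The no-collapse claim follows from a ``fresh control'' observation: propagating the observable backward from the measured qubit, every CNOT in the TT-QNN and the SC-QNN is encountered while its control wire still carries the identity (this is forced by the tree structure, and by the step structure in which the adjacent-pair CNOTs precede the CNOTs into qubit one), and with identity on the control the conjugation rules of Lemma~\ref{ttn_lemma_CNOT} can only preserve or enlarge the support of a Pauli string. The cross-term claim follows because branching occurs only at rotations, where the two children carry $\cos(2\theta)$ versus $\pm\sin(2\theta)$ of the same angle; any two distinct paths diverge at some such gate, each angle occurs once along each path, and differentiation swaps $\cos\leftrightarrow\mp\sin$ without breaking this orthogonality, so every cross expectation factorizes through $\E[\sin(2\theta)\cos(2\theta)]=0$.
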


\begin{algorithm}[t]
\caption{Quantum Input Model}
\label{tqnn_input_model_algorithm}
\begin{algorithmic}[1]
\REQUIRE The input vector $\bm{x}_{\text{in}} \in \R^{2^n}$, the number of alternating layers $L$, %in the encoding circuit in Figure~\ref{tqnn_input_model_circuit}, 
the iteration time $T$, and the learning rate $\{\eta(t)\}_{t=0}^{T-1}$. 
\ENSURE A parameter vector $\bm{\beta}^{*}$ which tunes the approximate encoding circuit $U(\bm{\beta}^{*})$. %An approximate encoding circuit denoted as $U(\rho_{\text{in}})$.
\STATE Initialize $\{\beta_{j}^{(k)}\}_{j,k=1}^{n,2L+1}$ randomly in $[0,2\pi]$. Denote the parameter vector as~$\bm{\beta}^{(0)}$.
\FOR {$t \in \{0,1,\cdots,T-1\}$}
\STATE Run the circuit in Figure~\ref{tqnn_input_model_circuit_train}  classically to calculate the gradient $\nabla_{\bm{\beta}} f_{\text{input}}|_{\bm{\beta} = \bm{\beta}^{(t)}}$ using the parameter shifting rule (\ref{tqnn_gradient_calculate}), where the function $f_{\text{input}}$ is defined in (\ref{tqnn_input_model_loss}).
\STATE Update the parameter $\bm{\beta}^{(t+1)} = \bm{\beta}^{(t)} - \eta(t) \cdot \nabla_{\bm{\beta}} f_{\text{input}}|_{\bm{\beta} = \bm{\beta}^{(t)}}$.
\ENDFOR
\STATE Output the trained parameter $\bm{\beta}^{*}$.
%trained circuit as $W(\bm{x}_{\text{in}})$. Output the encoding circuit $U(\rho_{\text{in}}) = W(\bm{x}_{\text{in}})^{\dag} \cdot X^{\otimes n}$.
  \end{algorithmic}
\end{algorithm}

From the geographic view, the value $\E_{\bm{\theta}} \|\nabla_{\bm{\theta}} f\|^2$ characterizes the global steepness of the function surface in the parameter space. 
Optimizing the objective function $f$ using gradient-based methods could be hard if the norm of the gradient vanishes to zero. 
Thus, lower bounds in Eq.~(\ref{tqnn_ttn_theorem_gradient_bound}-\ref{tqnn_sc_theorem_gradient_bound}) provide a theoretical guarantee on optimizing corresponding functions, which then ensures the trainability of QNNs on related machine learning tasks.

From the technical view, we provide a new theoretical framework during proving Eq.~(\ref{tqnn_ttn_theorem_gradient_bound}-\ref{tqnn_sc_theorem_gradient_bound}). Different from existing works \cite{mcclean2018barren, grant2019initialization, cerezo2020cost} that define the expectation as the average of the finite unitary $2$-design group, we consider the uniform distribution in which each parameter in $\bm{\theta}$ varies continuously in $[0,2\pi]$. Our assumption suits the quantum circuits that encode the parameters in the phase of single-qubit rotations. Moreover, the result in Eq.~(\ref{tqnn_sc_theorem_gradient_bound}) gives the first proven guarantee on the trainability of QNN with linear depth. Our framework could be extensively employed for analyzing QNNs with other different structures as future works.

\subsection{Prepare the Quantum Input Model: a Variational Circuit Approach}
\label{tqnn_ttn_input_model}

State preparation is an essential part of most quantum algorithms, which encodes the classical information into quantum states. Specifically, the amplitude encoding $|\bm{x}\> = \sum_{i=1}^{2^n} x_i/\|\bm{x}\| |i\>$ allows storing the $2^n$-dimensional vector in $n$ qubits. Due to the dense encoding nature and the similarity to the original vector, the amplitude encoding is preferred as the state preparation by many QML algorithms \cite{harrow2009quantum, rebentrost2014quantum, kerenidis2019quantum}. Despite the wide application in quantum algorithms, efficient amplitude encoding remains little explored. Existing work \cite{park2019circuit} could prepare the amplitude encoding state in time $\O(2^n)$ using a quantum circuit with $\O(2^n)$ depth, which is 
prohibitive for large-size data on near-term quantum computers. In fact, arbitrary quantum amplitude encoding with polynomial gate complexity remains an open problem. 

In this subsection, we introduce a quantum input model for approximately encoding the arbitrary vector $\bm{x}_{\text{in}} \in \R^{2^n}$ in the amplitude of the quantum state $|\bm{x}_{\text{in}}\>$. The main idea is to classically train an alternating layered circuit as summarized in Algorithm~\ref{tqnn_input_model_algorithm} and Figures~\ref{tqnn_input_model_circuit_train}-\ref{tqnn_input_model_circuit_encode}. 
Now we explain the detail of the input model. 
%We employ the alternating layered quantum circuit illustrated in Figure~\ref{tqnn_input_model_circuit_train}. 
Firstly, we randomly initialize the parameter $\bm{\beta}^{(0)}$ in the circuit~\ref{tqnn_input_model_circuit_train}. Then, we train the parameter to minimize the objective function defined in (\ref{tqnn_input_model_loss}) through the gradient descent, 
\begin{equation}\label{tqnn_input_model_loss}
f_{\text{input}} (\bm{\beta}) = \frac{1}{n} \sum_{i=1}^{n} \< O_i\>_{W(\bm{\beta})|\bm{x}_{\text{in}}\>} = \frac{1}{n} \sum_{i=1}^{n} \frac{1}{\|\bm{x}_{\text{in}}\|^2} \Tr [ O_i \cdot W(\bm{\beta}) \cdot \bm{x}_{\text{in}} \bm{x}_{\text{in}}^{T} \cdot W(\bm{\beta})^{\dag}  ],
\end{equation}
where $O_i = \sigma_0^{\otimes (i-1)} \otimes \sigma_3 \otimes \sigma_0^{\otimes (n-i)}, \forall i \in [n] $, and $W(\bm{\beta})$ denotes the tunable alternating layered circuit in Figure~\ref{tqnn_input_model_circuit_train}. 
Note that although the framework is given in the quantum circuit language% for the convenience
, we actually calculate and update the gradient on classical computers by considering each quantum gate operation as the matrix multiplication. The output of Algorithm~\ref{tqnn_input_model_algorithm} is the trained parameter vector $\bm{\beta}^{*}$ which tunes the unitary $W(\bm{\beta}^{*})$. The corresponding encoding circuit could then be implemented as $U(\bm{\beta}^{*})=W(\bm{\beta}^{*})^{\dag} \cdot X^{\otimes n}$, which is low-depth and appropriate for near-term quantum computers. Structures of circuits $W$ and $U$ are illustrated in Figure~\ref{tqnn_input_model_circuit_train} and \ref{tqnn_input_model_circuit_encode}, respectively.

\begin{figure}[t]
\centerline{
\Qcircuit @C=0.7em @R=0.5em {
\lstick{} & \qw & \qw  & \gate{W_{1}^{(1)}} & \qw \barrier[-0.6em]{7} & \ctrl{1} & \gate{W_{2}^{(1)}} & \qw & \ctrl{7}  & \gate{W_{3}^{(1)}} & \qw \barrier[-0.4em]{7} & \qw  & \meter\\
\lstick{} & \qw &  \qw & \gate{W_{1}^{(2)}} & \qw & \ctrl{-1} & \gate{W_{2}^{(2)}} & \ctrl{1} & \qw & \gate{W_{3}^{(2)}} & \qw & \qw & \meter \\
\lstick{} & \qw &  \qw & \gate{W_{1}^{(3)}} & \qw & \ctrl{1} & \gate{W_{2}^{(3)}} & \ctrl{-1} & \qw & \gate{W_{3}^{(3)}} & \qw & \qw & \meter \\
\lstick{} & \qw &  \qw & \gate{W_{1}^{(4)}} & \qw & \ctrl{-1} & \gate{W_{2}^{(4)}} & \ctrl{1} & \qw & \gate{W_{3}^{(4)}} & \qw & \qw & \meter \\
\lstick{} & \qw & \qw & \gate{W_{1}^{(5)}} & \qw & \ctrl{1} & \gate{W_{2}^{(5)}} & \ctrl{-1} & \qw & \gate{W_{3}^{(5)}} & \qw & \qw & \meter \\
\lstick{} & \qw & \qw & \gate{W_{1}^{(6)}} & \qw & \ctrl{-1} & \gate{W_{2}^{(6)}} & \ctrl{1} & \qw & \gate{W_{3}^{(6)}} & \qw & \qw & \meter \\
\lstick{} & \qw & \qw & \gate{W_{1}^{(7)}} & \qw & \ctrl{1} & \gate{W_{2}^{(7)}} & \ctrl{-1} & \qw & \gate{W_{3}^{(7)}} & \qw & \qw & \meter \\
\lstick{} & \qw &  \qw & \gate{W_{1}^{(8)}} & \qw & \ctrl{-1} & \gate{W_{2}^{(8)}} & \qw & \ctrl{-7} & \gate{W_{3}^{(8)}} & \qw & \qw & \meter \\
\lstick{} & & & & & & & & & & & & \\
\lstick{} & & & & & & & & \small \textit{Alternating Layer} & & & & \protect\inputgroup{1}{8}{8.8em}{ \bm{x}_{\text{in}}\rightarrow } %\gategroup{1}{7}{8}{11}{.5em}{_\}} 
}
}
\caption{The parameterized alternating layered circuit $W(\bm{\beta})$ ($n=8$, $L=1$) for training the corresponding encoding circuit of the input $\bm{x}_{\text{in}}$.}\label{tqnn_input_model_circuit_train}
\end{figure}
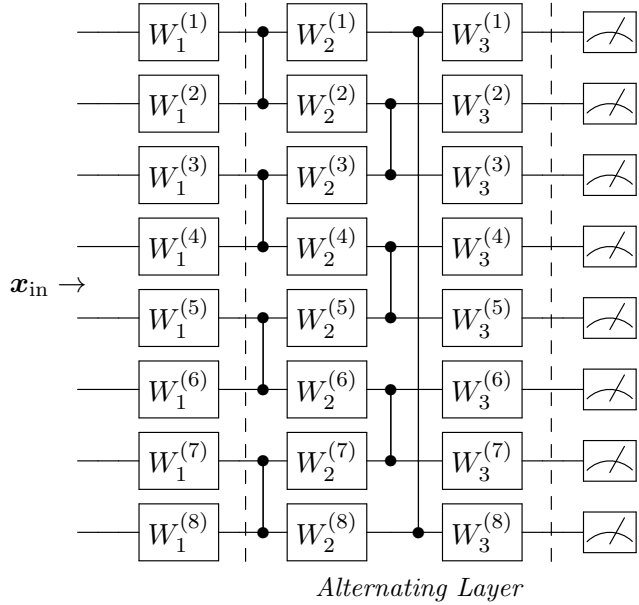

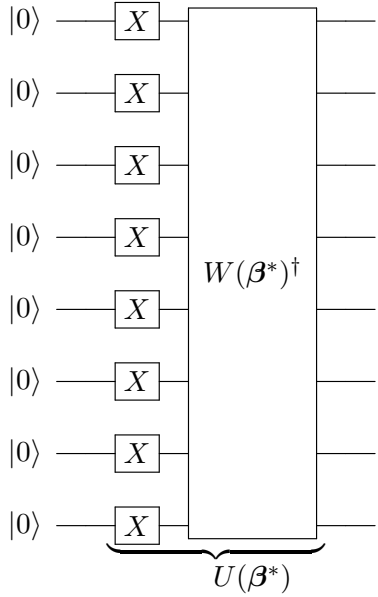
\begin{figure}[t]
\centerline{
\Qcircuit @C=1em @R=1.2em {
\lstick{|0\>} & \qw & \gate{X} & \multigate{7}{W(\bm{\beta}^{*})^{\dag}} & \qw & \qw \\
\lstick{|0\>} & \qw & \gate{X} & \ghost{W(\bm{\beta}^{*})^{\dag}} & \qw & \qw \\
\lstick{|0\>} & \qw & \gate{X} & \ghost{W(\bm{\beta}^{*})^{\dag}} & \qw & \qw \\
\lstick{|0\>} & \qw & \gate{X} & \ghost{W(\bm{\beta}^{*})^{\dag}} & \qw & \qw \\
\lstick{|0\>} & \qw & \gate{X} & \ghost{W(\bm{\beta}^{*})^{\dag}} & \qw & \qw \\
\lstick{|0\>} & \qw & \gate{X} & \ghost{W(\bm{\beta}^{*})^{\dag}} & \qw & \qw \\
\lstick{|0\>} & \qw & \gate{X} & \ghost{W(\bm{\beta}^{*})^{\dag}} & \qw & \qw \\
\lstick{|0\>} & \qw & \gate{X} & \ghost{W(\bm{\beta}^{*})^{\dag}} & \qw & \qw \\
\lstick{}  & & & U(\bm{\beta}^{*}) & & {\gategroup{1}{3}{8}{4}{.5em}{_\}}}
}
}
\caption{The encoding circuit $U(\bm{\beta}^{*})$, where $W(\bm{\beta}^{*})$ is the trained parameterized circuit in Figure~\ref{tqnn_input_model_circuit_train}.}\label{tqnn_input_model_circuit_encode}
\end{figure}

Suppose we could minimize the objective function (\ref{tqnn_input_model_loss}) to $-1$. Then $\<O_i\>=-1, \forall i \in [n]$, which means the final output state in Figure~\ref{tqnn_input_model_circuit_train} equals to
$W(\bm{\beta}^{*})|\bm{x}_{\text{in}}\> = |1\>^{\otimes n}$.\footnote{We ignore the global phase on the quantum state. Based on the assumptions of this paper, all quantum states that encode input data lie in the real space, which then limit the global phase as 1 or -1. For the former case, nothing needs to be done; for the letter case, the global phase could be introduced by adding a single qubit gate $e^{-i\pi \sigma_0}=-\sigma_0 = -I$ on anyone among qubit channels.} 
Thus the state $|\bm{x}_{\text{in}}\>$ could be prepared exactly by applying the circuit $U(\bm{\beta}^{*}) = W(\bm{\beta}^{*})^{\dag} X^{\otimes n}$ on the state  $|0\>^{\otimes n}$. However we could not always optimize the loss in Eq.~(\ref{tqnn_input_model_loss}) to $-1$, which means the framework could only prepare the amplitude encoding state approximately.

An interesting result is that by employing the encoding circuit in Figure~\ref{tqnn_input_model_circuit_encode} for constructing the input state in Section~\ref{tqnn_ttn_method} as $\rho_{\text{in}} = U(\bm{\beta}) (|0\>\<0|)^{\otimes n}U(\bm{\beta})^{\dag}$, we could bound the expectation of $\alpha(\rho_{\text{in}})$ defined in Theorem~\ref{tqnn_ttn_theorem} by a constant that only relies on the layer $L$ (Theorem~\ref{tqnn_input_model_bound}).

\begin{theorem}\label{tqnn_input_model_bound}
Suppose the state $\rho_{\text{in}}$ is prepared by the $L$-layer encoding circuit in Figure~\ref{tqnn_input_model_circuit_encode}, then we have, 
\begin{equation*}
    \E_{\bm{\beta}} \alpha(\rho_{\text{in}})  \geq 2^{-2L},
\end{equation*}
where $\bm{\beta}$ denotes all variational parameters in the encoding circuit, and the expectation is taken for all parameters in $\bm{\beta}$ with uniform distributions in $[0,2\pi]$.
\end{theorem}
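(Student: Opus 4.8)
The plan is to reduce $\alpha(\rho_{\text{in}})$ to two squared Heisenberg-picture expectation values and then lower bound each by tracking how Pauli weight propagates, on average, through the layers of $W(\bm{\beta})$. Writing $|1^n\rangle:=|1\rangle^{\otimes n}$, $P_Z:=\sigma_{(3,0,\dots,0)}$ (that is, $Z$ on the first qubit, which I abbreviate $Z_1$) and $P_X:=\sigma_{(1,0,\dots,0)}$ (i.e.\ $X_1$), I would first use $U(\bm{\beta})=W(\bm{\beta})^\dag X^{\otimes n}$, $\rho_{\text{in}}=U(\bm{\beta})(|0\rangle\langle0|)^{\otimes n}U(\bm{\beta})^\dag$, and $X^{\otimes n}|0\rangle^{\otimes n}=|1^n\rangle$ to get
\[
\Tr[P_Z\rho_{\text{in}}]=\langle1^n|W(\bm{\beta})P_ZW(\bm{\beta})^\dag|1^n\rangle,\qquad \Tr[P_X\rho_{\text{in}}]=\langle1^n|W(\bm{\beta})P_XW(\bm{\beta})^\dag|1^n\rangle .
\]
Since $\alpha=\Tr[P_X\rho_{\text{in}}]^2+\Tr[P_Z\rho_{\text{in}}]^2\ge0$, it suffices to lower bound the two expectations $\E_{\bm{\beta}}\langle1^n|WP_ZW^\dag|1^n\rangle^2$ and $\E_{\bm{\beta}}\langle1^n|WP_XW^\dag|1^n\rangle^2$ separately and add them.

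The core is a second-moment (two-copy) computation. For $P\in\{P_X,P_Z\}$ set $G(\bm{\beta})=\langle1^n|WPW^\dag|1^n\rangle$; then $\E_{\bm{\beta}}G^2=(\langle1^n|\otimes\langle1^n|)\,\Phi(P\otimes P)\,(|1^n\rangle\otimes|1^n\rangle)$, where $\Phi$ is the $\bm{\beta}$-average of the two-copy conjugation $M\mapsto(W\otimes W)M(W\otimes W)^\dag$ and the averages over the independent angles factor over the gates. The key lemma I would establish is that $\Phi$ keeps the evolving operator inside the span of the matched Pauli pairs $\{\sigma_{\vec k}\otimes\sigma_{\vec k}\}$ and transports their coefficients as a nonnegative, weight-preserving (stochastic) process: averaging one $R_Y(\theta)=e^{-i\theta\sigma_2}$ over $[0,2\pi]$ sends $Z\otimes Z\mapsto\tfrac12 Z\otimes Z+\tfrac12 X\otimes X$ and $X\otimes X\mapsto\tfrac12 X\otimes X+\tfrac12 Z\otimes Z$ while fixing $I\otimes I$ and $Y\otimes Y$, and each $\CZ$ gate, being Clifford and applied identically to both copies, permutes matched pairs with a squared (hence positive) sign. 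Because $\langle1^n|\sigma_{\vec k}|1^n\rangle^2$ equals $1$ when $\vec k$ is diagonal (every index in $\{0,3\}$) and $0$ otherwise, $\E_{\bm{\beta}}G^2$ then equals exactly the total weight this process leaves on diagonal, $I/Z$-only, strings.

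It then remains to lower bound the surviving diagonal weight by exhibiting one path for each starting Pauli. Since $Z_1$ commutes with every $\CZ$ gate, the string $Z_1$ is a fixed point of all entangling layers, and only the $2L+1$ single-qubit layers acting on qubit $1$ can move it, each retaining weight $\tfrac12$ on $Z_1$; this yields $\E_{\bm{\beta}}\langle1^n|WP_ZW^\dag|1^n\rangle^2\ge2^{-(2L+1)}$. For $P_X$, the path that converts $X_1\mapsto Z_1$ at the innermost single-qubit layer (weight $\tfrac12$) and then keeps $Z_1$ through the remaining $2L$ single-qubit layers and all $\CZ$ layers gives the same $\E_{\bm{\beta}}\langle1^n|WP_XW^\dag|1^n\rangle^2\ge2^{-(2L+1)}$. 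Adding the two bounds gives $\E_{\bm{\beta}}\alpha(\rho_{\text{in}})\ge2\cdot2^{-(2L+1)}=2^{-2L}$, as claimed.

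The main obstacle I anticipate is making the Pauli-weight propagation fully rigorous: proving that $\Phi$ is genuinely closed on matched pairs with nonnegative, conserved weights --- which is what licenses discarding all non-exhibited paths while preserving a valid lower bound --- and reconciling this with the precise layer and light-cone structure of $W(\bm{\beta})$ in Figure~\ref{tqnn_input_model_circuit_train}, in particular that qubit $1$ meets exactly $2L+1$ single-qubit gates and that the chosen $Z_1$ path never leaves the diagonal sector. Once the framework is in place, the underlying integrals ($\E[\cos^2 2\theta]=\E[\sin^2 2\theta]=\tfrac12$, $\E[\cos2\theta\sin2\theta]=0$) and the path bookkeeping are routine.
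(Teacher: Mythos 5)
Your proposal is correct and is essentially the paper's own argument: the paper's Lemma~\ref{ttn_lemma_wawbwcwd} is precisely your two-copy $R_Y$ average (the $\tfrac12$--$\tfrac12$ mixing of the $X_1$ and $Z_1$ weights), its Lemma~\ref{ttn_lemma_CZ} is your Clifford step ($Z_1$ fixed by every $\CZ$, $X_1\mapsto X_1 Z_k$ with the unmatched branch discarded by nonnegativity), and its layer-by-layer integration in Appendix~\ref{tqnn_app_input_model} is exactly your path bookkeeping starting from $\Tr[\sigma_{(3,0,\dots,0)}\rho_0]^2=1$. The only cosmetic difference is the accounting: the paper tracks the sum $\Tr[\sigma_{(1,0,\dots,0)}\cdot\rho]^2+\Tr[\sigma_{(3,0,\dots,0)}\cdot\rho]^2$ with a factor $\tfrac14$ per alternating layer, whereas you bound the two terms separately by single paths of weight $2^{-(2L+1)}$ each; both yield the identical constant $2^{-2L}$.
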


We provide the proof of Theorem~\ref{tqnn_input_model_bound} and more details about the input model in Appendix~\ref{tqnn_app_input_model}. Theorem~\ref{tqnn_input_model_bound} can be employed with  Theorem~\ref{tqnn_ttn_theorem} to derive Theorem~\ref{tqnn_ttn_theorem_informal}, in which the lower bound for the expectation of the gradient norm is independent from the input state.
%We remark that the proposed input framework could be adopted as the replacement to the amplitude encoding model for existing quantum machine learning algorithms. The proposed low-depth encoding circuit extremely suits the near-term quantum computer, since it is robust against the noise and only requires mild conditions on the connectivity of the quantum computer architecture. We hope this variational encoding approach could inspire future works about the efficient amplitude encoding.

\section{Application: QNNs for the Binary Classification}
\label{tqnn_experiments}

\subsection{QNNs: training and prediction}
\label{tqnn_binary_class_algorithm}
In this section, we show how to train QNNs for the binary classification in quantum computers. 
First of all, for the training and test data denoted as $\{(\bm{x}_i^{\text{train}}, y_i)\}_{i=1}^{S}$ and $\{(\bm{x}_j^{\text{test}}, y_j)\}_{j=1}^{Q}$, where $y_i \in \{0,1\}$ denotes the label, we prepare corresponding quantum input states $\{\rho_i^{\text{train}}\}_{i=1}^{S}$ and $\{\rho_j^{\text{test}}\}_{j=1}^{Q}$ using the encoding circuit presented in Section~\ref{tqnn_ttn_input_model}. 
Then, we employ Algorithm~\ref{tqnn_qnn_classification_algorithm} to train the parameter $\bm{\theta}$ and the bia $b$ via the stochastic gradient descent method for the given parameterized circuit $V$ and the quantum observable $O$. The parameter updating in each iteration is presented in the Step 3-4, which aims to minimize the loss defined in (\ref{tqnn_single_loss_function}) for each input batch $I_t, \forall t \in [T]$:
\begin{equation}\label{tqnn_single_loss_function}
\ell_{I_t} (\bm{\theta}) = \frac{1}{|I_t|} \sum_{i=1}^{|I_t|} {\left( f(\bm{\theta};\rho_{i}^{\text{train}}) - y_i + b \right)}^2.
\end{equation}
Based on the chain rule and the Eq.~(\ref{tqnn_gradient_calculate}), derivatives $\frac{\partial \ell_{I_t}}{\partial \theta_j}$ and $\frac{\partial \ell_{I_t}}{\partial b}$ could be decomposed into products of objective functions $f$ with different variables, which could be estimated efficiently by counting quantum outputs. 
In practice, we calculate the value of the objective function by measuring the output state of the QNN for several times and averaging quantum outputs. 
After the training iteration we obtain the trained parameters $\bm{\theta}^{*}$ and $b^{*}$. Denote the quantum circuit $V^{*}=V(\bm{\theta}^{*})$. We do test for an input state $\rho^{\text{test}}$ by calculating the objective function $f(\rho^{\text{test}}) = \frac{1}{2} + \frac{1}{2}\Tr [O\cdot V^{*} \ \rho^{\text{test}} \ V^{*})^{\dag} ]$. We classify the input $\rho^{\text{test}}$ as in the class $0$ if $f(\rho^{\text{test}})+b^{*}<\frac{1}{2}$, or the class $1$ if $f(\rho^{\text{test}})+b^{*}>\frac{1}{2}$. 

\begin{algorithm}[t]
\caption{QNNs for the Binary Classification Training}
\label{tqnn_qnn_classification_algorithm}
\begin{algorithmic}[1]
\REQUIRE Quantum input states $\{\rho_{i}^{\text{train}}\}_{i=1}^{S}$ for the dataset $\{(\bm{x}_i^{\text{train}}, y_i)\}_{i=1}^{S}$, the quantum observable $O$, the parameterized quantum circuit $V(\bm{\theta})$, the iteration time $T$, the batch size $s$, and learning rates $\{\eta_{\theta}(t)\}_{t=0}^{T-1}$ and $\{\eta_{b}(t)\}_{t=0}^{T-1}$.
\ENSURE The trained parameters $\bm{\theta}^{*}$ and $b^*$.
\STATE Initialize each parameter in $\bm{\theta}^{(0)}$ randomly in $[0,2\pi]$ and initialize $b^{(0)}=0$.
\FOR {$t \in \{0,1,\cdots,T-1\}$}
\STATE Randomly sample an index subset $I_t\subset[S]$ with size $s$. Calculate the gradient $\nabla_{\bm{\theta}} \ell_{I_t}(\bm{\theta},b)|_{\bm{\theta},b = \bm{\theta}^{(t)}, b^{(t)}}$ and $\nabla_{b} \ell_{I_t}(\bm{\theta},b)|_{\bm{\theta},b = \bm{\theta}^{(t)}, b^{(t)}}$ using the chain rule and the parameter shifting rule (\ref{tqnn_gradient_calculate}), where the function $\ell_{I_t} (\bm{\theta}, b)$ is defined in (\ref{tqnn_single_loss_function}).
\STATE Update parameters $\bm{\theta}^{(t+1)} / b^{(t+1)} = \bm{\theta}^{(t)} / b^{(t)} - \eta_{\theta/b}(t) \cdot \nabla_{\bm{\theta}/b} \ell_{I_t} (\bm{\theta},b) |_{\bm{\theta}, b = \bm{\theta}^{(t)}, b^{(t)}}$.
\ENDFOR
\STATE Output trained parameters $\bm{\theta}^{*}=\bm{\theta}^{(T)}$ and $b^{*}=b^{(T)}$.
\end{algorithmic}
\end{algorithm}

The time complexity of the QNN training and test could be easily derived by counting resources for estimating all quantum observables. 
Denote the number of gates and parameters in the quantum circuit $V(\bm{\theta})$ as $n_{\text{gate}}$ and $\ n_{\text{para}}$, respectively.  %\zkn{which encodes each parameter in the phase of the single-qubit rotation.}
Denote the number of measurements for estimating each quantum observable as $n_\text{train}$ and $n_\text{test}$ for the training and test stages, respectively. 
Then, the time complexity to train QNNs is $\O(n_{\text{gate}} n_{\text{para}} n_{\text{train}} T)$, and the time complexity of the test using QNNs is $\O(n_{\text{gate}}  n_{\text{test}} )$.
We emphasize that directly comparing the time complexity of QNNs with classical NNs is unfair due to different parameter strategies. However, the complexity of QNNs indeed shows a polylogarithmic dependence on the dimension of the input data if the number of gates and parameters are polynomial to the number of qubits. 
Specifically, both the TT-QNN and the SC-QNN equipped with the $L$-layer encoding circuit in this work have $\O(nL)$ gates and $\O(n)$ parameters, which lead to the training complexity ${\O}(n_{\text{train}} n^2 LT)$ and the test complexity $\O(n_{\text{test}} n L)$.

\subsection{Numerical simulations}
\label{tqnn_simulations}

\begin{figure}[t]
\centering
\subfigure[]{
\centering
\includegraphics[width=0.22\textwidth]{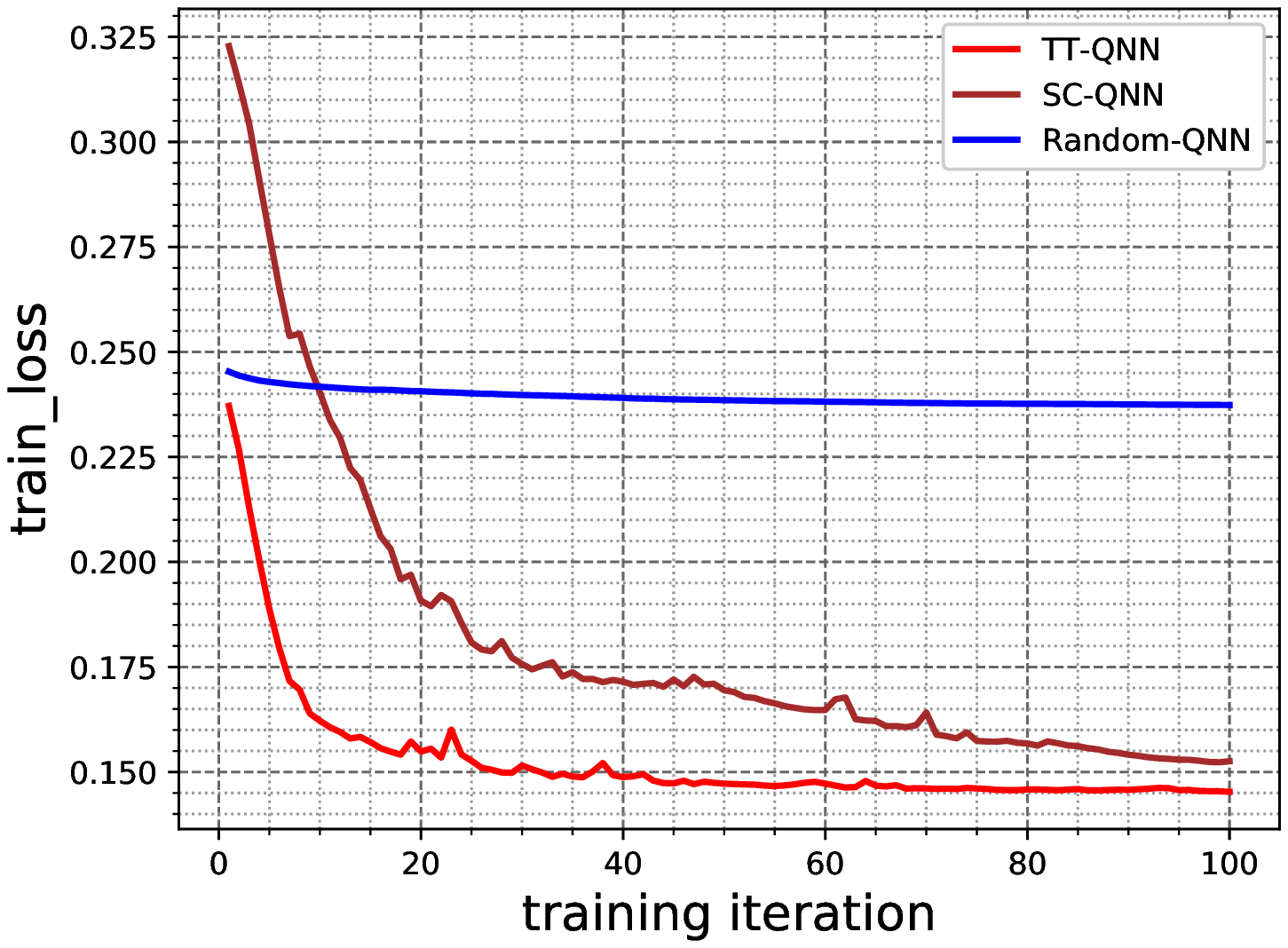}
\label{tqnn_02_figure_loss_8}
}
\subfigure[]{
\centering
\includegraphics[width=0.22\textwidth]{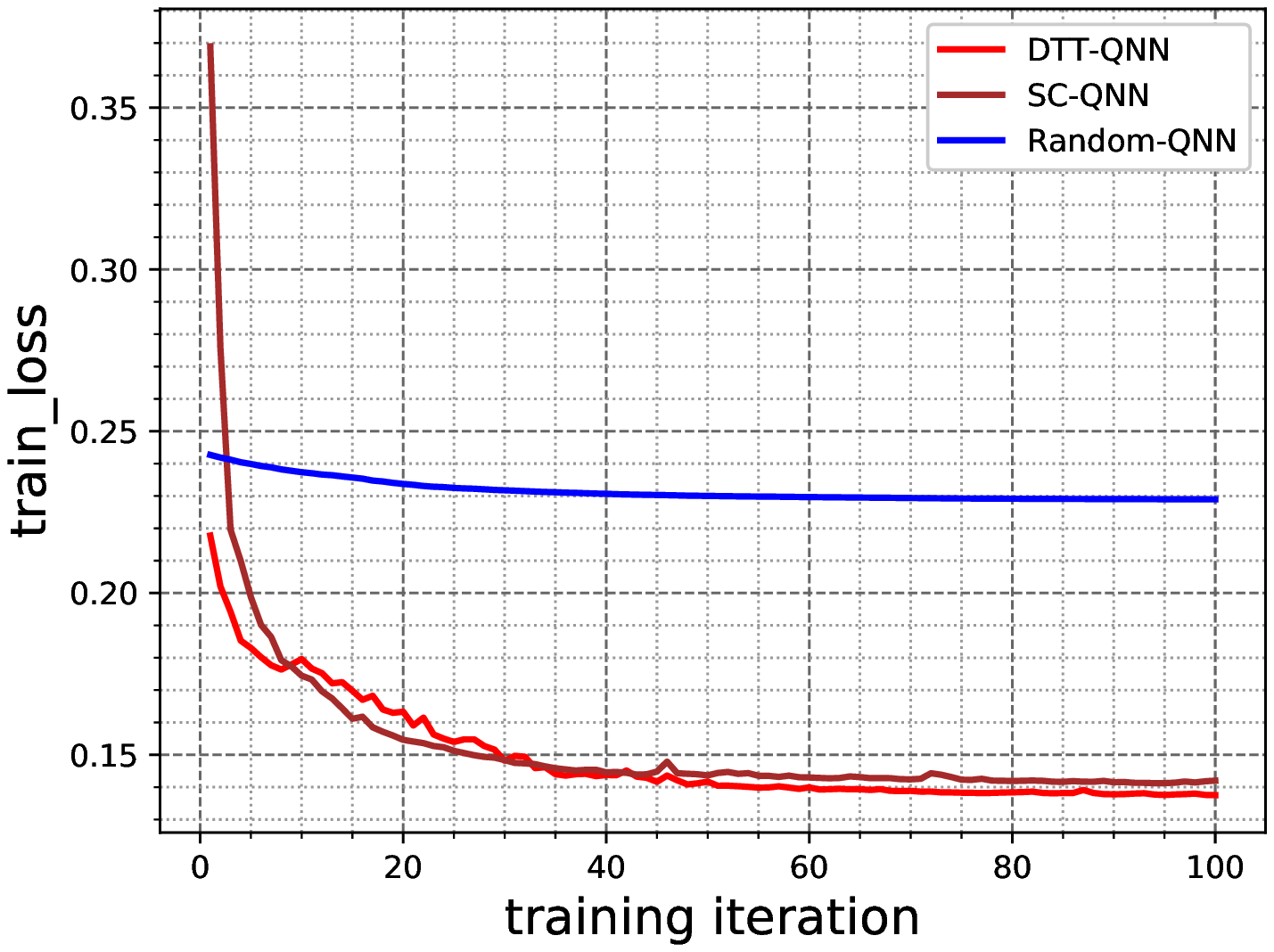}
\label{tqnn_02_figure_loss_10}
}
\subfigure[]{
\centering
\includegraphics[width=0.22\textwidth]{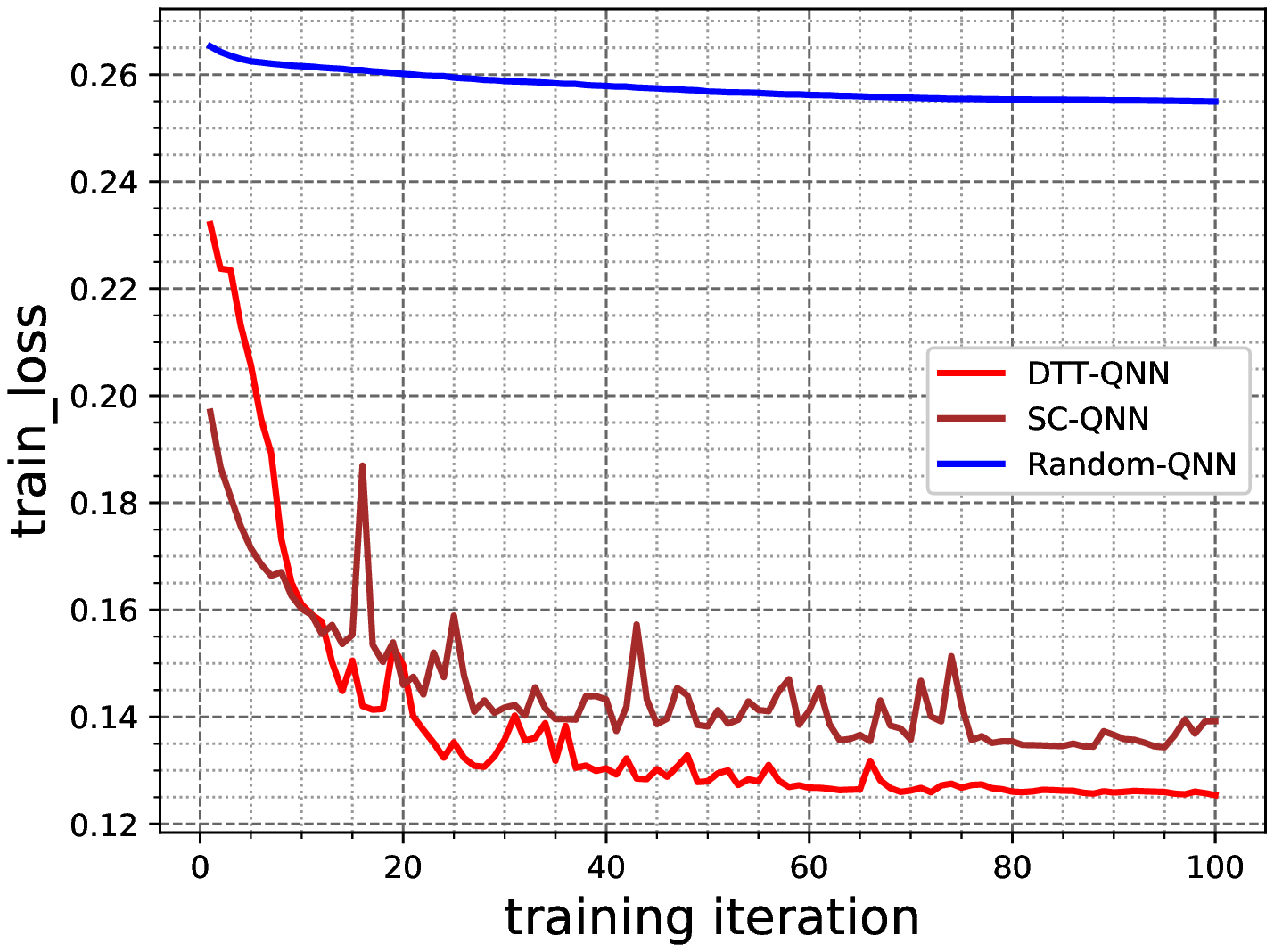}
\label{tqnn_02_figure_loss_12}
}
\subfigure[]{
\centering
\includegraphics[width=0.22\textwidth]{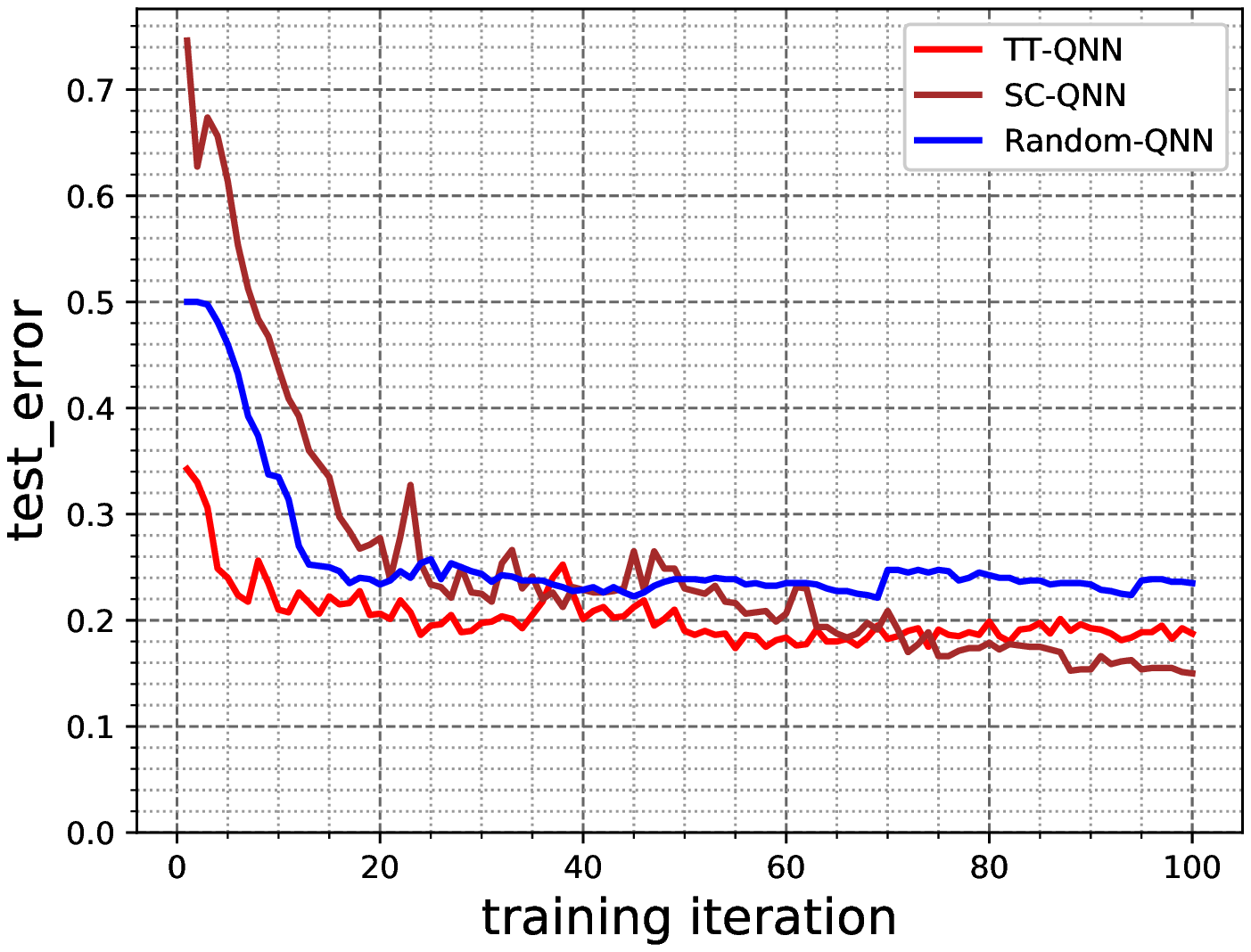}
\label{tqnn_02_figure_error_8}
}
\subfigure[]{
\centering
\includegraphics[width=0.22\textwidth]{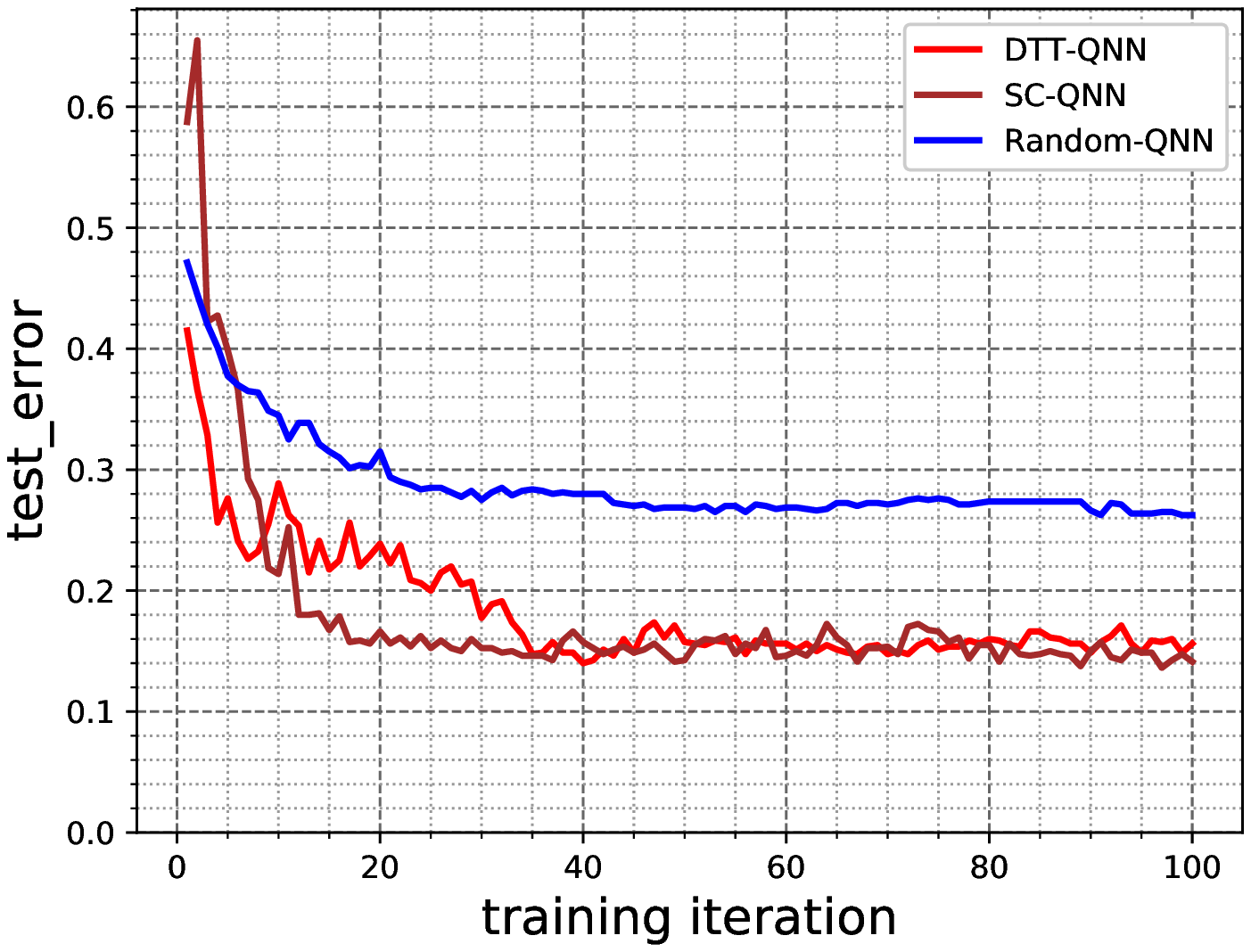}
\label{tqnn_02_figure_error_10}
}
\subfigure[]{
\centering
\includegraphics[width=0.22\textwidth]{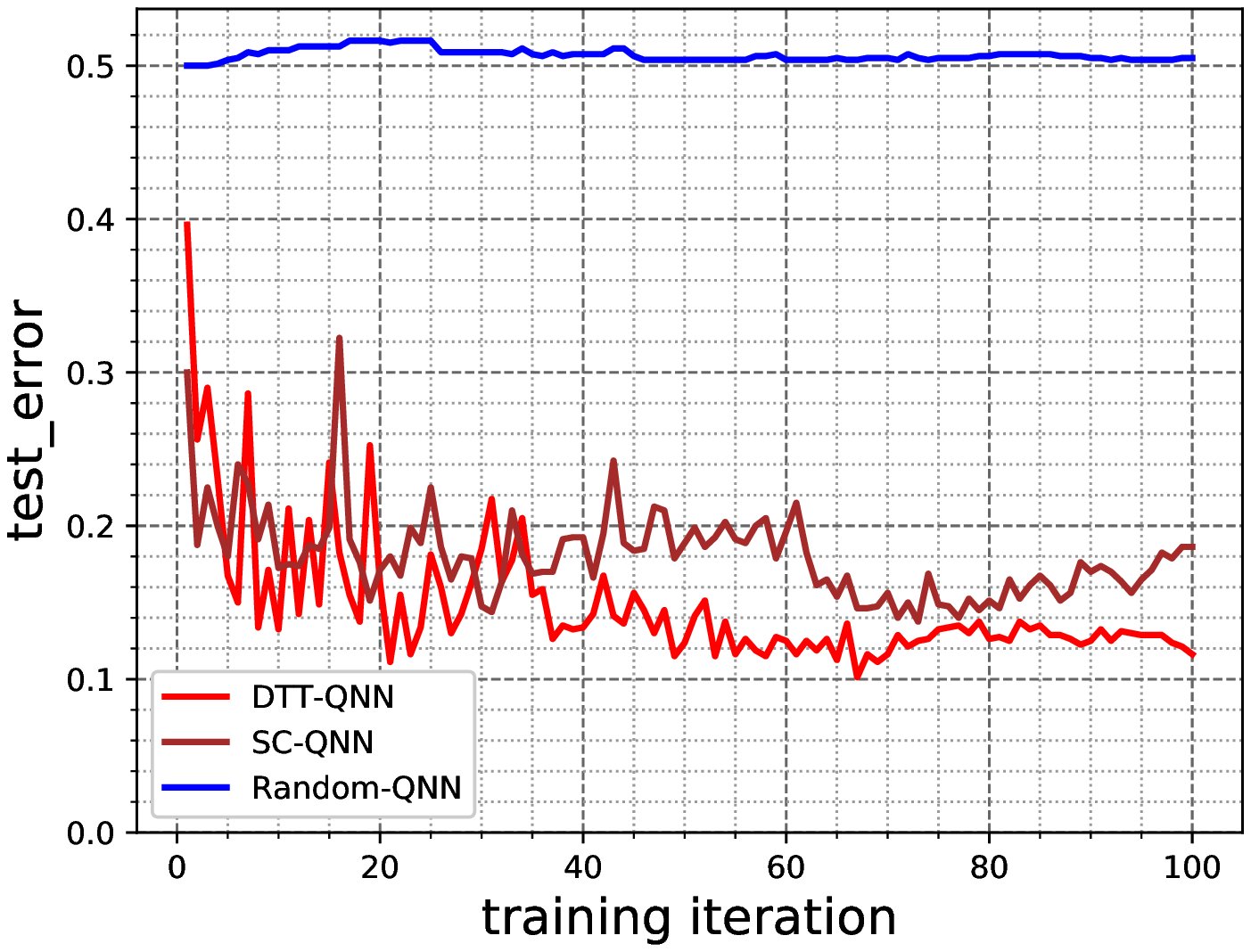}
\label{tqnn_02_figure_error_12}
}
\subfigure[]{
\centering
\includegraphics[width=0.22\textwidth]{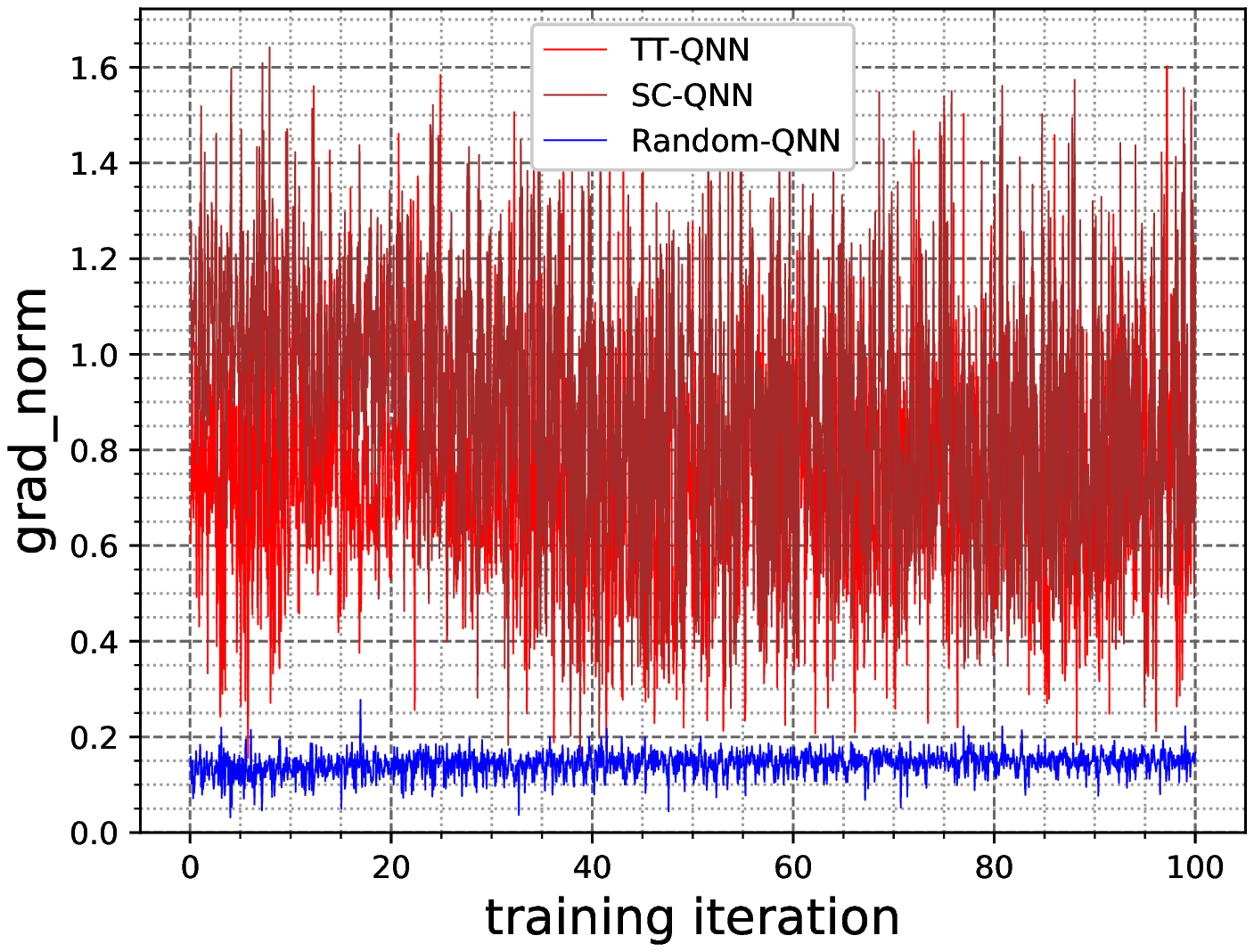}
\label{tqnn_02_figure_gradnorm_8}
}
\subfigure[]{
\centering
\includegraphics[width=0.22\textwidth]{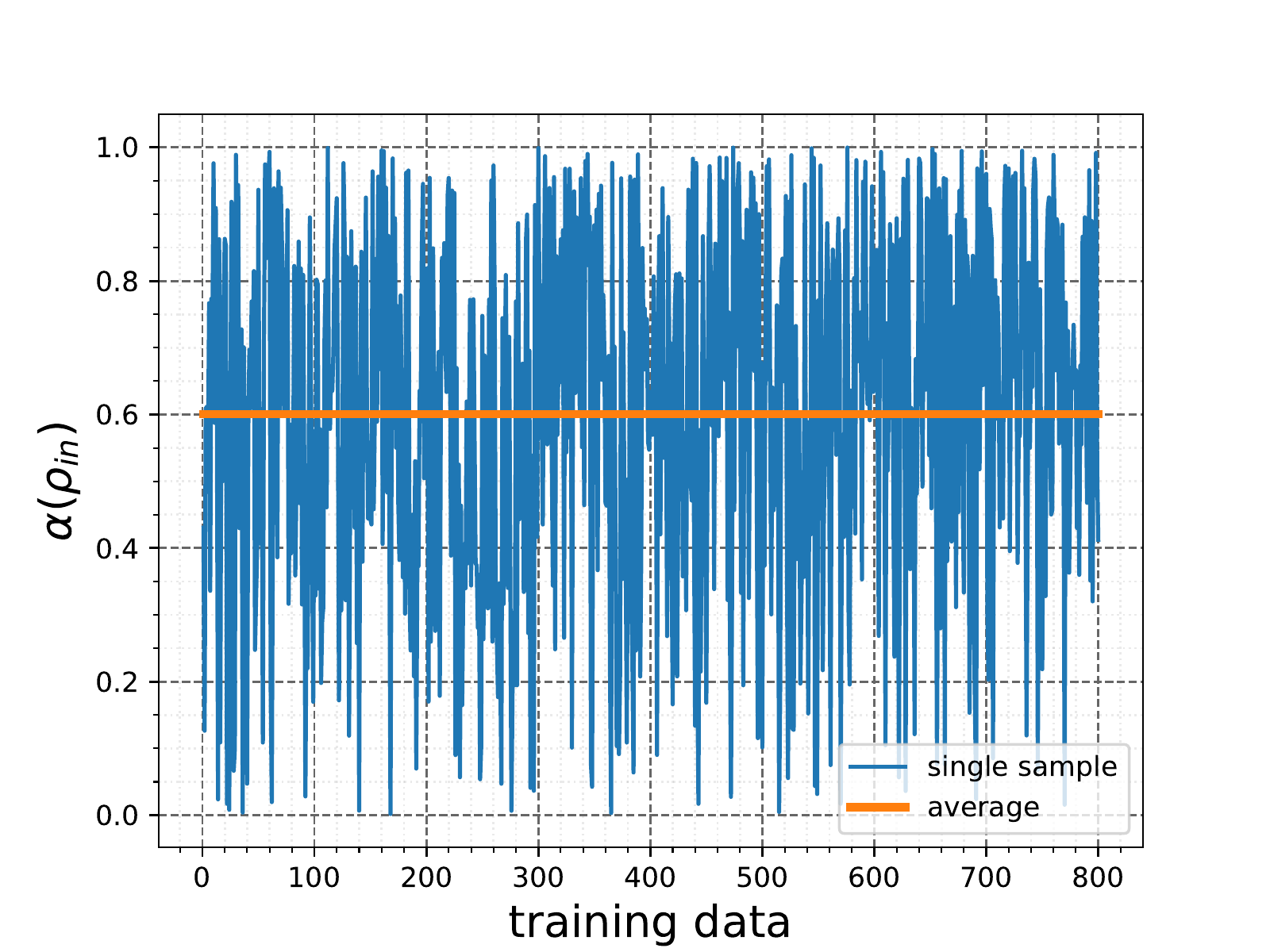}
\label{tqnn_02_figure_alpha_8}
}
\centering
\caption{Simulations on the MNIST binary classification between $(0,2)$. The training loss and the test error during the training iteration are illustrated in Figures~\ref{tqnn_02_figure_loss_8}, \ref{tqnn_02_figure_error_8} for the n=8 case, Figures~\ref{tqnn_02_figure_loss_10}, \ref{tqnn_02_figure_error_10} for the n=10 case, and Figures~\ref{tqnn_02_figure_loss_12}, \ref{tqnn_02_figure_error_12} for the n=12 case.
The gradient norm of objective functions and the term $\alpha(\rho_{\text{in}})$ during the training are shown in Figures~\ref{tqnn_02_figure_gradnorm_8} and \ref{tqnn_02_figure_alpha_8}, respectively for the n=8 case. }
\label{tqnn_figure_all}
\end{figure}

To analyze the practical performance of QNNs for binary classification tasks, we simulate the training and test of QNNs on the MNIST handwritten dataset.
%{We conduct the binary classification on images with labels $0$ and $1$.} 
%\zkn{We consider the binary classification task on two classes of images.} 
The $28\times28$ size image is sampled into $16\times16$, $32\times32$, and $64\times64$ to fit QNNs for qubit number $n \in \{8,10,12\}$. We set the parameter in SC-QNNs as $n_c=4$ for all qubit number settings.
Note that the based on the tree structure, the qubit number of original TT-QNNs is limited to the power of 2. To analysis the behavior of TT-QNNs for general qubit numbers, we modify TT-QNNs into Deformed Tree Tensor (DTT) QNNs. The gradient norm for DTT-QNNs is lower bounded by $\O(1/n)$, which has a similar form of that for TT-QNNs. We provide more details of DTT-QNNs in Appendix~\ref{ttqnn_app_dtt} and denote DTT-QNNs as TT-QNNs in simulation parts (Section~\ref{tqnn_simulations} and Appendix~\ref{tqnn_app_experiments}) for the convenience.

We construct the encoding circuits in Section~\ref{tqnn_ttn_input_model} with the number of alternating layers $L=1$ for $400$ training samples and $400$ test samples in each class. The TT-QNN and the SC-QNN is compared to the QNN with the random structure. To make a fair 
comparison, we set the numbers of RY and CNOT gates in the random QNN to be the same with the TT-QNN and the SC-QNN. The objective function of the random QNN is defined as the average of the expectation of the observable $\sigma_3$ for all qubits in the circuit.
The number of the training iteration is $100$, the batch size is $20$, and the decayed learning rate is adopted as $\{1.00, 0.75, 0.50, 0.25\}$. We set $n_{\text{train}}=200$ and $n_{\text{test}}=1000$ as numbers of measurements for estimating quantum observables during the training and test stages, respectively. All experiments are simulated through the PennyLane Python package \cite{bergholm2020pennylane}.

Firstly, we explain our results about QNNs with different qubit numbers in Figure~\ref{tqnn_figure_all}.
We train TT-QNNs, SC-QNNs, and Random-QNNs with the stochastic gradient descent method described in Algorithm~\ref{tqnn_qnn_classification_algorithm} for images in the class $(0,2)$ and the qubit number $n \in \{8,10,12\}$. The total loss is defined as the average of the single-input loss.
The training loss and the test error during the training iteration are illustrated in Figures~\ref{tqnn_02_figure_loss_8}, \ref{tqnn_02_figure_error_8} for the n=8 case, Figures~\ref{tqnn_02_figure_loss_10}, \ref{tqnn_02_figure_error_10} for the n=10 case, and Figures~\ref{tqnn_02_figure_loss_12}, \ref{tqnn_02_figure_error_12} for the n=12 case.
The test error of the TT-QNN, the SC-QNN and the Random-QNN converge to around 0.2 for the n=8 case. As the qubit number increases, the converged test error of both TT-QNNs and SC-QNNs remains lower than 0.2, while that of Random-QNNs increases to 0.26 and 0.50 for n=10 and n=12 case, respectively. The training loss of both TT-QNNs and SC-QNNs converges to around 0.15 for all qubit number settings, while that of Random-QNNs remains higher than 0.22. Both the training loss and the test error results show that TT-QNNs and SC-QNNs have better trainability and accuracy on the binary classification compared with Random-QNNs.
We record the $l_2$-norm of the gradient during the training for the n=8 case in Figure~\ref{tqnn_02_figure_gradnorm_8}. The gradient norm for the TT-QNN and the SC-QNN is mostly distributed in $[0.4,1.4]$, which is significantly larger than the gradient norm for the Random-QNN that is mostly distributed in $[0.1,0.2]$. As shown in Figure~\ref{tqnn_02_figure_gradnorm_8}, the gradient norm verifies the lower bounds in the Theorem~\ref{tqnn_ttn_theorem}.
Moreover, we calculate the term $\alpha(\rho_{\text{in}})$ defined in Theorem~\ref{tqnn_ttn_theorem} and show the result in Figure~\ref{tqnn_02_figure_alpha_8}. The average of $\alpha(\rho_{\text{in}})$ is around $0.6$, which is lower bounded by the theoretical result $\frac{1}{4}$ in Theorem~\ref{tqnn_input_model_bound} ($L=1$). 

Secondly, we explain our results about QNNs on the binary classification with different class pairs. We conduct the binary classification with the same super parameters mentioned before for 10-qubit QNNs, and the test accuracy and F1-scores for all class pairs $\{i,j\} \subset \{0,1,2,3,4\}$ are provided in Table~\ref{tqnn_accuracy_tabel_10}. The F1-0 denotes the F1-score when treats the former class to be positive, and the F1-1 denotes the F1-score for the other case. As shown in Table~\ref{tqnn_accuracy_tabel_10}, TT-QNNs and SC-QNNs have higher test accuracy and F1-score than Random-QNNs for all class pairs in Table~\ref{tqnn_accuracy_tabel_10}. Specifically, test accuracy of TT-QNNs and SC-QNNs exceed that of Random-QNNs by more than $10\%$ for all class pairs except the $(0,1)$ which is relatively easy to classify.

In conclusion, both TT-QNNs and SC-QNNs show better trainability and accuracy on binary classification tasks compared with the random structure QNN, and all theorems are verified by experiments. We provide more experimental details and results about the input model and other classification tasks in Appendix~\ref{tqnn_app_experiments}.

\begin{table}[t]
\centering
\begin{tabular}{cccc}
\toprule
&  TT-QNN & SC-QNN & Random-QNN \\
Class pair & test (F1-0, F1-1) & test (F1-0, F1-1) & test (F1-0, F1-1) \\
\midrule
0-1 & 0.959 (0.958, 0.960) & \textbf{0.979} (0.979, 0.979) & 0.920 (0.915, 0.925) \\
0-2 & 0.844 (0.852, 0.834) & \textbf{0.859} (0.859, 0.858) & 0.738 (0.751, 0.722) \\
0-3 & 0.835 (0.810, 0.854) & \textbf{0.859} (0.857, 0.860) & 0.659 (0.680, 0.635) \\
0-4 & \textbf{0.938} (0.938, 0.938) & 0.925 (0.925, 0.925) & 0.779 (0.796, 0.759) \\
1-2 & 0.929 (0.930, 0.927) & \textbf{0.965} (0.965, 0.965) & 0.700 (0.728, 0.666) \\
1-3 & \textbf{0.938} (0.941, 0.934) & 0.881 (0.879, 0.884) & 0.765 (0.773, 0.756) \\
1-4 & 0.821 (0.837, 0.802) & \textbf{0.871} (0.879, 0.862) & 0.705 (0.737, 0.664) \\
2-3 & 0.814 (0.791, 0.832) & \textbf{0.820} (0.810, 0.829) & 0.709 (0.743, 0.664) \\
2-4 & 0.931 (0.928, 0.934) & \textbf{0.935} (0.933, 0.937) & 0.645 (0.728, 0.487) \\
3-4 & 0.931 (0.930, 0.933) & \textbf{0.944} (0.943, 0.945) & 0.819 (0.839, 0.793) \\
\bottomrule
\end{tabular}
\caption{The test accuracy and F1-scores for different class pairs (qubit number = 10).}\label{tqnn_accuracy_tabel_10}
\end{table}

\section{Conclusions}
\label{tqnn_conclusions}
In this work, we analyze the vanishing gradient problem in quantum neural networks. We prove that the gradient norm of $n$-qubit quantum neural networks with the tree tensor structure and the step controlled structure are lower bounded by $\Omega(\frac{1}{n})$ and $\Omega((\frac{1}{2})^{n_c})$, respectively. The bound guarantees the trainability of TT-QNNs and SC-QNNs on related machine learning tasks. Our theoretical framework requires fewer assumptions than previous works and meets constraints on quantum neural networks for near-term quantum computers. Compared with the random structure QNN which is known to be suffered from the barren plateaus problem, both TT-QNNs and SC-QNNs show better trainability and accuracy on the binary classification task. We hope the paper could inspire future works on the trainability of QNNs with different architectures and other quantum machine learning algorithms. 

\newpage

\bibliography{iclr2021}
\bibliographystyle{unsrt}

\newpage

\appendix

The Appendix of this paper is organized as follows. We shortly introduce the notion of unitary $2$-designs in Appendix~\ref{tqnn_app_2_design}. Some useful technical lemmas are provided and proved in Appendix~\ref{tqnn_app_technical_lemmas}. We prove the Theorem~\ref{tqnn_ttn_theorem} in Appendix~\ref{tqnn_app_proof_main_theorem} and Appendix~\ref{tqnn_app_sc}. The Appendix~\ref{tqnn_app_technical_lemmas}, Appendix~\ref{tqnn_app_proof_main_theorem}, and Appendix~\ref{tqnn_app_sc} form the theoretical framework for analyzing the gradient norm of QNNs. We provide the proof of Theorem~\ref{tqnn_input_model_bound} in the main text and more detail about the proposed encoding model in Appendix~\ref{tqnn_app_input_model}. All experimental details are provided in Appendix~\ref{tqnn_app_experiments}.

\section{Numerical Simulations}
\label{tqnn_app_experiments}

In this section, we provide more experimential details about the input model and other binary classification tasks.

\subsection{Quantum Input model for the MNIST Dataset} 

\begin{figure}[h]
\centering
\subfigure[]{
\centering
\includegraphics[width=0.22\linewidth]{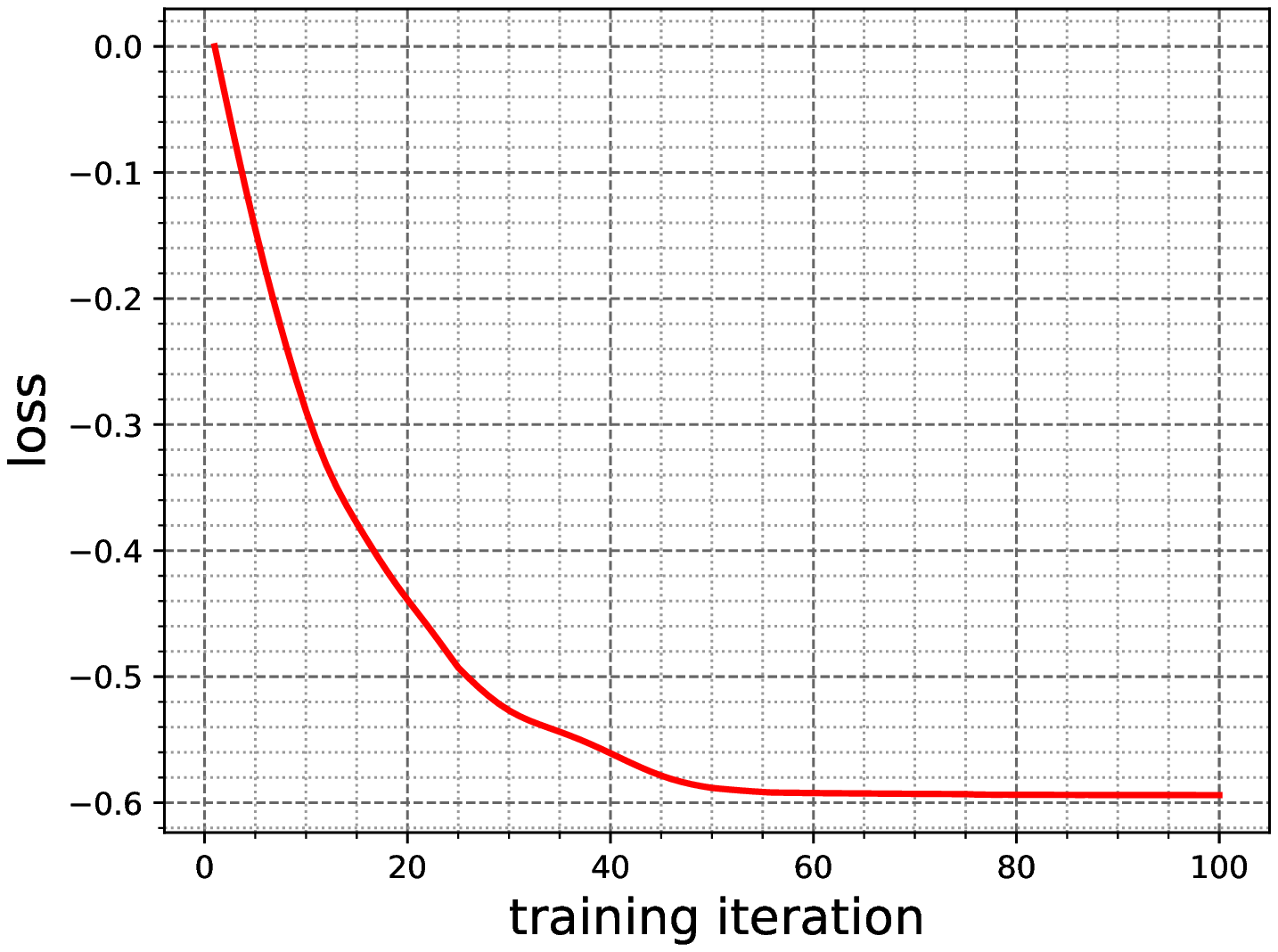}
\label{tqnn_app_input_model_loss_0}
}
\subfigure[]{
\centering
\includegraphics[width=0.22\linewidth]{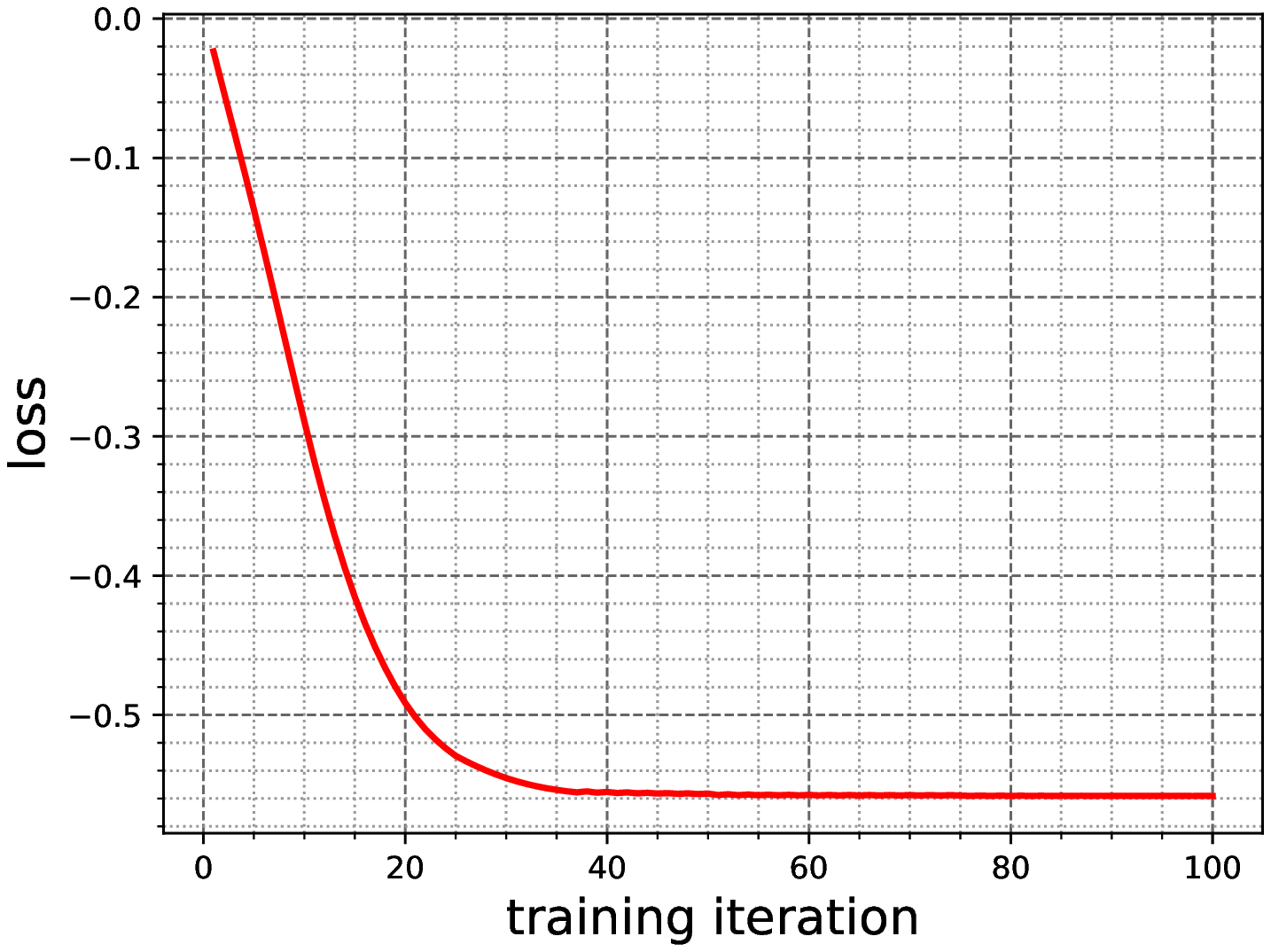}
\label{tqnn_app_input_model_loss_1}
}
\subfigure[]{
\centering
\includegraphics[width=0.22\linewidth]{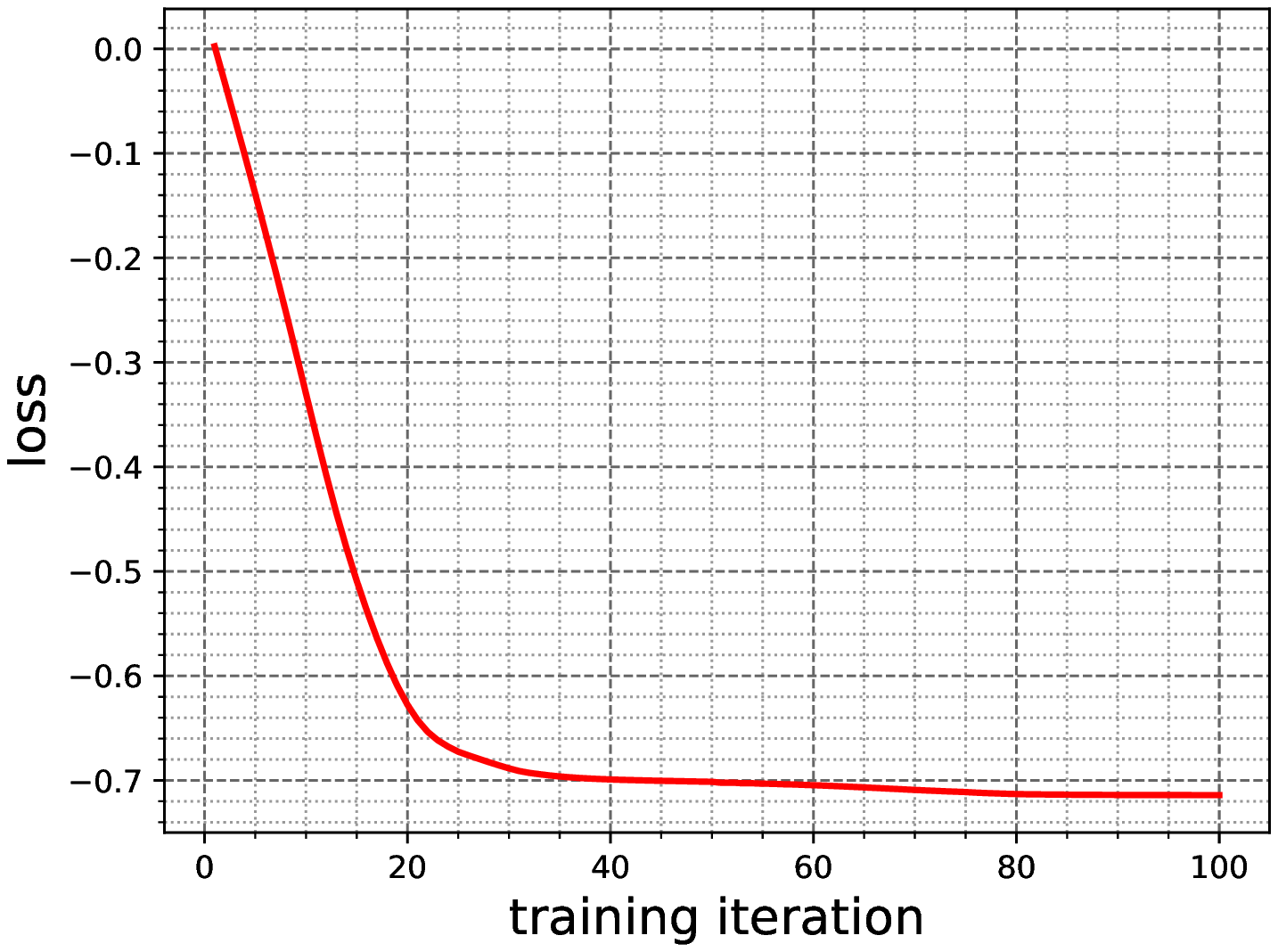}
\label{tqnn_app_input_model_loss_2}
}
\subfigure[]{
\centering
\includegraphics[width=0.22\linewidth]{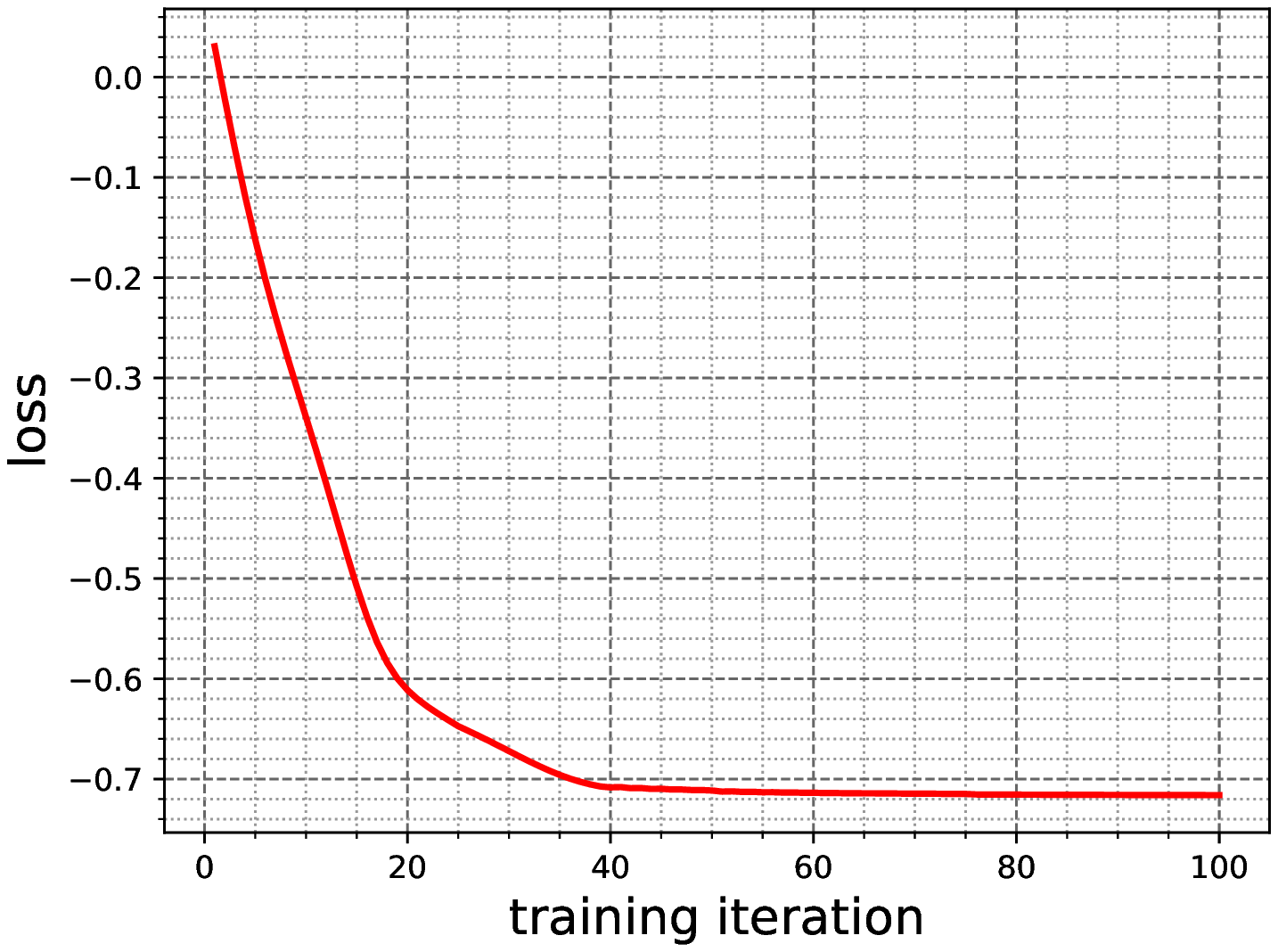}
\label{tqnn_app_input_model_loss_3}
}
\centering
\caption{The training loss of the input model for MNIST images in class $\{0,1,2,3\}$. }
\label{tqnn_app_input_model_loss}
\end{figure}

\begin{figure}[h]
\centering
\subfigure[]{
\centering
\includegraphics[width=0.17\linewidth]{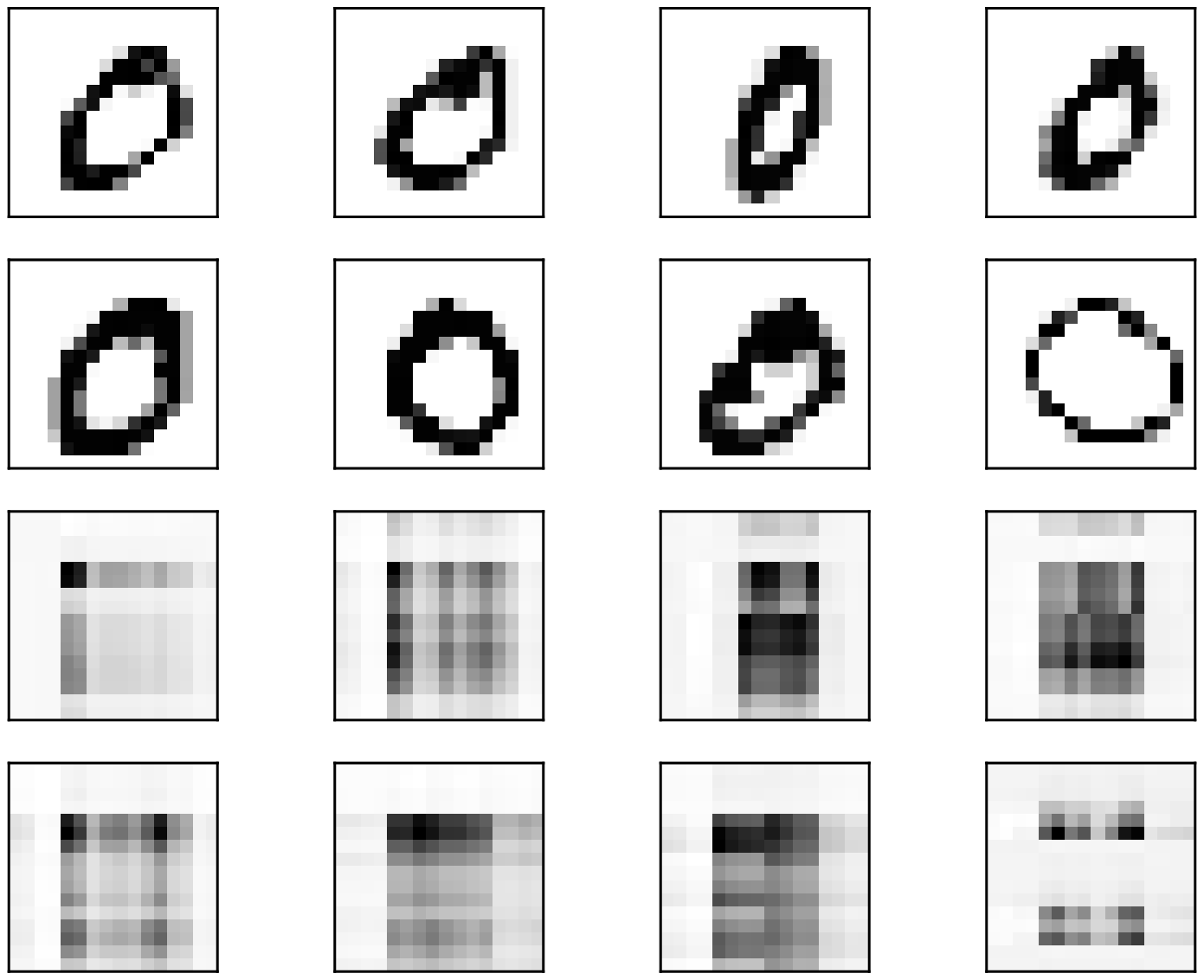}
\label{tqnn_app_input_model_image_0}
}
\subfigure[]{
\centering
\includegraphics[width=0.17\linewidth]{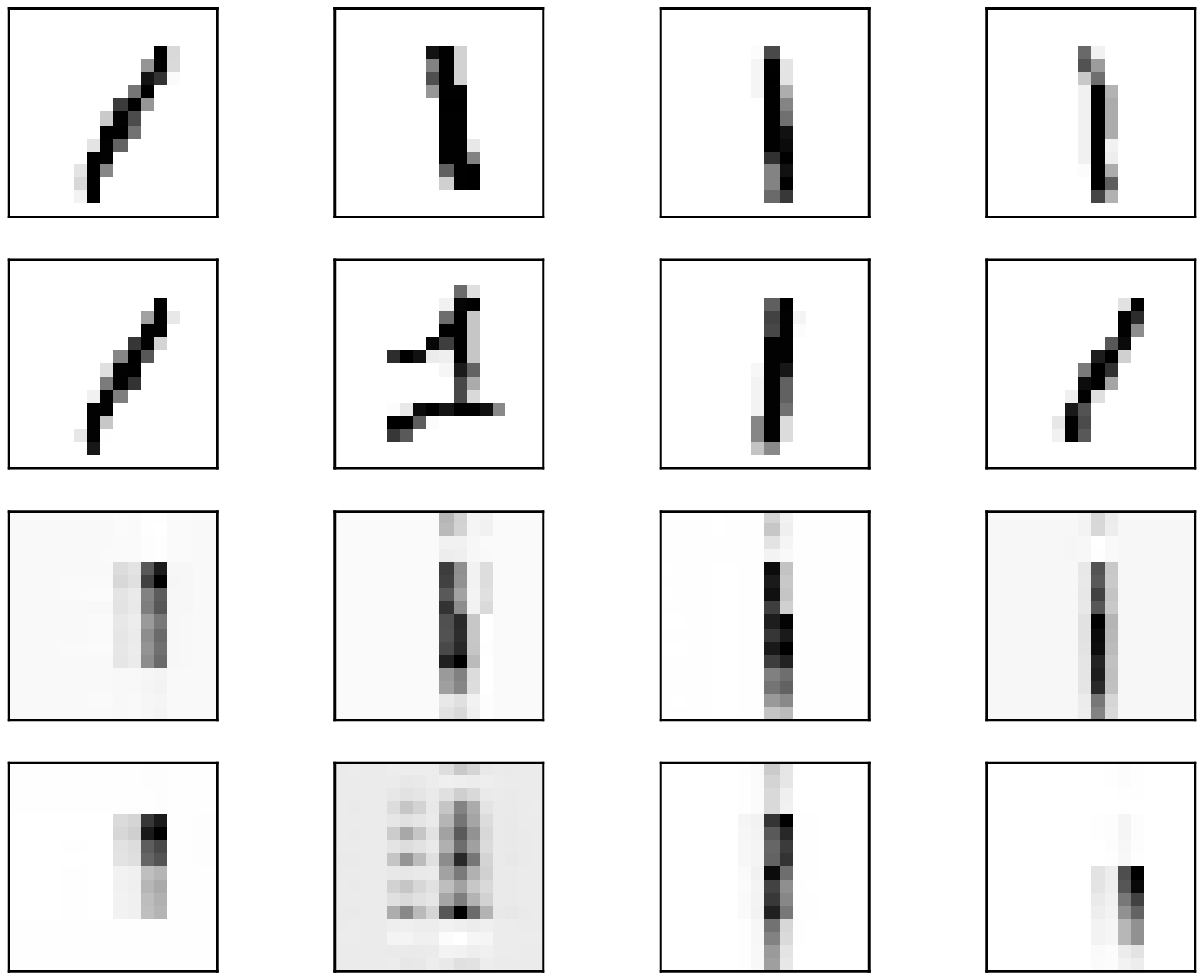}
\label{tqnn_app_input_model_image_1}
}
\subfigure[]{
\centering
\includegraphics[width=0.17\linewidth]{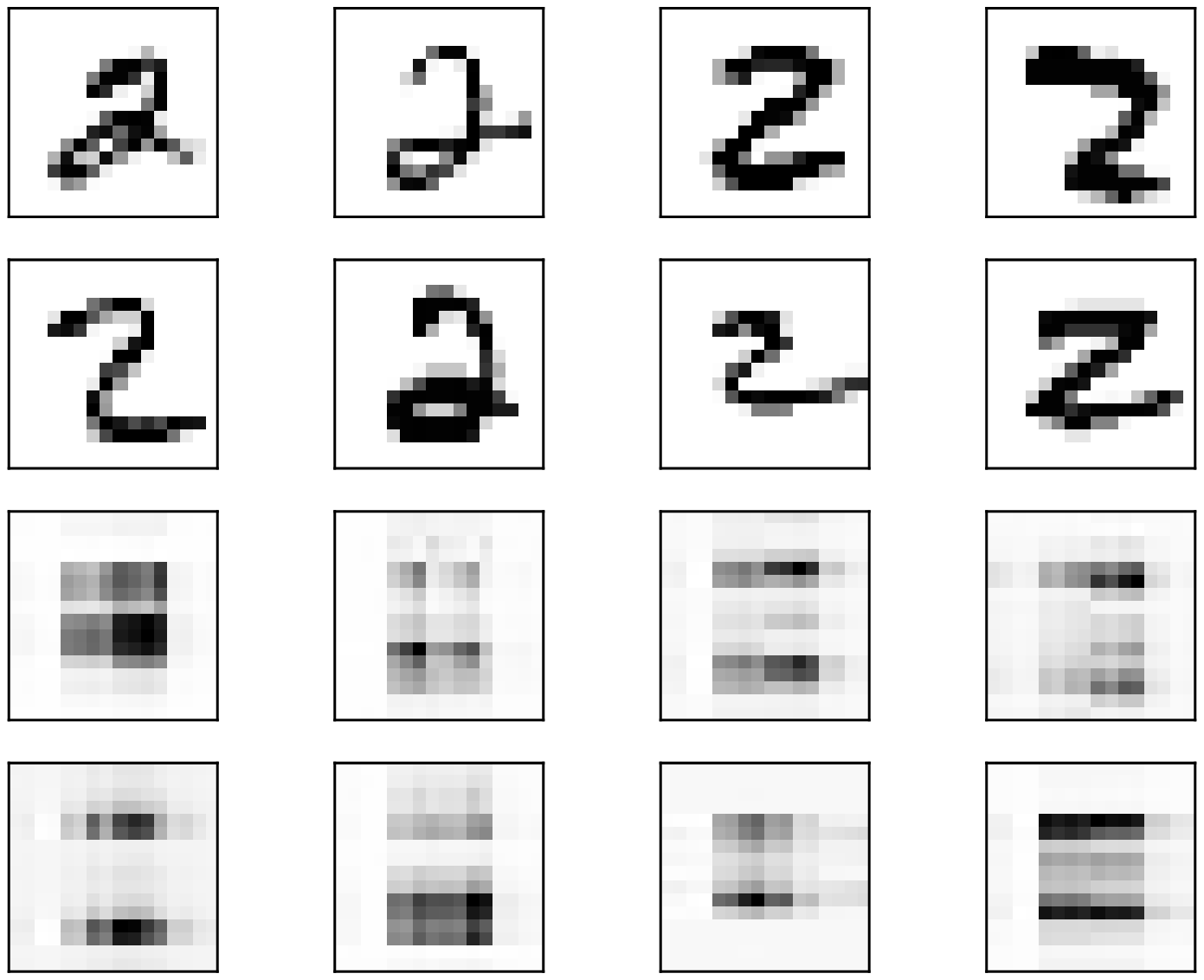}
\label{tqnn_app_input_model_image_2}
}
\subfigure[]{
\centering
\includegraphics[width=0.17\linewidth]{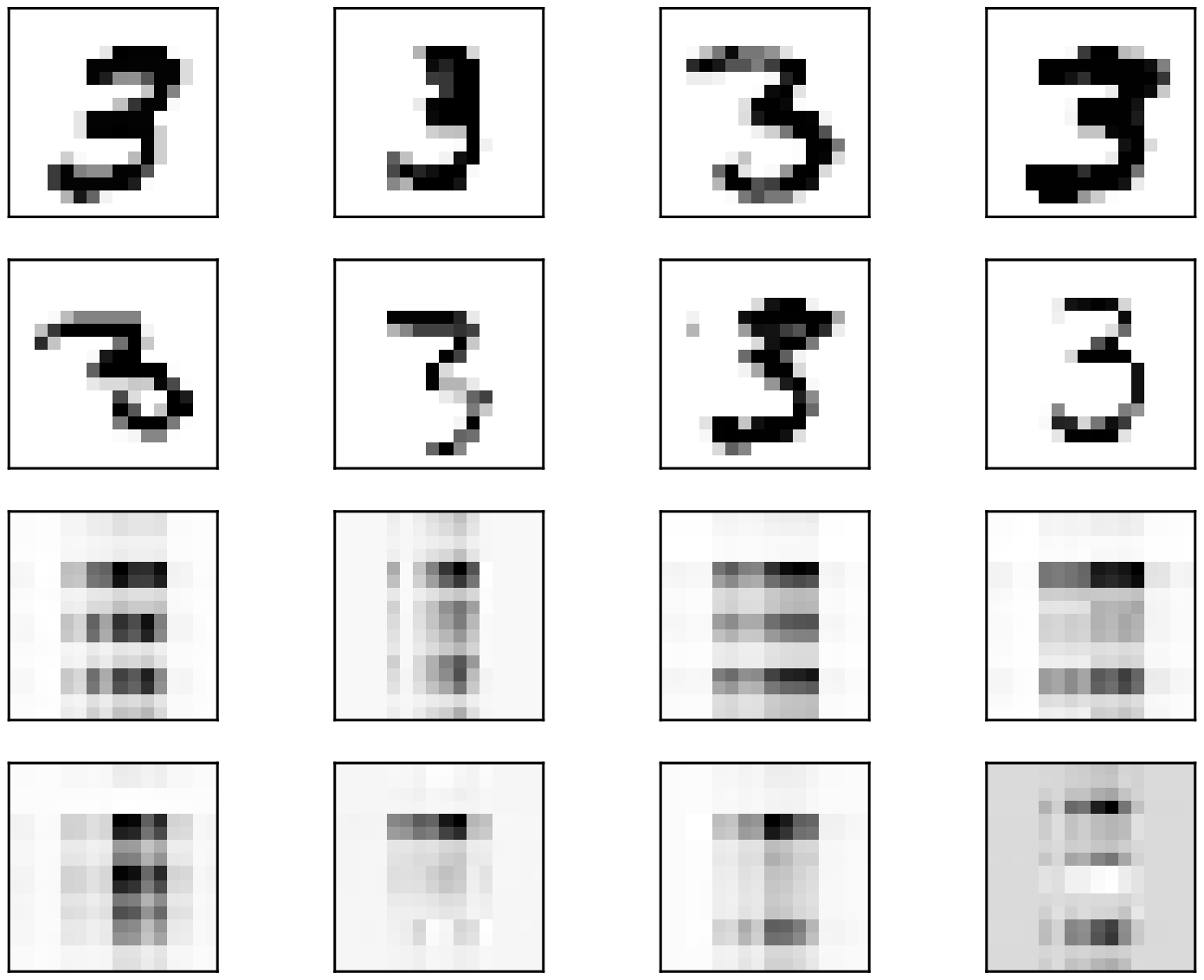}
\label{tqnn_app_input_model_image_3}
}
\subfigure[]{
\centering
\includegraphics[width=0.17\linewidth]{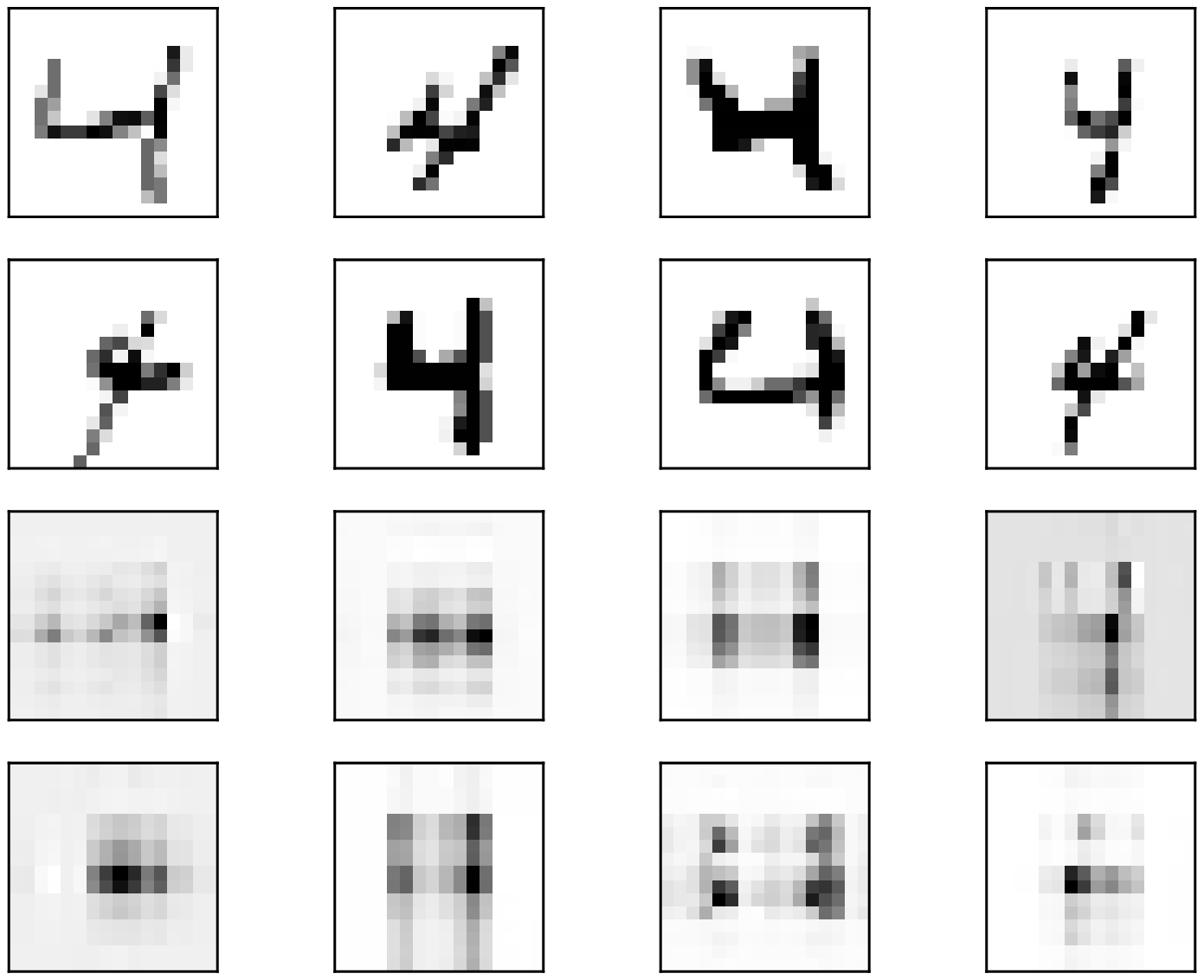}
\label{tqnn_app_input_model_image_4}
}
\centering
\caption{The visualization of the encoding circuit for MNIST images in class $\{0,1,2,3,4\}$. }
\label{tqnn_app_input_model_image}
\end{figure}

In this section, we discuss the training of the input model (Algorithm 1) in detail.
We construct the encoding circuits in Section~\ref{tqnn_ttn_input_model} for each training or test data with the number of alternating layers $L=1$.
The number of the training iteration is $100$. 
We adopt the decayed learning rate as $\{0.100, 0.075, 0.050, 0.025\}$.
We illustrate the loss function defined in Eq.~\ref{tqnn_input_model_loss} during the training of Algorithm 1 for label in $\{0,1,2,3\}$ for the n=8 case in Figure~\ref{tqnn_app_input_model_loss}, in which we show the training of the input model for one image per sub-figure. All shown loss functions converge to around -0.6 after 60 iterations.

For a better understanding to the input model, we provide the visualization of the encoding circuit in Figure~\ref{tqnn_app_input_model_image}. We notice that the encoding circuit could only catch a few features from the input data (except the image 1 which shows good results). Despite this, we obtain relatively good results on binary classification tasks which employ the mentioned encoding circuit.

Apart from the binary classification tasks using QNNs equipped with the encoding circuit provided in Section 3.3, we perform some experiments such that the encoding circuit is replaced by the exact amplitude encoding, which is commonly used in existing quantum machine learning algorithms. 
Figure~\ref{tqnn_app_02_exact_encoding} demonstrates the simulation on MNIST classifications between images $(0,2)$, which shows the convergence of training loss (Figure~\ref{tqnn_app_02_exact_encoding_train_loss}) and the test error (Figure~\ref{tqnn_app_02_exact_encoding_test_error}). The norm of the gradient is counted in Figure~\ref{tqnn_app_02_exact_encoding_gradnorm}, in which both the TT-QNN and the SC-QNN show larger gradient norm than the Random-QNN. Thus, the trainability of TT-QNNs and SC-QNNs remains when replace the encoding circuit with the exact amplitude encoding.

The training and test accuracy for other class pairs are summarized in Table~\ref{tqnn_app_accuracy_tabel_amplitude_encoding}. We notice that compared with QNNs using the encoding circuit, QNNs with the exact encoding tend to have better performance on the accuracy, which is reasonable since the exact amplitude encoding remains all information of the input data.

\begin{figure}[t]
\centering
\subfigure[]{
\centering
\includegraphics[width=0.30\linewidth]{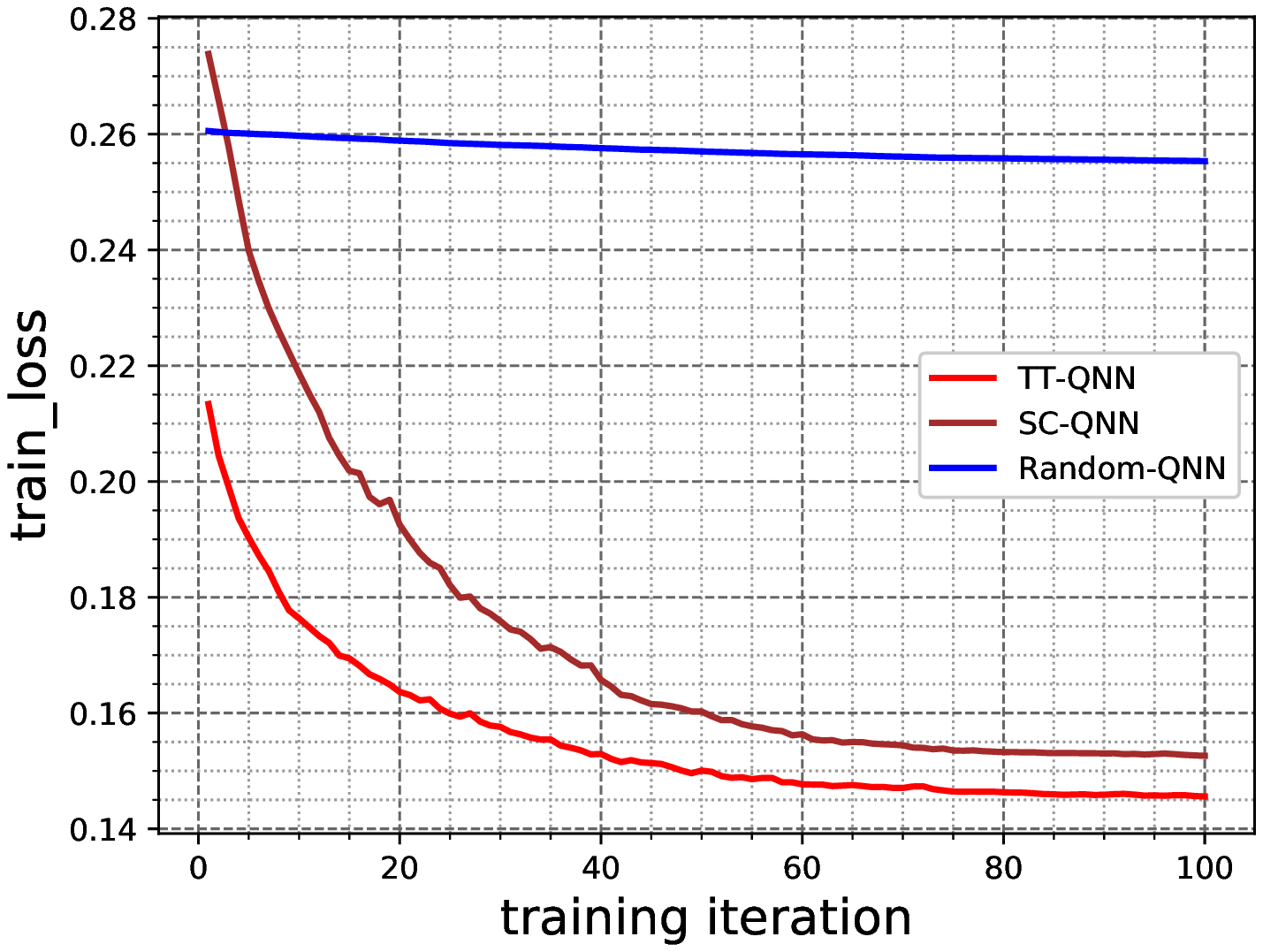}
\label{tqnn_app_02_exact_encoding_train_loss}
}
\subfigure[]{
\centering
\includegraphics[width=0.30\linewidth]{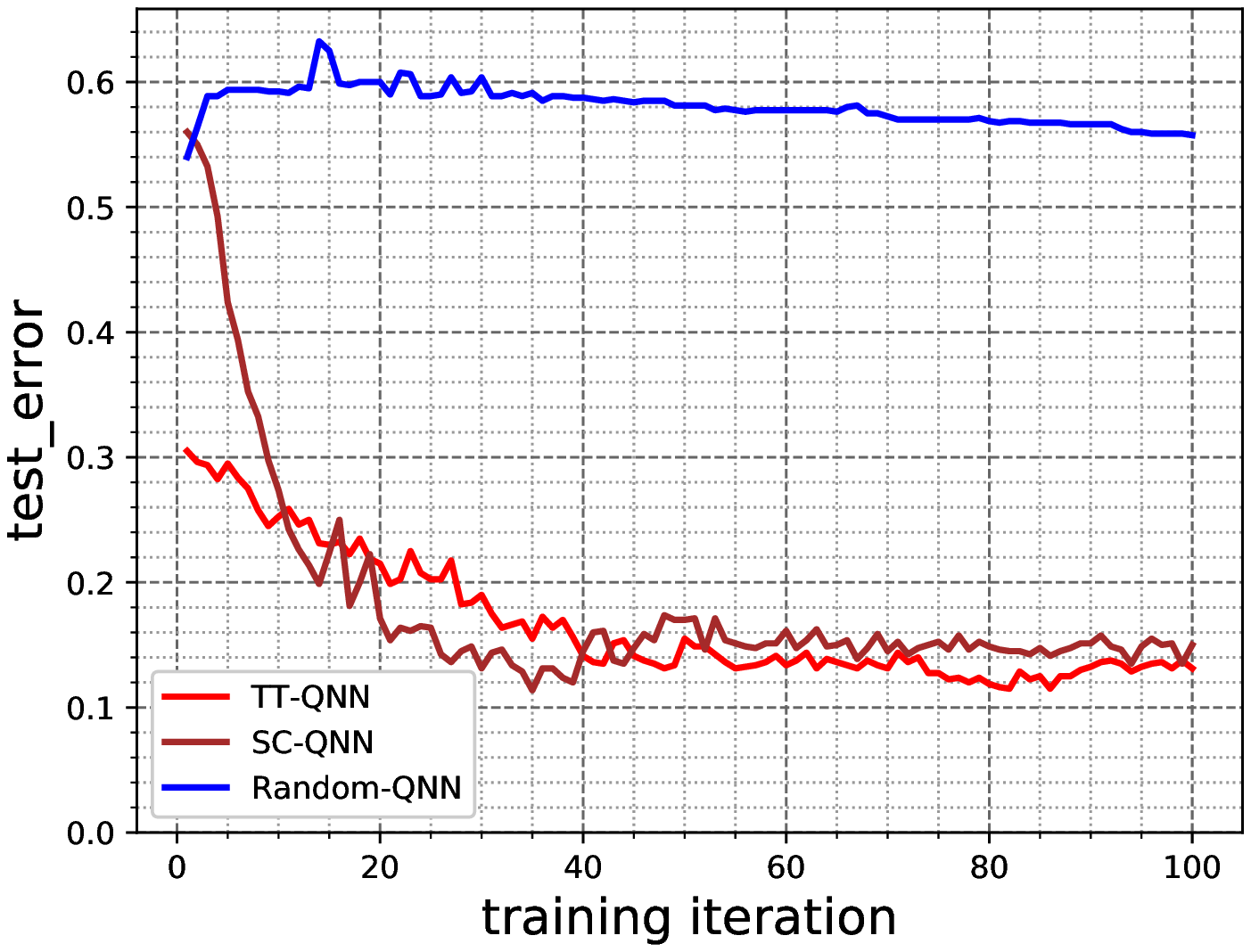}
\label{tqnn_app_02_exact_encoding_test_error}
}
\subfigure[]{
\centering
\includegraphics[width=0.30\linewidth]{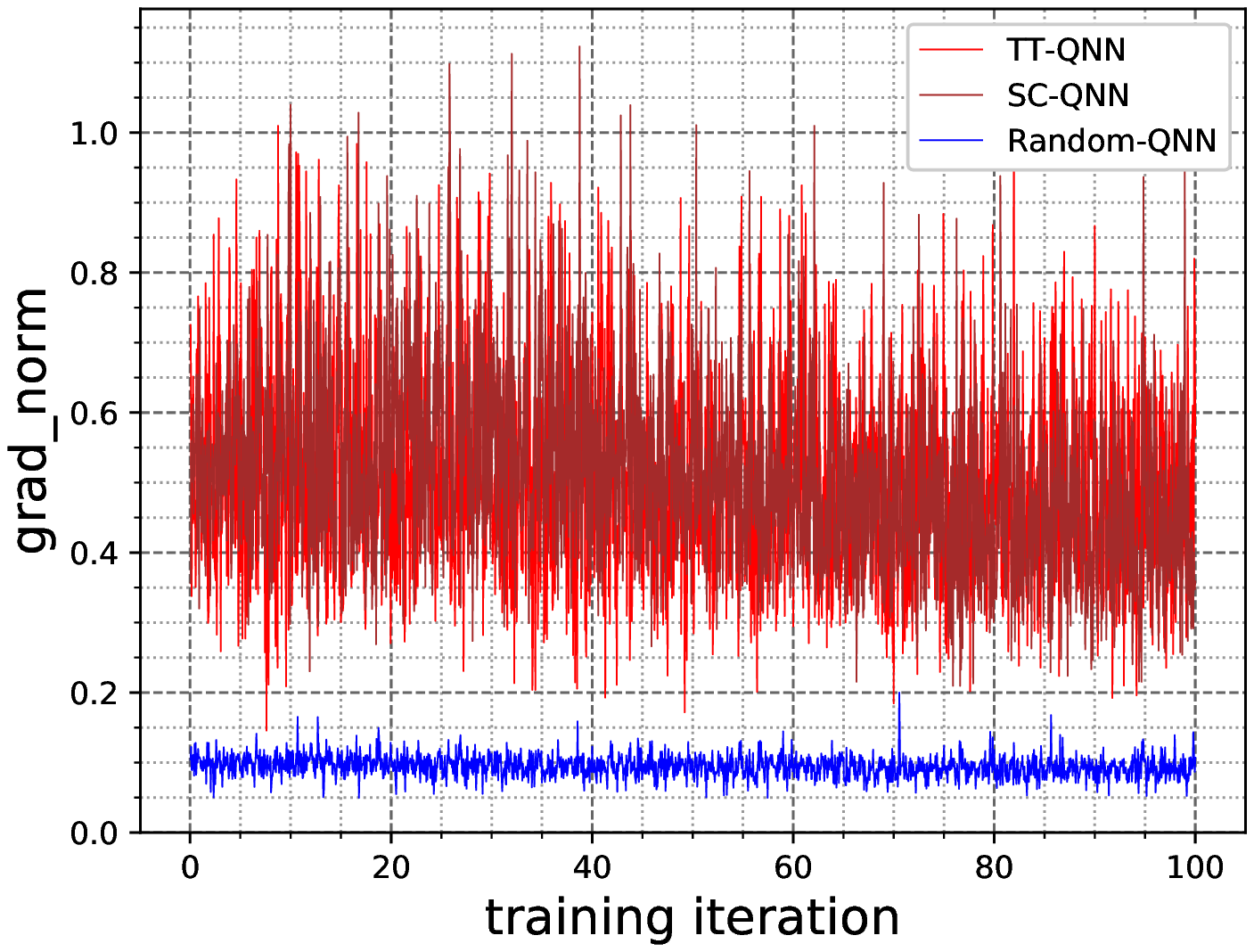}
\label{tqnn_app_02_exact_encoding_gradnorm}
}
\centering
\caption{The training of binary classification task on the MNIST image bettwen classes $(0,2)$ using 8-qubit QNNs with the exact amplitude encoding. The training loss and the test error during the training iteration are illustrated in Figures~\ref{tqnn_app_02_exact_encoding_train_loss}, \ref{tqnn_app_02_exact_encoding_test_error}, respectively. 
The gradient norm of objective functions during the training is shown in Figures~\ref{tqnn_app_02_exact_encoding_gradnorm}.}
\label{tqnn_app_02_exact_encoding}
\end{figure}

\begin{table}[t]
\centering
\begin{tabular}{ccccc}
\toprule
 & TT-QNN & TT-QNN (exact encoding) & SC-QNN  & SC-QNN (exact encoding) \\
Class pair & training, test & training, test & training, test & training, test \\
\midrule
0-1 & 0.959, 0.981 & 0.980, \textbf{0.988} & 0.958, 0.969 & 0.986, 0.988 \\
0-2 & 0.850, 0.812 & 0.902, \textbf{0.869} & 0.840, 0.850 & 0.881, 0.850 \\
1-2 & 0.896, 0.909 & 0.924, 0.925 & 0.932, \textbf{0.943} & 0.919, 0.910 \\
\bottomrule
\end{tabular}
\caption{The training and test accuracy of TT-QNNs and SC-QNNs for different encoding strategies and different class pairs.}\label{tqnn_app_accuracy_tabel_amplitude_encoding}
\end{table}

\subsection{QNNs with different qubit sizes}

We summarize the results of training and test accuracy for different qubit numbers in Table~\ref{tqnn_app_accuracy_tabel_02}, which corresponds to the main results presented in Figure~\ref{tqnn_figure_all} in Section 4.2. As shown in Table~\ref{tqnn_app_accuracy_tabel_02}, both the training and test accuracy of TT-QNNs and SC-QNNs remain at a high level for all qubit number settings, while the training and test accuracy of Random-QNNs decrease to around 0.5 for the 12-qubit case, which means the Random-QNN cannot classify better than a random guessing.

\begin{table}[h]
\centering
\begin{tabular}{cccc}
\toprule
 &  TT-QNN & SC-QNN & Random-QNN \\
Qubit number & training, test & training, test & training, test \\
\midrule
8 & 0.850, 0.812 & 0.840, \textbf{0.850} & 0.741, 0.765 \\
10 & 0.865, 0.844 & 0.871, \textbf{0.859} & 0.804, 0.738 \\
12 & 0.884, \textbf{0.884} & 0.844, 0.814 & 0.497, 0.495 \\
\bottomrule
\end{tabular}
\caption{The training and test accuracy for different qubit numbers $\{8,10,12\}$ on the classification between $(0,2)$ classes.}\label{tqnn_app_accuracy_tabel_02}
\end{table}

\subsection{QNNs with different label pairs}

We summarize the results of training and test accuracy, along with the F1-score, for QNNs on the classification between different label pairs in Table~\ref{tqnn_app_accuracy_tabel_8} and Table~\ref{tqnn_app_accuracy_tabel_10}, for qubit number 8 and 10, respectively.
For all label pairs, TT-QNNs and SC-QNNs show higher performance of the training accuracy, the test accuracy, and the F1-scores than that of Random-QNNs. Moreover, most of test accuracy of Random-QNNs drop for the same class pair when the qubit number is increased from 8 to 10, which suggest the trainability of Random-QNNs get worse as the qubit number increases.

\begin{table}[h]
\centering
\begin{tabular}{cccc}
\toprule
 &  TT-QNN & SC-QNN & Random-QNN \\
Class pair & training, test (F1-0, F1-1) & training, test (F1-0, F1-1) & training, test (F1-0, F1-1) \\
\midrule
0-1 & 0.959, \textbf{0.981} (0.981, 0.981) & 0.958, 0.969 (0.969, 0.969) & 0.954, 0.927 (0.923, 0.932) \\
0-2 & 0.850, 0.812 (0.820, 0.804) & 0.840, \textbf{0.850} (0.847, 0.853) & 0.741, 0.765 (0.790, 0.734) \\
0-3 & 0.871, 0.856 (0.845, 0.866) & 0.870, \textbf{0.861} (0.853, 0.869) & 0.722, 0.728 (0.734, 0.721) \\
0-4 & 0.919, 0.894 (0.891, 0.896) & 0.934, \textbf{0.910} (0.907, 0.913) & 0.831, 0.766 (0.773, 0.759) \\
1-2 & 0.896, 0.909 (0.914, 0.903) & 0.932, \textbf{0.943} (0.943, 0.942) & 0.627, 0.601 (0.704, 0.390) \\
1-3 & 0.940, \textbf{0.940} (0.943, 0.937) & 0.922, 0.905 (0.912, 0.896) & 0.802, 0.794 (0.826, 0.747) \\
1-4 & 0.891, \textbf{0.938} (0.938, 0.937) & 0.885, 0.915 (0.917, 0.912) & 0.850, 0.871 (0.881, 0.860) \\
2-3 & 0.805, 0.771 (0.750, 0.789) & 0.824, \textbf{0.790} (0.761, 0.813) & 0.684, 0.666 (0.694, 0.633) \\
2-4 & 0.810, 0.871 (0.873, 0.869) & 0.845, \textbf{0.901} (0.899, 0.904) & 0.738, 0.733 (0.752, 0.710) \\
3-4 & 0.920, \textbf{0.926} (0.922, 0.930) & 0.939, 0.901 (0.898, 0.904) & 0.695, 0.701 (0.751, 0.627) \\
\bottomrule
\end{tabular}
\caption{The classification accuracy for different class pairs (n=8).}\label{tqnn_app_accuracy_tabel_8}
\end{table}

\subsection{QNNs with different rotation gates}

In this section, we simulate variants of TT-QNNs and SC-QNNs such that single-qubit gate operations are extended from $\{\text{RY}\}$ to $\{\text{RX, RY, RZ}\}$. Results on the binary classification between MNIST image $(0,2)$ using 8-qubit QNNs are provided in Figure~\ref{tqnn_app_multigates_figure}. The training loss converges to around 0.23 and 0.175 for the TT-QNN and the SC-QNN, respectively, and the test error converges to around 0.3 for both the TT-QNN and the SC-QNN. We remark that based on results in Figures~\ref{tqnn_02_figure_loss_8} and Figure~\ref{tqnn_02_figure_error_8}, the training loss of original TT-QNN and SC-QNN converge at around 0.15, and the test error converge at around 0.20 and 0.15, respectively. Thus, both the TT-QNN and the SC-QNN show the worse performance than original QNNs when employing the extended gate set $\{\text{RX, RY, RZ}\}$. Another result is provided in Table~\ref{tqnn_app_accuracy_tabel_multigates} which shows the difference on the training and test accuracy. 
As a conclusion, employing gate set $\{\text{RX, RY, RZ}\}$ could worse the performance of the QNNs on real-world problems, which may due to the fact that real-world data lie in the real space, while operations $\{\text{RX, RZ}\}$ introduce the imaginary term to the state.

\begin{table}[t]
\centering
\begin{tabular}{cccc}
\toprule
&  TT-QNN & SC-QNN & Random-QNN \\
Class pair & training, test (F1-0, F1-1) & training, test (F1-0, F1-1) & training, test (F1-0, F1-1) \\
\midrule
0-1 & 0.965, 0.959 (0.958, 0.960) & 0.959, \textbf{0.979} (0.979, 0.979) & 0.922, 0.920 (0.915, 0.925) \\
0-2 & 0.865, 0.844 (0.852, 0.834) & 0.871, \textbf{0.859} (0.859, 0.858) & 0.804, 0.738 (0.751, 0.722) \\
0-3 & 0.870, 0.835 (0.810, 0.854) & 0.871, \textbf{0.859} (0.857, 0.860) & 0.654, 0.659 (0.680, 0.635) \\
0-4 & 0.945, \textbf{0.938} (0.938, 0.938) & 0.938, 0.925 (0.925, 0.925) & 0.839, 0.779 (0.796, 0.759) \\
1-2 & 0.915, 0.929 (0.930, 0.927) & 0.930, \textbf{0.965} (0.965, 0.965) & 0.693, 0.700 (0.728, 0.666) \\
1-3 & 0.944, \textbf{0.938} (0.941, 0.934) & 0.899, 0.881 (0.879, 0.884) & 0.759, 0.765 (0.773, 0.756) \\
1-4 & 0.844, 0.821 (0.837, 0.802) & 0.854, \textbf{0.871} (0.879, 0.862) & 0.726, 0.705 (0.737, 0.664) \\
2-3 & 0.829, 0.814 (0.791, 0.832) & 0.855, \textbf{0.820} (0.810, 0.829) & 0.754, 0.709 (0.743, 0.664) \\
2-4 & 0.875, 0.931 (0.928, 0.934) & 0.889, \textbf{0.935} (0.933, 0.937) & 0.649, 0.645 (0.728, 0.487) \\
3-4 & 0.940, 0.931 (0.930, 0.933) & 0.944, \textbf{0.944} (0.943, 0.945) & 0.876, 0.819 (0.839, 0.793) \\
\bottomrule
\end{tabular}
\caption{The classification accuracy for different class pairs (n=10).}\label{tqnn_app_accuracy_tabel_10}
\end{table}
\begin{table}[h]
\centering
\begin{tabular}{cccc}
\toprule
TT-QNN (\{RY\}) & TT-QNN (\{RX, RY, RZ\}) & SC-QNN (\{RY\}) & SC-QNN (\{RX, RY, RZ\}) \\
training, test & training, test & training, test & training, test \\
\midrule
0.850, 0.812 & 0.656, 0.699 & 0.840, 0.850 & 0.785, 0.731 \\
\bottomrule
\end{tabular}
\caption{The training and test accuracy of TT-QNNs and SC-QNNs on the MNIST $(0,2)$ classification for different single-qubit gate settings.}\label{tqnn_app_accuracy_tabel_multigates}
\end{table}
\begin{figure}[H]
\centering
\subfigure[]{
\centering
\includegraphics[width=0.45\linewidth]{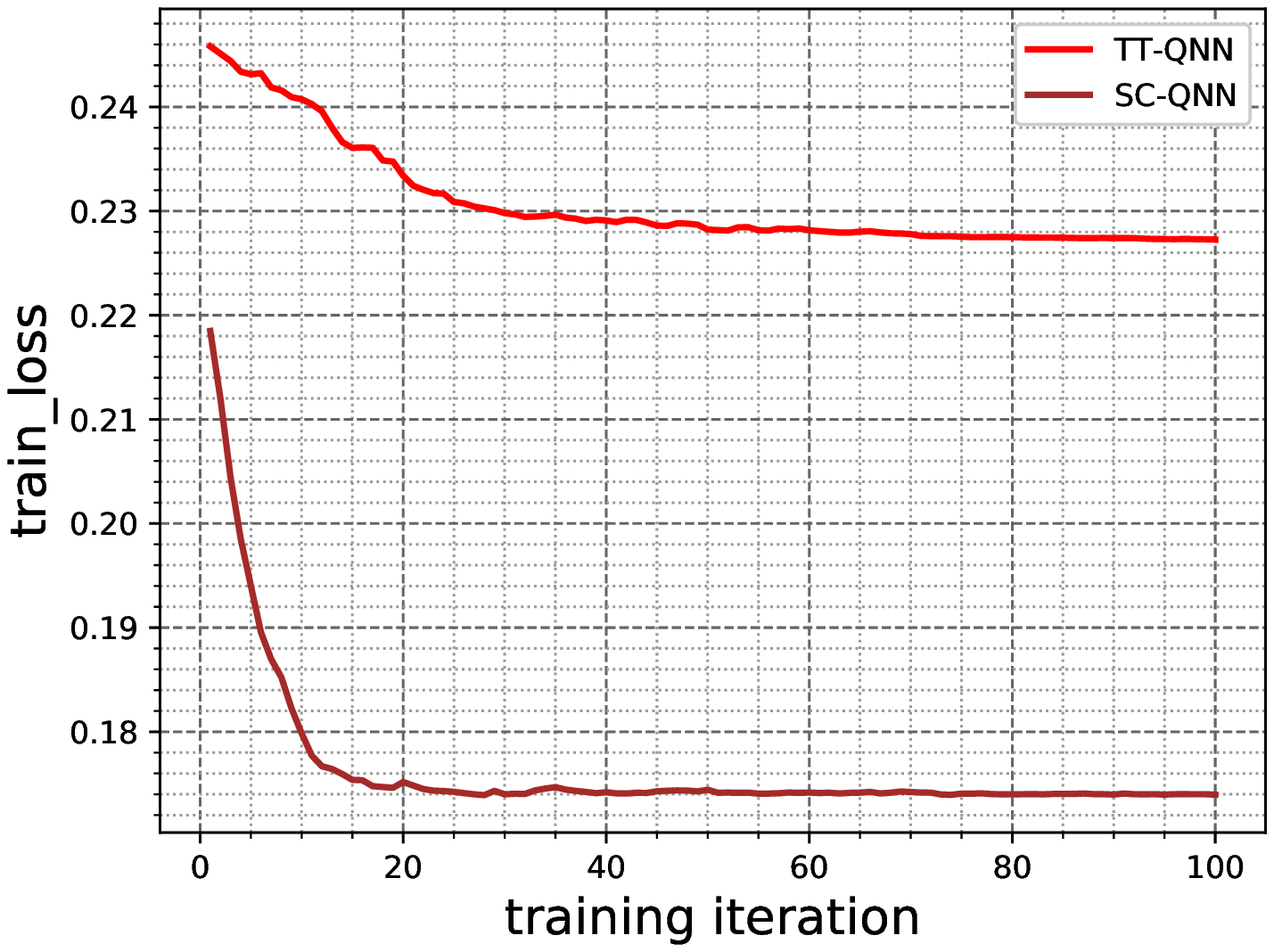}
\label{tqnn_app_multigates_train_loss}
}
\subfigure[]{
\centering
\includegraphics[width=0.45\linewidth]{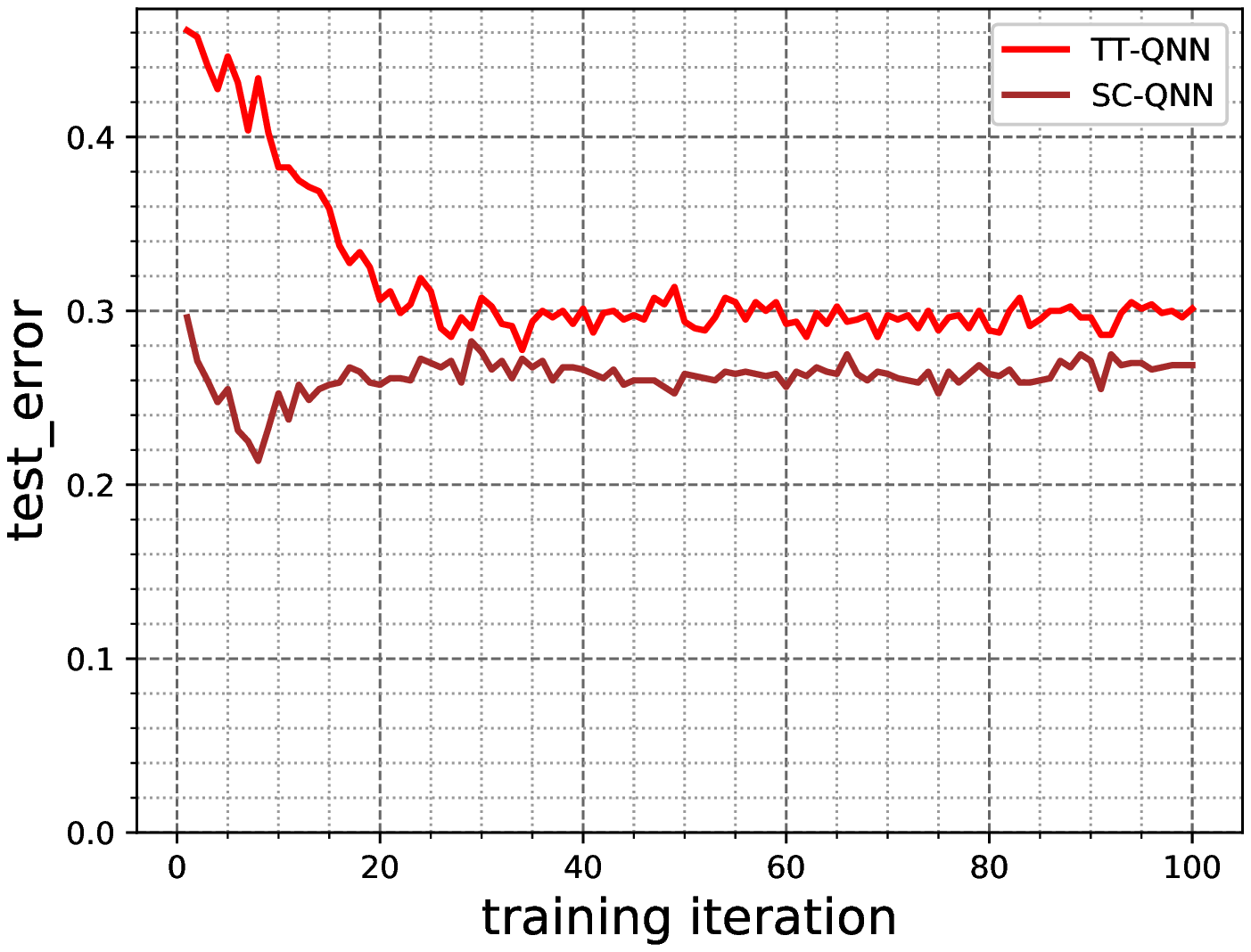}
\label{tqnn_app_multigates_test_error}
}
\centering
\caption{The training of binary classification task on the MNIST image bettwen classes $(0,2)$ using 8-qubit QNNs with the extended gate operation $\{\text{RX, RY, RZ}\}$. The training loss and the test error during the training iteration are illustrated in Figures~\ref{tqnn_app_multigates_train_loss}, \ref{tqnn_app_multigates_test_error}, respectively.  }
\label{tqnn_app_multigates_figure}
\end{figure}

\section{Notes about the Unitary $2$-design}
\label{tqnn_app_2_design}
In this section, we introduce the notion of the unitary $2$-design. Consider the finite gate set $S=\{G_i\}_{i=1}^{|S|}$ in the $d$-dimensional Hilbert space. We denote $U(d)$ as the unitary gate group with the dimension $d$. We denote $P_{t,t}(G)$ as the polynomial function which has the degree at most $t$ on the matrix elements of $G$ and at most $t$ on the matrix elements of $G^{\dag}$. Then, we could say the set $S$ to be the unitary $t$-design if and only if for every function $P_{t,t}(\cdot)$, Eq.~(\ref{tqnn_app_2_design_definition}) holds:
\begin{equation}\label{tqnn_app_2_design_definition}
    \frac{1}{|S|} \sum_{G \in S} P_{t,t}(G) = \int_{U(d)} d \mu(G) P_{t,t} (G),
\end{equation}
where $d\mu(\cdot)$ denotes the Haar distribution. The Haar distribution $d\mu (\cdot)$ is defined that for any function $f$ and any matrix $K \in U(d)$,
\begin{equation*}
    \int_{U(d)} d \mu (G)f(G) = \int_{U(d)} d \mu(G) f(KG) = \int_{U(d)} d \mu(G) f(GK).
\end{equation*}
The form in the right side of (\ref{tqnn_app_2_design_definition}) can be viewed as the average or the expectation of the function $P_{t,t}(G)$. We remark that only the parameterized  gates $RY=e^{-i \theta \sigma_2}$ could not form a universal gate set even in the single-qubit space $U(2)$, thus quantum circuits employing parameterized $RY$ gates could not form the $2$-design. This is only a simple introduction about the unitary $2$-design, and we refer readers to \cite{puchala2011symbolic} and \cite{cerezo2020cost} for more detail.

\section{Technical Lemmas}
\label{tqnn_app_technical_lemmas}
In this section we provide some technical lemmas.

\begin{lemma}\label{ttn_lemma_CNOT}
Let $\CNOT = \sigma_0 \otimes |0\>\<0| + \sigma_1 \otimes |1\>\<1|$. Then 
\begin{align*}
\CNOT (\sigma_j \otimes \sigma_k) \CNOT^\dag =&{} (\delta_{j0}+\delta_{j1}) (\delta_{k0} + \delta_{k3}) \sigma_j \otimes \sigma_k + (\delta_{j0}+\delta_{j1}) (\delta_{k1} + \delta_{k2}) \sigma_j \sigma_1 \otimes \sigma_k \\
&+{} (\delta_{j2}+\delta_{j3}) (\delta_{k0}+\delta_{k3}) \sigma_j \otimes \sigma_k \sigma_3 - (\delta_{j2}+ \delta_{j3}) (\delta_{k1} + \delta_{k2}) \sigma_j \sigma_1 \otimes \sigma_k \sigma_3.
\end{align*}
Further for the case $\sigma_k = \sigma_0$,
\begin{equation*}
\CNOT (\sigma_j \otimes \sigma_0) \CNOT^\dag = (\delta_{j0}+\delta_{j1}) \sigma_j \otimes \sigma_0 + (\delta_{j2}+\delta_{j3}) \sigma_j \otimes \sigma_3.
\end{equation*}
\end{lemma}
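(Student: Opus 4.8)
The plan is to exploit the fact that conjugation by a unitary is an algebra homomorphism, which collapses the sixteen cases $(j,k)$ into two one-sided computations. First I would note that the $\CNOT$ of the lemma is Hermitian and unitary, so $\CNOT^\dag=\CNOT$, and that $\CNOT(AB)\CNOT^\dag=(\CNOT A\,\CNOT^\dag)(\CNOT B\,\CNOT^\dag)$ for all operators $A,B$. Factoring $\sigma_j\otimes\sigma_k=(\sigma_j\otimes\sigma_0)(\sigma_0\otimes\sigma_k)$, it then suffices to evaluate the conjugation on the two ``generators'' $\sigma_j\otimes\sigma_0$ (a Pauli on the target factor) and $\sigma_0\otimes\sigma_k$ (a Pauli on the control factor).

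Second, I would compute these two single-sided actions directly from $\CNOT=\sigma_0\otimes|0\>\<0|+\sigma_1\otimes|1\>\<1|$, using the projector orthogonality $|0\>\<0|\,|1\>\<1|=0$ together with the Pauli multiplication rules (equivalently, the standard Clifford propagation rules for this control-on-second-qubit gate). A short calculation gives
\begin{align*}
\CNOT(\sigma_j\otimes\sigma_0)\CNOT^\dag &= (\delta_{j0}+\delta_{j1})\,\sigma_j\otimes\sigma_0 + (\delta_{j2}+\delta_{j3})\,\sigma_j\otimes\sigma_3, \\
\CNOT(\sigma_0\otimes\sigma_k)\CNOT^\dag &= (\delta_{k0}+\delta_{k3})\,\sigma_0\otimes\sigma_k + (\delta_{k1}+\delta_{k2})\,\sigma_1\otimes\sigma_k,
\end{align*}
the first of which is already the stated $\sigma_k=\sigma_0$ special case.

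Third, I would multiply these two expressions and use $(\sigma_p\otimes\sigma_q)(\sigma_r\otimes\sigma_s)=\sigma_p\sigma_r\otimes\sigma_q\sigma_s$. Abbreviating $a=\delta_{j0}+\delta_{j1}$, $b=\delta_{j2}+\delta_{j3}$, $c=\delta_{k0}+\delta_{k3}$, $d=\delta_{k1}+\delta_{k2}$, the product expands into the four terms $ac\,\sigma_j\otimes\sigma_k + ad\,\sigma_j\sigma_1\otimes\sigma_k + bc\,\sigma_j\otimes\sigma_3\sigma_k + bd\,\sigma_j\sigma_1\otimes\sigma_3\sigma_k$. Three of these already match the claimed identity, and only the reordering of $\sigma_3\sigma_k$ into $\sigma_k\sigma_3$ in the last two terms remains.

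The one delicate point — and the only place a sign can be lost — is exactly this reordering. Since $\sigma_3$ commutes with $\sigma_k$ for $k\in\{0,3\}$ but anticommutes for $k\in\{1,2\}$, and since $c=\delta_{k0}+\delta_{k3}$ is supported precisely on $\{0,3\}$ while $d=\delta_{k1}+\delta_{k2}$ is supported precisely on $\{1,2\}$, the substitution $\sigma_3\sigma_k\mapsto\sigma_k\sigma_3$ is sign-free in the $bc$-term and produces a single factor $-1$ in the $bd$-term. This reproduces the minus sign on the final term of the lemma, and matching coefficients term-by-term completes the proof. I expect no conceptual obstacle here; the work is entirely the careful sign bookkeeping in this last step, which the support structure of the Kronecker deltas makes automatic.
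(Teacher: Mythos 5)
Your proof is correct, but it takes a genuinely different route from the paper's. The paper proves the identity by direct expansion: it substitutes $|0\>\<0|=\tfrac{1}{2}(\sigma_0+\sigma_3)$ and $|1\>\<1|=\tfrac{1}{2}(\sigma_0-\sigma_3)$ into both copies of $\CNOT$, multiplies out $\CNOT(\sigma_j\otimes\sigma_k)\CNOT^\dag$ into sixteen terms, and then collects them with Kronecker-delta bookkeeping. You instead exploit the fact that conjugation by a unitary is an algebra homomorphism: you factor $\sigma_j\otimes\sigma_k=(\sigma_j\otimes\sigma_0)(\sigma_0\otimes\sigma_k)$, conjugate each factor separately (both of your one-sided formulas are correct --- note this gate has its control on the \emph{second} tensor factor, so $\sigma_0\otimes\sigma_k$ acquires a $\sigma_1$ on the first factor exactly when $k\in\{1,2\}$), and then multiply, isolating the only possible sign in the single commutation of $\sigma_3$ past $\sigma_k$, which is free on the support of $\delta_{k0}+\delta_{k3}$ and contributes $-1$ on the support of $\delta_{k1}+\delta_{k2}$. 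Both arguments are complete; yours is more modular and structural --- the two generator computations are nearly immediate, the sign is confined to one visible step, and the identical factorization yields the companion Lemma~\ref{ttn_lemma_CZ} for the CZ gate with no extra work --- while the paper's version is self-contained brute force that never invokes the homomorphism property, at the cost of tracking sixteen terms. One cosmetic slip: you write that ``three of these already match,'' but only the $ac$- and $ad$-terms match verbatim; the $bc$-term also requires the (sign-free) reordering, which your subsequent sentence does handle correctly, so nothing is actually missing.
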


\begin{proof}
\begin{align*}
&{} 	\CNOT (\sigma_j \otimes \sigma_k) \CNOT^\dag \\
=&{} \left (  \sigma_0 \otimes |0\>\<0| + \sigma_1 \otimes |1\>\<1| \right ) (\sigma_j \otimes \sigma_k) \left (  \sigma_0 \otimes |0\>\<0| + \sigma_1 \otimes |1\>\<1| \right ) \\
=&{} \left ( \sigma_0 \otimes \frac{\sigma_0 + \sigma_3}{2} + \sigma_1 \otimes \frac{\sigma_0 - \sigma_3}{2}\right ) (\sigma_j \otimes \sigma_k) \left ( \sigma_0 \otimes \frac{\sigma_0 + \sigma_3}{2} + \sigma_1 \otimes \frac{\sigma_0 - \sigma_3}{2} \right ) \\
=&{} \frac{1}{4} \left( \sigma_j \otimes \sigma_k + \sigma_1 \sigma_j \sigma_1 \otimes \sigma_k + \sigma_j \otimes \sigma_3 \sigma_k \sigma_3 + \sigma_1 \sigma_j \sigma_1 \otimes \sigma_3 \sigma_k \sigma_3 \right) \\
&{}+ \frac{1}{4} \left( \sigma_j \sigma_1 \otimes \sigma_k + \sigma_1 \sigma_j \otimes \sigma_k - \sigma_j \sigma_1 \otimes \sigma_3 \sigma_k \sigma_3 - \sigma_1 \sigma_j \otimes \sigma_3 \sigma_k \sigma_3 \right) \\
&{}+ \frac{1}{4} \left( \sigma_j \otimes \sigma_k \sigma_3 + \sigma_j \otimes \sigma_3 \sigma_k - \sigma_1 \sigma_j \sigma_1 \otimes \sigma_k \sigma_3 - \sigma_1 \sigma_j \sigma_1 \otimes \sigma_3 \sigma_k \right) \\
&{}+ \frac{1}{4} \left( \sigma_j \sigma_1 \otimes \sigma_3 \sigma_k - \sigma_j \sigma_1 \otimes \sigma_k \sigma_3 + \sigma_1 \sigma_j \otimes \sigma_k \sigma_3 - \sigma_1 \sigma_j \otimes \sigma_3 \sigma_k \right) \\
=&{} (\delta_{j0}+\delta_{j1}) (\delta_{k0} + \delta_{k3}) \sigma_j \otimes \sigma_k + (\delta_{j0}+\delta_{j1}) (\delta_{k1} + \delta_{k2}) \sigma_j \sigma_1 \otimes \sigma_k \\
&+{} (\delta_{j2}+\delta_{j3}) (\delta_{k0}+\delta_{k3}) \sigma_j \otimes \sigma_k \sigma_3 - (\delta_{j2}+ \delta_{j3}) (\delta_{k1} + \delta_{k2}) \sigma_j \sigma_1 \otimes \sigma_k \sigma_3.
\end{align*}

For the case $\sigma_k = \sigma_0$, we have,
\begin{equation*}
\CNOT (\sigma_j \otimes \sigma_0) \CNOT^\dag = (\delta_{j0}+\delta_{j1}) \sigma_j \otimes \sigma_0 + (\delta_{j2}+\delta_{j3}) \sigma_j \otimes \sigma_3.
\end{equation*}

\end{proof}

\begin{lemma}\label{ttn_lemma_CZ}
Let $\CZ = \sigma_0 \otimes |0\>\<0| + \sigma_3 \otimes |1\>\<1|$. Then 
\begin{align*}
\CZ (\sigma_j \otimes \sigma_k) \CZ^\dag =&{} (\delta_{j0}+\delta_{j3}) (\delta_{k0} + \delta_{k3}) \sigma_j \otimes \sigma_k + (\delta_{j0}+\delta_{j3}) (\delta_{k1} + \delta_{k2}) \sigma_j \sigma_3 \otimes \sigma_k \\
&+{} (\delta_{j1}+\delta_{j2}) (\delta_{k0}+\delta_{k3}) \sigma_j \otimes \sigma_k \sigma_3 - (\delta_{j1}+ \delta_{j2}) (\delta_{k1} + \delta_{k2}) \sigma_j \sigma_3 \otimes \sigma_k \sigma_3.
\end{align*}
Further for the case $\sigma_k = \sigma_0$,
\begin{equation*}
\CZ (\sigma_j \otimes \sigma_0) \CZ^\dag = (\delta_{j0}+\delta_{j3}) \sigma_j \otimes \sigma_0 + (\delta_{j1}+\delta_{j2}) \sigma_j \otimes \sigma_3.
\end{equation*}
\end{lemma}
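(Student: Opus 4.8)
The plan is to mirror the computation already carried out for $\CNOT$ in Lemma~\ref{ttn_lemma_CNOT}, since $\CZ$ differs only in that it conditionally applies $\sigma_3$ rather than $\sigma_1$ to the first register. First I would rewrite the control projectors in the Pauli basis via $|0\>\<0| = \frac{\sigma_0 + \sigma_3}{2}$ and $|1\>\<1| = \frac{\sigma_0 - \sigma_3}{2}$, which gives
\begin{equation*}
\CZ = \sigma_0 \otimes \frac{\sigma_0 + \sigma_3}{2} + \sigma_3 \otimes \frac{\sigma_0 - \sigma_3}{2}.
\end{equation*}
Both summands are Hermitian, so $\CZ^\dag = \CZ$ and the conjugation $\CZ(\sigma_j \otimes \sigma_k)\CZ^\dag$ becomes a symmetric sandwich. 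Expanding it produces a sum of sixteen terms grouped by first- and second-register content, exactly parallel to the $\CNOT$ expansion but with the conditional $\sigma_1$ on the first register everywhere replaced by $\sigma_3$.

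The collection step rests on the Pauli conjugation identities $\sigma_3 \sigma_j \sigma_3 = +\sigma_j$ for $j \in \{0,3\}$ and $\sigma_3 \sigma_j \sigma_3 = -\sigma_j$ for $j \in \{1,2\}$, and likewise for the index $k$ on the second register. Encoding these two dichotomies with Kronecker deltas --- $(\delta_{j0}+\delta_{j3})$ versus $(\delta_{j1}+\delta_{j2})$ on the first factor, and $(\delta_{k0}+\delta_{k3})$ versus $(\delta_{k1}+\delta_{k2})$ on the second --- collapses the sixteen terms into the four stated summands: the Paulis commuting with $\sigma_3$ leave a clean $\sigma_j$ (resp.\ $\sigma_k$), while those anticommuting with $\sigma_3$ produce the extra factor $\sigma_j\sigma_3$ (resp.\ $\sigma_k\sigma_3$). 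The lone overall minus sign survives only in the doubly-anticommuting channel, as the product of two sign flips. Specializing $\sigma_k = \sigma_0$ annihilates the two terms carrying a $(\delta_{k1}+\delta_{k2})$ factor and reproduces the stated special case.

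The computation carries no conceptual difficulty; the only real obstacle is the sign bookkeeping when tracking which of $\{\sigma_j, \sigma_j\sigma_3\}$ on the first register and $\{\sigma_k, \sigma_k\sigma_3\}$ on the second each cross term contributes, and in particular confirming that exactly one residual minus sign remains. As an independent sanity check I would use the relation $\CZ = (H \otimes \sigma_0)\,\CNOT\,(H \otimes \sigma_0)$, where $H$ is the Hadamard gate with $H\sigma_0 H = \sigma_0$, $H\sigma_1 H = \sigma_3$, $H\sigma_3 H = \sigma_1$; substituting this into Lemma~\ref{ttn_lemma_CNOT} transforms the $(\delta_{j0}+\delta_{j1})$/$(\delta_{j2}+\delta_{j3})$ pattern of the $\CNOT$ result into the $(\delta_{j0}+\delta_{j3})$/$(\delta_{j1}+\delta_{j2})$ pattern above, confirming that the $\sigma_1 \leftrightarrow \sigma_3$ exchange is precisely what distinguishes the $\CZ$ identity from the $\CNOT$ one.
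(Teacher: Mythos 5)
Your proposal is correct and follows essentially the same route as the paper: expand the projectors via $|0\>\<0| = \frac{\sigma_0+\sigma_3}{2}$, $|1\>\<1| = \frac{\sigma_0-\sigma_3}{2}$, multiply out the sixteen terms, and collapse them using the (anti)commutation of each Pauli with $\sigma_3$, encoded by the Kronecker deltas. The Hadamard-conjugation sanity check is a nice extra (not in the paper), though note $H\sigma_2 H = -\sigma_2$, whose sign cancels only because $H$ acts on both sides of the conjugation.
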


\begin{proof}
\begin{align*}
&{} 	\CZ (\sigma_j \otimes \sigma_k) \CZ^\dag \\
=&{} \left (  \sigma_0 \otimes |0\>\<0| + \sigma_3 \otimes |1\>\<1| \right ) (\sigma_j \otimes \sigma_k) \left (  \sigma_0 \otimes |0\>\<0| + \sigma_3 \otimes |1\>\<1| \right ) \\
=&{} \left ( \sigma_0 \otimes \frac{\sigma_0 + \sigma_3}{2} + \sigma_3 \otimes \frac{\sigma_0 - \sigma_3}{2}\right ) (\sigma_j \otimes \sigma_k) \left ( \sigma_0 \otimes \frac{\sigma_0 + \sigma_3}{2} + \sigma_3 \otimes \frac{\sigma_0 - \sigma_3}{2} \right ) \\
=&{} \frac{1}{4} \left( \sigma_j \otimes \sigma_k + \sigma_3 \sigma_j \sigma_3 \otimes \sigma_k + \sigma_j \otimes \sigma_3 \sigma_k \sigma_3 + \sigma_3 \sigma_j \sigma_3 \otimes \sigma_3 \sigma_k \sigma_3 \right) \\
&{}+ \frac{1}{4} \left( \sigma_j \sigma_3 \otimes \sigma_k + \sigma_3 \sigma_j \otimes \sigma_k - \sigma_j \sigma_3 \otimes \sigma_3 \sigma_k \sigma_3 - \sigma_3 \sigma_j \otimes \sigma_3 \sigma_k \sigma_3 \right) \\
&{}+ \frac{1}{4} \left( \sigma_j \otimes \sigma_k \sigma_3 + \sigma_j \otimes \sigma_3 \sigma_k - \sigma_3 \sigma_j \sigma_3 \otimes \sigma_k \sigma_3 - \sigma_3 \sigma_j \sigma_3 \otimes \sigma_3 \sigma_k \right) \\
&{}+ \frac{1}{4} \left( \sigma_j \sigma_3 \otimes \sigma_3 \sigma_k - \sigma_j \sigma_3 \otimes \sigma_k \sigma_3 + \sigma_3 \sigma_j \otimes \sigma_k \sigma_3 - \sigma_3 \sigma_j \otimes \sigma_3 \sigma_k \right) \\
=&{} (\delta_{j0}+\delta_{j3}) (\delta_{k0} + \delta_{k3}) \sigma_j \otimes \sigma_k + (\delta_{j0}+\delta_{j3}) (\delta_{k1} + \delta_{k2}) \sigma_j \sigma_3 \otimes \sigma_k \\
&+{} (\delta_{j1}+\delta_{j2}) (\delta_{k0}+\delta_{k3}) \sigma_j \otimes \sigma_k \sigma_3 - (\delta_{j1}+ \delta_{j2}) (\delta_{k1} + \delta_{k2}) \sigma_j \sigma_3 \otimes \sigma_k \sigma_3.
\end{align*}

For the case $\sigma_k = \sigma_0$, we have,
\begin{equation*}
\CZ (\sigma_j \otimes \sigma_0) \CZ^\dag = (\delta_{j0}+\delta_{j3}) \sigma_j \otimes \sigma_0 + (\delta_{j1}+\delta_{j2}) \sigma_j \otimes \sigma_3.
\end{equation*}

\end{proof}

\begin{lemma}\label{ttn_lemma_wawbwcwd}
Let $\theta$ be a variable with uniform distribution in $[0,2\pi]$. Let $A,C: \mathcal{H}_{2} \rightarrow \mathcal{H}_{2}$ be arbitrary linear operators and let $B= D=\sigma_j$ be arbitrary Pauli matrices, where $j \in \{0,1,2,3\}$. Then
\begin{align}
& \E_{\theta} \Tr[W A W^\dag B] \Tr[W C W^\dag D] = \frac{1}{2\pi}\int_{0}^{2\pi} \Tr[W A W^\dag B] \Tr[W C W^\dag D] d\theta \\
=&{} \left[ \frac{1}{2} + \frac{\delta_{j0} + \delta_{jk} }{2} \right] \Tr[AB] \Tr[CD] +{} \left[ -\frac{1}{2} + \frac{\delta_{j0} + \delta_{jk}}{2} \right] \Tr[AB\sigma_k] \Tr[CD \sigma_k], \label{ttn_lemma_wawbwcwd_2}
\end{align}
where $W=e^{-i\theta \sigma_k}$ and $k \in \{1, 2, 3\}$. 
\end{lemma}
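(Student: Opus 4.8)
The plan is to reduce everything to the conjugation of a single Pauli matrix by $W$ and then integrate elementary trigonometric functions of $\theta$. First I would use $\sigma_k^2 = \sigma_0$ for $k\in\{1,2,3\}$ to write $W = e^{-i\theta\sigma_k} = \cos\theta\,\sigma_0 - i\sin\theta\,\sigma_k$, and correspondingly $W^\dag = \cos\theta\,\sigma_0 + i\sin\theta\,\sigma_k$. By cyclicity of the trace, $\Tr[WAW^\dag B] = \Tr[A\,W^\dag B W]$ and likewise for the $C,D$ factor, so the entire integrand is governed by the conjugated Pauli $W^\dag \sigma_j W = e^{i\theta\sigma_k}\sigma_j e^{-i\theta\sigma_k}$.

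Next I would split into two cases according to whether $\sigma_j$ commutes or anticommutes with $\sigma_k$. If $j\in\{0,k\}$ then $[\sigma_j,\sigma_k]=0$ and $W^\dag\sigma_j W = \sigma_j$, so the integrand is $\theta$-independent and equals $\Tr[A\sigma_j]\Tr[C\sigma_j]=\Tr[AB]\Tr[CD]$. If instead $j\notin\{0,k\}$ then $\sigma_j\sigma_k=-\sigma_k\sigma_j$, which lets me push the exponential through, $e^{i\theta\sigma_k}\sigma_j = \sigma_j e^{-i\theta\sigma_k}$, giving $W^\dag\sigma_j W = \sigma_j e^{-2i\theta\sigma_k} = \cos 2\theta\,\sigma_j - i\sin 2\theta\,\sigma_j\sigma_k$.

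In the anticommuting case I would substitute this into both trace factors and expand the product, producing terms with coefficients $\cos^2 2\theta$, $\sin^2 2\theta$, and the cross term $\sin 2\theta\cos 2\theta$. Integrating over $[0,2\pi]$ with $\frac{1}{2\pi}\int_0^{2\pi}\cos^2 2\theta\,d\theta = \frac{1}{2\pi}\int_0^{2\pi}\sin^2 2\theta\,d\theta = \tfrac12$ and $\frac{1}{2\pi}\int_0^{2\pi}\sin 2\theta\cos 2\theta\,d\theta = 0$ kills the cross terms and yields $\tfrac12\Tr[A\sigma_j]\Tr[C\sigma_j] - \tfrac12\Tr[A\sigma_j\sigma_k]\Tr[C\sigma_j\sigma_k]$. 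Here I would crucially invoke the hypothesis $B=D=\sigma_j$ to rewrite $\Tr[A\sigma_j\sigma_k]=\Tr[AB\sigma_k]$ and $\Tr[C\sigma_j\sigma_k]=\Tr[CD\sigma_k]$, matching the claimed second term.

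Finally I would verify that the two cases collapse into the single stated formula: the coefficient $\frac{1}{2}+\frac{\delta_{j0}+\delta_{jk}}{2}$ equals $1$ exactly when $j\in\{0,k\}$ and $\tfrac12$ otherwise, while $-\frac{1}{2}+\frac{\delta_{j0}+\delta_{jk}}{2}$ vanishes when $j\in\{0,k\}$ and equals $-\tfrac12$ otherwise, reproducing both computed cases. The only real obstacle is bookkeeping rather than anything conceptual: keeping the commuting/anticommuting dichotomy straight and confirming that $\delta_{j0}+\delta_{jk}$ (which is $1$ or $0$ but never $2$, since $k\neq 0$) is precisely the indicator that unifies the two regimes. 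The trigonometric integrals themselves are immediate.
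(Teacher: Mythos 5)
Your proof is correct, and it takes a genuinely different route from the paper's. The paper proves this lemma by brute force: it expands $W = I\cos\theta - i\sigma_k\sin\theta$ inside both trace factors, multiplies out the resulting sixteen terms, integrates the quartic monomials $\cos^4\theta$, $\sin^4\theta$, $\cos^2\theta\sin^2\theta$, $\cos^3\theta\sin\theta$, $\cos\theta\sin^3\theta$, and then compresses the answer into delta-symbol form via identities such as $\Tr[\sigma_k A\sigma_k B] = [2(\delta_{j0}+\delta_{jk})-1]\Tr[AB]$ and $\Tr[\sigma_k AB]-\Tr[A\sigma_k B] = 2(1-\delta_{j0}-\delta_{jk})\Tr[AB\sigma_k]$. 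You instead use cyclicity to transfer the conjugation onto the observable, $\Tr[WAW^\dag B]=\Tr[A\,W^\dag\sigma_j W]$, and then split on whether $\sigma_j$ commutes or anticommutes with $\sigma_k$: in the commuting case ($j\in\{0,k\}$) the integrand is constant, and in the anticommuting case the push-through identity gives $W^\dag\sigma_j W=\sigma_j e^{-2i\theta\sigma_k}=\cos 2\theta\,\sigma_j - i\sin 2\theta\,\sigma_j\sigma_k$, so only quadratic integrals of double-angle functions are needed and the cross terms vanish. The sign check $(-i)^2=-1$ producing the $-\tfrac12\Tr[AB\sigma_k]\Tr[CD\sigma_k]$ term is right, and your observation that $\delta_{j0}+\delta_{jk}$ is exactly the indicator of the commuting case (never $2$, since $k\neq 0$) correctly unifies the two cases into the stated formula. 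What your approach buys is brevity and structural transparency — the answer manifestly depends only on the (anti)commutation of $\sigma_j$ with $\sigma_k$ — and it would also streamline the companion Lemma on $\Tr[GAW^\dag B]$ via $G=-i\sigma_k W$; what the paper's approach buys is a single mechanical template that requires no case analysis and applies verbatim to both lemmas.
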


\begin{proof}
First we simply replace the term $W=e^{-i \theta \sigma_k}= I \cos \theta -i \sigma_k \sin \theta$.
\begin{align*}
&{} \frac{1}{2\pi}\int_{0}^{2\pi} d\theta \Tr[W A W^\dag B] \Tr[W C W^\dag D]	\\
=&{} \frac{1}{2\pi} \int_{0}^{2\pi} d\theta \Tr[(I\cos \theta -i \sigma_k \sin \theta) A (I\cos \theta +i \sigma_k \sin \theta) B]  \cdot \Tr[(I\cos \theta -i \sigma_k \sin \theta) C (I\cos \theta +i \sigma_k \sin \theta) D] \\
=&{} \frac{1}{2\pi} \int_{0}^{2\pi} d\theta \left\{ \cos^2 \theta \Tr[AB] - i\sin \theta \cos \theta \Tr[\sigma_k AB] + i \sin \theta \cos \theta \Tr[A\sigma_k B] + \sin^2 \theta \Tr[\sigma_k A \sigma_k B]\right\}  \\
& \cdot {} \left\{ \cos^2 \theta \Tr[CD] - i\sin \theta \cos \theta \Tr[\sigma_k CD] + i \sin \theta \cos \theta \Tr[C\sigma_k D] + \sin^2 \theta \Tr[\sigma_k C \sigma_k D]\right\}.
\end{align*}
We remark that:
\begin{align}
\frac{1}{2\pi} \int_{0}^{2\pi} d\theta \cos^4 \theta &= \frac{3}{8}, \label{cos_4_interation}\\
\frac{1}{2\pi} \int_{0}^{2\pi} d\theta \sin^4 \theta &= \frac{3}{8}, \\
\frac{1}{2\pi} \int_{0}^{2\pi} d\theta \cos^2 \theta \sin^2 \theta &= \frac{1}{8}, \\
\frac{1}{2\pi} \int_{0}^{2\pi} d\theta \cos^3 \theta \sin \theta &= 0, \\
\frac{1}{2\pi} \int_{0}^{2\pi} d\theta \cos \theta \sin^3 \theta &= 0. \label{cos_1_sin_3_interation}
\end{align}
Then
%The above equation could be written as follows,
\begin{align*}
\text{The integration term} =&{} \frac{3}{8} \Tr[AB] \Tr[CD] + \frac{3}{8} \Tr[\sigma_k A \sigma_k B] \Tr[ \sigma_k C \sigma_k D] \\
+&{} \frac{1}{8} \Tr[AB] \Tr[\sigma_k C \sigma_k D] + \frac{1}{8} \Tr[\sigma_k A \sigma_k B] \Tr[CD] \\
-&{} \frac{1}{8} \Tr[\sigma_k AB] \Tr[\sigma_k CD] - \frac{1}{8} \Tr[A\sigma_k B] \Tr[C \sigma_k D] \\
+&{} \frac{1}{8} \Tr[\sigma_k AB] \Tr[C\sigma_k D] + \frac{1}{8} \Tr[A\sigma_k B] \Tr[\sigma_k CD] \\
=&{} \Tr[AB] \Tr[CD] \left[ \frac{1}{2} + \frac{\delta_{j0} + \delta_{jk} }{2} \right] \\
+&{} \Tr[AB\sigma_k] \Tr[CD \sigma_k] \left[ -\frac{1}{2} + \frac{ \delta_{j0} + \delta_{jk}}{2} \right] .
\end{align*}
The last equation is derived by noticing that for $B=\sigma_j$,
\begin{align*}
\Tr[\sigma_k A \sigma_k B] &=  \Tr[A \sigma_k \sigma_j \sigma_k] \\
&=  [2(\delta_{j0}+\delta_{jk})-1]\cdot  \Tr[A\sigma_j] \\
&= [2(\delta_{j0}+\delta_{jk})-1] \cdot  \Tr[AB],\\
\Tr[\sigma_k A B] - \Tr[A \sigma_k B] &= \Tr[A \sigma_j \sigma_k] - \Tr[A \sigma_k \sigma_j] \\
&= 2(1-\delta_{j0} - \delta_{jk}) \cdot \Tr[A \sigma_j \sigma_k] \\
&= 2(1-\delta_{j0} - \delta_{jk}) \cdot \Tr[A B \sigma_k],
\end{align*}
while similar forms hold for $D=\sigma_j$,
\begin{align*}
\Tr[\sigma_k C \sigma_k D] &=  \Tr[C \sigma_k \sigma_j \sigma_k] \\
&=  [2(\delta_{j0}+\delta_{jk})-1]\cdot  \Tr[C\sigma_j] \\
&= [2(\delta_{j0}+\delta_{jk})-1] \cdot  \Tr[CD],\\
\Tr[\sigma_k C D] - \Tr[C \sigma_k D] &= \Tr[C \sigma_j \sigma_k] - \Tr[C \sigma_k \sigma_j] \\
&= 2(1-\delta_{j0} - \delta_{jk}) \cdot \Tr[C \sigma_j \sigma_k] \\
&= 2(1-\delta_{j0} - \delta_{jk}) \cdot \Tr[C D \sigma_k].
\end{align*}

\end{proof}

\begin{lemma}\label{ttn_lemma_gawbgcwd}
Let $\theta$ be a variable with uniform distribution in $[0,2\pi]$. Let $A,C: \mathcal{H}_{2} \rightarrow \mathcal{H}_{2}$ be arbitrary linear operators and let $B= D=\sigma_j$ be arbitrary Pauli matrices, where $j \in \{0,1,2,3\}$. Then
\begin{align*}
& \E_{\theta} \Tr[G A W^\dag B] \Tr[G C W^\dag D] = \frac{1}{2\pi}\int_{0}^{2\pi} \Tr[G A W^\dag B] \Tr[G C W^\dag D] d\theta \\
=&{} \left[ \frac{1}{2} - \frac{ \delta_{j0} + \delta_{jk} }{2} \right] \Tr[AB] \Tr[CD] +{} \left[ -\frac{1}{2} - \frac{ \delta_{j0} + \delta_{jk}}{2} \right] \Tr[AB\sigma_k] \Tr[CD \sigma_k],\label{ttn_lemma_gawbgcwd_2}
\end{align*}
where $W=e^{-i\theta \sigma_k}$, $G = \frac{\partial W}{\partial \theta}$ and $k \in \{1, 2, 3\}$. 

\end{lemma}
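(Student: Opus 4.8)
The plan is to mirror the computation in the proof of Lemma~\ref{ttn_lemma_wawbwcwd}, since the only structural change is that the left factor $W$ is replaced by its derivative $G=\partial W/\partial\theta$, while the conjugating factor $W^\dag$ is unchanged. First I would compute $G$ explicitly. Writing $W=e^{-i\theta\sigma_k}=I\cos\theta - i\sigma_k\sin\theta$ (valid since $\sigma_k^2=I$ for $k\in\{1,2,3\}$), differentiation gives $G=-I\sin\theta - i\sigma_k\cos\theta$, equivalently $G=-i\sigma_k W$. Note that $G$ differs from $W$ precisely by interchanging $\cos\theta$ and $\sin\theta$ on the identity and Pauli components, together with a sign on the identity part; this is exactly what will later flip the sign of the $\delta_{j0}+\delta_{jk}$ contribution relative to the previous lemma.

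Next, with $W^\dag = I\cos\theta + i\sigma_k\sin\theta$, I would expand $GAW^\dag$ and take the trace against $B$, producing four terms whose trigonometric prefactors are $\sin\theta\cos\theta$, $\sin^2\theta$, $\cos^2\theta$, and $\sin\theta\cos\theta$, multiplying $\Tr[AB]$, $\Tr[A\sigma_k B]$, $\Tr[\sigma_k AB]$, and $\Tr[\sigma_k A\sigma_k B]$ respectively (carrying the appropriate factors of $-i$). The same expansion applies to $\Tr[GCW^\dag D]$ with $C,D$. Multiplying the two expansions yields sixteen cross terms; integrating each over $[0,2\pi]$ and dividing by $2\pi$, the five integral identities \eqref{cos_4_interation}--\eqref{cos_1_sin_3_interation} already established kill every term carrying an odd total power of $\cos\theta$ or $\sin\theta$, leaving exactly eight surviving terms weighted by $\tfrac38$ (from $\sin^4$ or $\cos^4$) or by $\tfrac18$ (from $\sin^2\cos^2$).

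Finally, I would collapse these eight terms using the Pauli identities already derived for $B=D=\sigma_j$: setting $p=\delta_{j0}+\delta_{jk}$, we have $\Tr[\sigma_k A\sigma_k B]=(2p-1)\Tr[AB]$, $\Tr[A\sigma_k B]=(2p-1)\Tr[AB\sigma_k]$, and $\Tr[\sigma_k AB]=\Tr[AB\sigma_k]$, with the analogous relations for $C,D$. Substituting these, grouping the surviving terms by $\Tr[AB]\Tr[CD]$ versus $\Tr[AB\sigma_k]\Tr[CD\sigma_k]$, and using $(2p-1)^2=1$ to simplify the products of cross factors, I expect the coefficients to collapse to $\tfrac12(1-p)=\tfrac12-\tfrac{\delta_{j0}+\delta_{jk}}{2}$ and $-\tfrac12(1+p)=-\tfrac12-\tfrac{\delta_{j0}+\delta_{jk}}{2}$, which is exactly the claimed formula.

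The calculation is entirely routine given the previous lemma, so there is no deep obstacle; the single place demanding care is the bookkeeping of the $-i$ factors and of the $-\sin\theta$ sign on the identity component of $G$. It is precisely this sign flip that converts the $+\tfrac{\delta_{j0}+\delta_{jk}}{2}$ of Lemma~\ref{ttn_lemma_wawbwcwd} into the $-\tfrac{\delta_{j0}+\delta_{jk}}{2}$ appearing here, and a sign error at that step would silently produce the wrong lemma.
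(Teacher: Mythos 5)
Your proposal is correct and follows essentially the same route as the paper's proof: expand $W=I\cos\theta-i\sigma_k\sin\theta$ and $G=-I\sin\theta-i\sigma_k\cos\theta$, multiply out the two traces, kill the odd-power terms with the moment identities (\ref{cos_4_interation})--(\ref{cos_1_sin_3_interation}), and collapse the eight surviving terms via the Pauli commutation relations; your per-term identities $\Tr[A\sigma_k B]=(2p-1)\Tr[AB\sigma_k]$ and $\Tr[\sigma_k AB]=\Tr[AB\sigma_k]$ are just a finer-grained version of the paper's summed identity $\Tr[\sigma_k AB]+\Tr[A\sigma_k B]=2(\delta_{j0}+\delta_{jk})\Tr[AB\sigma_k]$, and your final coefficients $\tfrac12(1-p)$ and $-\tfrac12(1+p)$ match the lemma exactly.
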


\begin{proof}

First we simply replace the term $W=e^{-i \theta \sigma_k}= I \cos \theta -i \sigma_k \sin \theta$ and $G=\frac{\partial W}{\partial \theta} = -I \sin \theta -i \sigma_k \cos \theta$.
\begin{align*}
&{} \frac{1}{2\pi}\int_{0}^{2\pi} d\theta \Tr[G A W^\dag B] \Tr[G C W^\dag D]	\\
=&{} \frac{1}{2\pi} \int_{0}^{2\pi} d\theta \Tr[(-I\sin \theta -i \sigma_k \cos \theta) A (I\cos \theta +i \sigma_k \sin \theta) B]   \Tr[(-I\sin \theta -i \sigma_k \cos \theta) C (I\cos \theta +i \sigma_k \sin \theta) D] \\
=&{} \frac{1}{2\pi} \int_{0}^{2\pi} d\theta \left\{ -\sin \theta \cos \theta \Tr[AB] - i\cos^2 \theta \Tr[\sigma_k AB] - i \sin^2 \theta  \Tr[A\sigma_k B] + \cos \theta \sin \theta \Tr[\sigma_k A \sigma_k B]\right\}  \\
& \cdot {} \left\{ -\sin \theta \cos \theta \Tr[CD] - i\cos^2 \theta \Tr[\sigma_k CD] - i \sin^2 \theta  \Tr[C\sigma_k D] + \cos \theta \sin \theta \Tr[\sigma_k C \sigma_k D]\right\}.
\end{align*}
The integration above could be simplified using equations \ref{cos_4_interation}-\ref{cos_1_sin_3_interation},
\begin{align*}
\text{The integration term} =&{} \frac{1}{8} \Tr[AB] \Tr[CD] + \frac{1}{8} \Tr[\sigma_k A \sigma_k B] \Tr[ \sigma_k C \sigma_k D] \\
-&{} \frac{1}{8} \Tr[AB] \Tr[\sigma_k C \sigma_k D] - \frac{1}{8} \Tr[\sigma_k A \sigma_k B] \Tr[CD] \\
-&{} \frac{3}{8} \Tr[\sigma_k AB] \Tr[\sigma_k CD] - \frac{3}{8} \Tr[A\sigma_k B] \Tr[C \sigma_k D] \\
-&{} \frac{1}{8} \Tr[\sigma_k AB] \Tr[C\sigma_k D] - \frac{1}{8} \Tr[A\sigma_k B] \Tr[\sigma_k CD] \\
=&{} \Tr[AB] \Tr[CD] \left[ \frac{1}{2} - \frac{ \delta_{j0} + \delta_{jk} }{2} \right] \\
+&{} \Tr[AB\sigma_k] \Tr[CD \sigma_k] \left[ -\frac{1}{2} - \frac{\delta_{j0} + \delta_{jk}}{2} \right] .
\end{align*}
The last equation is derived by noticing that for $B=\sigma_j$,
\begin{align*}
\Tr[\sigma_k A \sigma_k B] &=  \Tr[A \sigma_k \sigma_j \sigma_k] \\
&=  [2(\delta_{j0}+\delta_{jk})-1]\cdot  \Tr[A\sigma_j] \\
&= [2(\delta_{j0}+\delta_{jk})-1] \cdot  \Tr[AB],\\
\Tr[\sigma_k A B] + \Tr[A \sigma_k B] &= \Tr[A \sigma_j \sigma_k] + \Tr[A \sigma_k \sigma_j] \\
&= 2(\delta_{j0} + \delta_{jk}) \cdot \Tr[A \sigma_j \sigma_k] \\
&= 2(\delta_{j0} + \delta_{jk}) \cdot \Tr[A B \sigma_k],
\end{align*}
while similar forms hold for $D=\sigma_j$,
\begin{align*}
\Tr[\sigma_k C \sigma_k D] &=  \Tr[C \sigma_k \sigma_j \sigma_k] \\
&=  [2(\delta_{j0}+\delta_{jk})-1]\cdot  \Tr[C\sigma_j] \\
&= [2(\delta_{j0}+\delta_{jk})-1] \cdot  \Tr[CD],\\
\Tr[\sigma_k C D] + \Tr[C \sigma_k D] &= \Tr[C \sigma_j \sigma_k] - \Tr[C \sigma_k \sigma_j] \\
&= 2(\delta_{j0} + \delta_{jk}) \cdot \Tr[C \sigma_j \sigma_k] \\
&= 2(\delta_{j0} + \delta_{jk}) \cdot \Tr[C D \sigma_k].
\end{align*}
\end{proof}

\section{The Proof of Theorem~\ref{tqnn_ttn_theorem}: the TT part}
\label{tqnn_app_proof_main_theorem}
Now we begin the proof of Theorem~\ref{tqnn_ttn_theorem}. 

\begin{proof}

Firstly we remark that by Lemma~\ref{tqnn_params_shifting}, each partial derivative is calculated as
\begin{equation*}
\frac{\partial f_{\text{TT}}}{\partial \theta_j^{(k)}} = \frac{1}{2} \left( \Tr [O \cdot V_{\text{TT}}(\bm{\theta}_+) \rho_{\text{in}} V_{\text{TT}}(\bm{\theta}_+)^{\dag}] - \Tr [O \cdot V_{\text{TT}}(\bm{\theta}_-) \rho_{\text{in}} V_{\text{TT}}(\bm{\theta}_-)^{\dag}] \right),	
\end{equation*}
since the expectation of the quantum observable is bounded by $[-1,1]$, the square of the partial derivative could be easily bounded as:
\begin{equation*}
\left(\frac{\partial f_{\text{TT}}}{\partial \theta_j^{(k)}}\right)^2 \leq 1.
\end{equation*}
By summing up $2n-1$ parameters, we obtain
\begin{equation*}
 \|\nabla_{\bm{\theta}} f_{\text{TT}}\|^2 = \sum_{j,k} \left(\frac{\partial f_{\text{TT}}}{\partial \theta_j^{(k)}}\right)^2 \leq 2n-1.
\end{equation*}

On the other side, the lower bound could be derived as follows,
\begin{align}
\E_{\bm{\theta}} \|\nabla_{\bm{\theta}} f_{\text{TT}}\|^2 {} & \geq \sum_{j=1}^{1 + \log n} \E_{\bm{\theta}} \left(\frac{\partial f_{\text{TT}}}{\partial \theta_j^{(1)}}\right)^2 \\
&= \sum_{j=1}^{1 + \log n} 4 \E_{\bm{\theta}} \left(f_{\text{TT}}-\frac{1}{2}\right)^2 \label{tqnn_ttn_proof_1} \\
& \geq \frac{1 + \log n}{2n} \cdot \left( \Tr \left[ \sigma_{(1,0,\cdots,0)} \cdot \rho_{\text{in}} \right] ^2 + \Tr \left[ \sigma_{(3,0,\cdots,0)} \cdot \rho_{\text{in}} \right] ^2 \right ), \label{tqnn_ttn_proof_2}
\end{align}
where Eq.~(\ref{tqnn_ttn_proof_1}) is derived using Lemma~\ref{ttn_lemma_theta_0}, and Eq.~(\ref{tqnn_ttn_proof_2}) is derived using Lemma~\ref{tqnn_ttn_lemma_Ef2}.

\end{proof}

Now we provide the detail and the proof of Lemmas \ref{tqnn_params_shifting}, \ref{ttn_lemma_theta_0}, \ref{tqnn_ttn_lemma_Ef2}.

\begin{lemma}\label{tqnn_params_shifting}
Consider the objective function of the QNN defined as 
\begin{equation*}
f(\bm{\theta}) = \frac{1+\<O\>}{2} = \frac{1+ \Tr [O \cdot V(\bm{\theta}) \rho_{\text{in}} V(\bm{\theta})^{\dag}]}{2},	
\end{equation*}
where $\bm{\theta}$ encodes all parameters which participate the circuit as $e^{-i\theta_j \sigma_k}, k \in {1,2,3}$, $\rho_{\text{in}}$ denotes the input state and $O$ is an arbitrary quantum observable. Then, the partial derivative of the function respect to the parameter $\theta_j$ could be calculated by
\begin{equation*}
\frac{\partial f}{\partial \theta_j} = \frac{1}{2}  \left( \Tr [O \cdot V(\bm{\theta}_+) \rho_{\text{in}} V(\bm{\theta}_+)^{\dag}] - \Tr [O \cdot V(\bm{\theta}_-) \rho_{\text{in}} V(\bm{\theta}_-)^{\dag}] \right ),
\end{equation*}
where $\bm{\theta_+} \equiv \bm{\theta} + \frac{\pi}{4}\bm{e}_j$ and $\bm{\theta_-} \equiv \bm{\theta} - \frac{\pi}{4}\bm{e}_j$.
\end{lemma}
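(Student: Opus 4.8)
The plan is to reduce the claim to a computation involving the single gate that carries $\theta_j$. Since $\theta_j$ enters $V(\bm{\theta})$ through exactly one factor $e^{-i\theta_j \sigma_k}$, I would write $V(\bm{\theta}) = V_2\, e^{-i\theta_j G}\, V_1$, where $G$ is the full-space Hermitian generator of that gate (a single Pauli $\sigma_k$ padded by identities, so that $G=G^\dagger$ and $G^2 = I$), and where $V_1,V_2$ collect all gates applied before and after and are independent of $\theta_j$. Absorbing them into an effective observable $\tilde O = V_2^\dagger O V_2$ and an effective input $\tilde\rho = V_1 \rho_{\text{in}} V_1^\dagger$ turns the objective into $f = \tfrac12 + \tfrac12 \Tr[\tilde O\, e^{-i\theta_j G}\,\tilde\rho\, e^{i\theta_j G}]$, so that only the single generator $G$ remains relevant.

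Next I would differentiate directly. Writing $\rho' = e^{-i\theta_j G}\,\tilde\rho\, e^{i\theta_j G}$ and using $\partial_{\theta_j} e^{-i\theta_j G} = -iG\, e^{-i\theta_j G}$ together with $\partial_{\theta_j} e^{i\theta_j G} = iG\, e^{i\theta_j G}$, the product rule gives
\[
\frac{\partial f}{\partial \theta_j} = \frac{i}{2}\Tr\big[\tilde O\,(\rho' G - G\rho')\big] = \frac{i}{2}\Tr\big[\tilde O\,[\rho', G]\big].
\]
Note that this step uses only the antisymmetry of the commutator, not $G^2 = I$; the involution property is reserved for the final comparison.

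Then I would evaluate the two shifted circuits. Because $G$ commutes with its own exponential, replacing $\theta_j$ by $\theta_j \pm \tfrac{\pi}{4}$ conjugates $\rho'$ by $R_\pm = e^{\mp i\frac{\pi}{4}G}$, so the trace terms become $g(\bm{\theta}_\pm) \equiv \Tr[O\, V(\bm{\theta}_\pm)\rho_{\text{in}} V(\bm{\theta}_\pm)^\dagger] = \Tr[\tilde O\, R_\pm \rho' R_\pm^\dagger]$. Expanding $R_\pm = \tfrac{1}{\sqrt2}(I \mp iG)$ via $G^2 = I$, a short computation splits $R_\pm \rho' R_\pm^\dagger$ into a $\pm$-independent symmetric part $\tfrac12(\rho' + G\rho' G)$ and an antisymmetric part $\pm\tfrac{i}{2}[\rho', G]$. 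Taking the difference of the $+$ and $-$ evaluations therefore cancels the symmetric contribution and doubles the antisymmetric one, yielding $g(\bm{\theta}_+) - g(\bm{\theta}_-) = i\,\Tr[\tilde O\,[\rho', G]]$, which is exactly twice the derivative found above; multiplying by $\tfrac12$ recovers the claimed identity.

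The calculation is routine, so I do not expect a genuine analytic obstacle; the only care required is bookkeeping of the signs and the factors of $\tfrac12$ arising from the $\tfrac12$ in $f$ and from $R_\pm = \tfrac{1}{\sqrt2}(I \mp iG)$. The real conceptual point I would emphasize is that the exact cancellation of the symmetric part — and hence the validity of this specific $\pm\tfrac{\pi}{4}$ two-term shift rather than some other angle or a longer linear combination — rests entirely on $\sigma_k^2 = I$, i.e. on the generator having eigenvalues $\pm 1$. I would flag that involution property as the crux of the argument, since the rule in this clean form fails once the generator is no longer an involution.
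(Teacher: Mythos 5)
Your proof is correct, and it shares its first step with the paper's proof: both isolate the gate carrying $\theta_j$ by writing $V(\bm{\theta})$ as (gates after)$\,\cdot\, e^{-i\theta_j G}\,\cdot\,$(gates before) and absorbing the outer factors into an effective observable $\tilde O = V_2^\dagger O V_2$ and effective state $\tilde\rho = V_1\rho_{\text{in}}V_1^\dagger$. Where you diverge is in what happens next: the paper at that point simply invokes the single-parameter shift rule of Crooks as a black box (citing it for the identity $\frac{d f_j}{d\theta_j} = f_j(\theta_j+\frac{\pi}{4}) - f_j(\theta_j-\frac{\pi}{4})$) and substitutes back, whereas you derive that rule from scratch: you compute $\partial_{\theta_j} f = \frac{i}{2}\Tr\bigl[\tilde O\,[\rho',G]\bigr]$ by direct differentiation, expand $e^{\mp i\frac{\pi}{4}G} = \frac{1}{\sqrt 2}(I \mp iG)$ using $G^2=I$, and show that the symmetric part $\frac{1}{2}(\rho' + G\rho' G)$ cancels in the difference of the two shifted evaluations while the commutator part doubles. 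The signs and factors all check out ($R_+\rho' R_+^\dagger - R_-\rho' R_-^\dagger = i[\rho',G]$, so half the difference equals the derivative). Your version buys self-containedness and makes explicit the precise reason the two-term $\pm\frac{\pi}{4}$ rule works — the generator is an involution, so its eigenvalue gap is $2$ — which the paper's citation-based proof leaves implicit; the paper's version buys brevity. Your closing remark that the rule in this exact form fails for non-involutive generators is a correct and worthwhile observation that the paper does not make.
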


\begin{proof}

First we assume that the circuit $V(\bm{\theta})$ consists of $p$ parameters, and could be written in the sequence:
\begin{equation*}
V(\bm{\theta}) = V_{p} (\theta_p) \cdot V_{p-1} (\theta_{p-1}) \cdots V_{1} (\theta_1),	
\end{equation*}
such that each block $V_j$ contains only one parameter.

Consider the observable defined as $O' = V_{j+1}^{\dag} \cdots V_p^\dag O V_p \cdots V_{j+1}$ and the input state defined as $\rho_{\text{in}}' = V_{j-1} \cdots V_1 \rho_{\text{in}} V_1^\dag \cdots V_{j-1}^{\dag}$.
The parameter shifting rule \cite{crooks2019gradients}	provides a gradient calculation method for the single parameter case. 
For $f_j (\theta_j)= \Tr[O' \cdot U(\theta_j) \rho_{\text{in}}' U(\theta_j)^{\dag}]$, the gradient could be calculated as
\begin{equation*}
\frac{d f_j}{d \theta_j} = f_j(\theta_j+\frac{\pi}{4}) - f_j(\theta_j - \frac{\pi}{4}).	
\end{equation*}
Thus, by inserting the form of $O'$ and $\rho_{\text{in}}'$, we could obtain
\begin{equation*}
\frac{\partial f}{\partial \theta_j} = \frac{d f_j}{d \theta_j} = f_j(\theta_j+\frac{\pi}{4}) - f_j(\theta_j - \frac{\pi}{4}) = \frac{1}{2} \left( \Tr [O \cdot V(\bm{\theta}_+) \rho_{\text{in}} V(\bm{\theta}_+)^{\dag}] - \Tr [O \cdot V(\bm{\theta}_-) \rho_{\text{in}} V(\bm{\theta}_-)^{\dag}] \right).
\end{equation*}
\end{proof}

\begin{lemma}\label{ttn_lemma_theta_0}
For the objective function $f_{\text{TT}}$ defined in Eq.~(\ref{tqnn_ttn_loss}), the following formula holds for every $j \in \{1,2,\cdots, 1+\log n\}$:
\begin{equation}\label{ttn_lemma_theta_0_main_result}
\E_{\bm{\theta}} \left( \frac{\partial f_{\text{TT}}}{\partial \theta_{j}^{(1)}} \right)^2 =  4 \cdot \E_{\bm{\theta}} (f_{\text{TT}}-\frac{1}{2})^2,	
\end{equation}
where the expectation is taken for all parameters in $\theta$ with uniform distribution in $[0,2\pi]$.
\end{lemma}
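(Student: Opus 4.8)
The plan is to collapse the identity to a single-parameter statement and then to a purely structural fact about the tree wiring. First I would isolate the gate carrying $\theta := \theta_j^{(1)}$. Since every parameter appears once and $W_j^{(1)} = e^{-i\theta \sigma_2}$ sits on the first (measured) qubit, I factor $V_{\text{TT}} = V_{\text{post}}\,(W_j^{(1)} \otimes I^{\otimes(n-1)})\,V_{\text{pre}}$ and introduce the back-propagated observable $\tilde O := V_{\text{post}}^{\dag}(\sigma_3 \otimes I^{\otimes(n-1)}) V_{\text{post}}$ and the forward state $\tilde \rho := V_{\text{pre}} \rho_{\text{in}} V_{\text{pre}}^{\dag}$. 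Then $g(\theta) := f_{\text{TT}} - \tfrac12 = \tfrac12 \Tr[\tilde O\,(W \otimes I)\,\tilde\rho\,(W \otimes I)^{\dag}]$ is a function of the single variable $\theta$ with all other parameters frozen.

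Next, using $W \sigma_1 W^{\dag} = \cos 2\theta\,\sigma_1 - \sin 2\theta\,\sigma_3$, $W \sigma_3 W^{\dag} = \cos 2\theta\,\sigma_3 + \sin 2\theta\,\sigma_1$, together with $W \sigma_0 W^{\dag} = \sigma_0$ and $W \sigma_2 W^{\dag} = \sigma_2$, I expand $\tilde\rho$ and $\tilde O$ in the qubit-1 Pauli basis to read off $g(\theta) = c_0 + c_1 \cos 2\theta + c_2 \sin 2\theta$, where the constant $c_0$ collects exactly the $\{\sigma_0,\sigma_2\}$-on-qubit-1 components. Averaging over $\theta \in [0,2\pi]$ then gives $\E_\theta g^2 = c_0^2 + \tfrac12(c_1^2 + c_2^2)$ and $\E_\theta (\partial_\theta g)^2 = 2(c_1^2 + c_2^2)$, so $\E_\theta (\partial_\theta g)^2 = 4\,\E_\theta g^2$ holds precisely when $c_0 = 0$. (These two single-parameter averages are exactly Lemma~\ref{ttn_lemma_wawbwcwd} and Lemma~\ref{ttn_lemma_gawbgcwd} applied to the qubit-1 factor, whose coefficient tables differ only by the sign of the $\delta$-terms; that sign flip is what produces the factor $4$.)

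The crux, and the step I expect to be the main obstacle, is showing $c_0 = 0$. I would phrase it as the anticommutation $(\sigma_2 \otimes I^{\otimes(n-1)})\,\tilde O\,(\sigma_2 \otimes I^{\otimes(n-1)}) = -\tilde O$, which annihilates $c_0 = \tfrac14 \Tr[(\tilde O + (\sigma_2 \otimes I)\tilde O(\sigma_2 \otimes I))\tilde\rho]$; equivalently $M := V_{\text{post}}(\sigma_2 \otimes I) V_{\text{post}}^{\dag}$ must anticommute with $\sigma_3 \otimes I$. To establish this I propagate $\sigma_2$ on qubit 1 forward through $V_{\text{post}}$ and invoke the tree structure: qubit 1 is always a CNOT \emph{target} and never a control, so by the $\sigma_k = \sigma_0$ case of Lemma~\ref{ttn_lemma_CNOT} each such CNOT sends $\sigma_2^{(1)} \otimes \sigma_0^{(c)} \mapsto \sigma_2^{(1)} \otimes \sigma_3^{(c)}$, leaving the qubit-1 factor equal to $\sigma_2$. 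Moreover, once a qubit has served as such a control it receives no further gates, so no later $R_Y$ can rotate its $\sigma_3$ back into the $\{\sigma_1,\sigma_2\}$ plane and corrupt the qubit-1 factor, and every other CNOT or rotation acts only on identity factors of $M$. Hence $M$ retains a $\sigma_2$ on qubit 1, anticommutes with $\sigma_3^{(1)}$, and $c_0 = 0$ identically in the remaining parameters.

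Finally I would take the expectation over the remaining parameters: since $\E_\theta(\partial_\theta g)^2 = 4\,\E_\theta g^2$ holds pointwise in them, the tower property yields $\E_{\bm{\theta}}(\partial f_{\text{TT}}/\partial \theta_j^{(1)})^2 = 4\,\E_{\bm{\theta}}(f_{\text{TT}} - \tfrac12)^2$ for every $j \in \{1,\dots,1+\log n\}$. The one point demanding genuine care is verifying that the inert-control property holds for all $1+\log n$ first-qubit gates and is not an artifact of the $n=4$ figure; I would deduce it from the defining contraction pattern of the binary-tree wiring rather than from the picture, noting that the support of $M$ only ever grows on qubits that have already controlled qubit 1 and that such qubits are thereafter idle.
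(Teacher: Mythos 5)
Your proof is correct, and it reaches the conclusion by a genuinely different mechanism than the paper. Both arguments share the same computational core: the single-parameter averages you quote are exactly Lemma~\ref{ttn_lemma_wawbwcwd} and Lemma~\ref{ttn_lemma_gawbgcwd}, and the crux in both cases is that these coincide precisely when the qubit-1 Pauli component of the effective observable lies in $\mathrm{span}\{\sigma_1,\sigma_3\}$ (equivalently, your $c_0=0$). The difference is how that support property is established. The paper proves it \emph{on average}: it forms the difference $\E_{\bm{\theta}}\bigl[(\partial f_{\text{TT}}/\partial\theta_j^{(1)})^2-4(f_{\text{TT}}-\tfrac12)^2\bigr]$ and integrates the downstream layers $\bm{\theta}_{m+1},\dots,\bm{\theta}_{j+1}$ one at a time with Lemma~\ref{ttn_lemma_wawbwcwd}, tracking a sum of Pauli strings $\sigma_{\bm{i}}$ with coefficients $a_{\bm{i}}$ and using the collapsed form of Lemma~\ref{ttn_lemma_CNOT} to see that the first index stays in $\{1,3\}$; only then does it integrate $\theta_j^{(1)}$ itself. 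You prove the same support property \emph{deterministically}, pointwise in all other parameters, by Heisenberg-propagating $\sigma_2$ on the first qubit through $V_{\text{post}}$ and observing that the tree wiring (qubit 1 always a target, each control qubit idle after use, so its factor is $\sigma_0$ when its CNOT fires) forces $M=\sigma_2^{(1)}\otimes(\text{a tensor of }\sigma_3\text{'s on used controls})$, which anticommutes with the observable. Your route buys a strictly stronger statement --- the identity holds conditionally on the remaining parameters, with only the $\theta_j^{(1)}$-average needed --- and it avoids the coefficient bookkeeping entirely; the paper's layer-by-layer averaging, on the other hand, is not wasted effort there, since the identical machinery is what proves the companion lower bound in Lemma~\ref{tqnn_ttn_lemma_Ef2} and carries over verbatim to the SC and DTT variants (Lemma~\ref{tqnn_app_sc_lemma_theta_0}, Lemma~\ref{tqnn_app_dtt_lemma_theta_0}). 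Your closing caution about verifying the inert-control property from the defining wiring rather than the $n=4$ figure is exactly the right check, and it is what makes your pointwise argument go through for every $j\in\{1,\dots,1+\log n\}$.
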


\begin{proof}

First we rewrite the formulation of $f_{\text{TT}}$ in detail:
\begin{equation}\label{ttn_lemma_theta_0_0}
f_{\text{TT}} =\frac{1}{2} + \frac{1}{2} \Tr \left[ \sigma_{(3,0,\cdots,0)} \cdot V_{m+1} CX_{m} \cdots CX_{1} V_{1} \cdot \rho_{\text{in}} \cdot V_{1}^{\dag} CX_{1}^{\dag} \cdots CX_{m}^{\dag} V_{m+1}^{\dag} \right],	
\end{equation}
where $m=\log n$ and we denote 
\begin{equation*}
\sigma_{(i_1,i_2,\cdots,i_{n})} \equiv \sigma_{i_1} \otimes \sigma_{i_2} \otimes \cdots \otimes \sigma_{i_{n}}.
\end{equation*}
The operation $V_\ell$ consists of $n \cdot 2^{1-\ell}$ single qubit rotations $W_{\ell}^{(j)} = e^{-i \sigma_2 \theta_{\ell}^{(j)}}$ on the $(j-1)\cdot 2^{\ell-1}+1$-th qubit, where $j = 1,2,\cdots, n \cdot 2^{1-\ell}$. The operation $CX_{\ell}$ consists of $n\cdot 2^{-\ell}$ CNOT gates, where each of them acts on the $(j-1)\cdot 2^\ell+1$-th and $(j-0.5)\cdot 2^\ell+1$-th qubit, for $j = 1,2,\cdots, n \cdot 2^{-\ell}$.

Now we focus on the partial derivative of the function $f$ to the parameter $\theta_{j}^{(1)}$. We have:
\begin{align}
\frac{\partial f_{\text{TT}}}{\partial \theta_{j}^{(1)}} &= \frac{1}{2} \Tr \left[ \sigma_{(3,0,\cdots,0)} \cdot V_{m+1} CX_{m} \cdots \frac{\partial V_j}{\theta_j^{(1)}} \cdots CX_{1} V_{1} \cdot \rho_{\text{in}} \cdot V_{1}^{\dag} CX_{1}^{\dag} \cdots V_j^\dag \cdots CX_{m}^{\dag} V_{m+1}^{\dag} \right] \label{ttn_theta_0_1_1} \\
&+ \frac{1}{2} \Tr \left[ \sigma_{(3,0,\cdots,0)} \cdot V_{m+1} CX_{m} \cdots V_j \cdots CX_{1} V_{1} \cdot \rho_{\text{in}} \cdot V_{1}^{\dag} CX_{1}^{\dag} \cdots \frac{\partial V_j^\dag}{\theta_j^{(1)}} \cdots CX_{m}^{\dag} V_{m+1}^{\dag} \right] \label{ttn_theta_0_1_2} \\
&= \Tr \left[ \sigma_{(3,0,\cdots,0)} \cdot V_{m+1} CX_{m} \cdots \frac{\partial V_j}{\theta_j^{(1)}} \cdots CX_{1} V_{1} \cdot \rho_{\text{in}} \cdot V_{1}^{\dag} CX_{1}^{\dag} \cdots V_j^\dag \cdots CX_{m}^{\dag} V_{m+1}^{\dag} \right]. \label{ttn_theta_0_1_3}
\end{align}
The Eq.~(\ref{ttn_theta_0_1_3}) holds because both terms in (\ref{ttn_theta_0_1_1}) and (\ref{ttn_theta_0_1_2}) except $\rho_{\text{in}}$ are real matrices, and $\rho_{\text{in}} = \rho_{\text{in}}^{\dag}$.
The key idea to derive $\E_{\bm{\theta}} \left( \frac{\partial f_{\text{TT}}}{\partial \theta_{j}^{(1)}} \right)^2 =  4 \cdot \E_{\theta} (f_{\text{TT}}-\frac{1}{2})^2$ is that for cases $B=D=\sigma_j \in \{\sigma_1,\sigma_3\}$, the term $\frac{\delta_{j0} + \delta_{j2}}{2}=0$, which means Lemma~\ref{ttn_lemma_wawbwcwd} and Lemma~\ref{ttn_lemma_gawbgcwd} collapse to the same formulation:
\begin{eqnarray*}
\E_{\theta} \Tr[W A W^\dag B] \Tr[W C W^\dag D] &=& 	\E_{\theta} \Tr[G A W^\dag B] \Tr[G C W^\dag D] \\  &=&  \frac{1}{2}  \Tr[A\sigma_1] \Tr[C\sigma_1] + \frac{1}{2} \Tr[A\sigma_3] \Tr[C\sigma_3].
\end{eqnarray*}

Now we write the analysis in detail.

\begin{align}
&\E_{\bm{\theta}}  \left[	\left( \frac{\partial f_{\text{TT}} }{\partial \theta_{j}^{(1)}} \right)^2 - 4 (f_{\text{TT}}-\frac{1}{2})^2 \right]\\
=& \E_{\bm{\theta}_1} \cdots \E_{\bm{\theta}_{m}} \E_{\theta_{m+1}^{(1)}} \Tr \left[\sigma_{(3,0,\cdots,0)} \cdot V_{m+1} CX_{m} A_{m} CX_{m} V_{m+1}^{\dag} \right]^2 \\
-& \E_{\bm{\theta}_1} \cdots \E_{\bm{\theta}_{m}} \E_{\theta_{m+1}^{(1)}} \Tr \left[ \sigma_{(3,0,\cdots,0)} \cdot V_{m+1} CX_{m} B_{m} CX_{m} V_{m+1}^{\dag}\right]^2 \\
=& \E_{\bm{\theta}_1} \cdots \E_{\bm{\theta}_{m}} \left\{ \frac{1}{2} \Tr \left[ \sigma_{(3,0,\cdots,0,3,0,\cdots,0)} \cdot A_{m}\right]^2 + \frac{1}{2} \Tr \left[ \sigma_{(1,0,\cdots,0,0,0,\cdots,0)} \cdot A_{m}\right]^2 \right\} \label{ttn_theta_0_2_3} \\
-& \E_{\bm{\theta}_1} \cdots \E_{\bm{\theta}_{m}} \left\{ \frac{1}{2} \Tr \left[ \sigma_{(3,0,\cdots,0,3,0,\cdots,0)} \cdot B_{m}\right]^2 + \frac{1}{2} \Tr \left[ \sigma_{(1,0,\cdots,0,0,0,\cdots,0)} \cdot B_{m}\right]^2 \right\},\label{ttn_theta_0_2_4}
\end{align}
where 
\begin{align*}
A_{m} &= 	V_{m} CX_{m-1} \cdots \frac{\partial V_j}{\theta_j^{(1)}} \cdots CX_{1} V_{1} \cdot \rho_{\text{in}} \cdot V_{1}^{\dag} CX_{1}^{\dag} \cdots V_j^\dag \cdots CX_{m-1}^{\dag} V_{m}^{\dag}, \\
B_{m} &= V_{m} CX_{m-1} \cdots V_j \cdots CX_{1} V_{1} \cdot \rho_{\text{in}} \cdot V_{1}^{\dag} CX_{1}^{\dag} \cdots V_j^\dag \cdots CX_{m-1}^{\dag} V_{m}^{\dag},
\end{align*}
and Eq.~(\ref{ttn_theta_0_2_3},\ref{ttn_theta_0_2_4}) are derived using the collapsed form of Lemma~\ref{ttn_lemma_CNOT}:
\begin{equation*}
\CNOT (\sigma_1 \otimes \sigma_0) \CNOT^{\dag} = \sigma_1 \otimes \sigma_0, \ \CNOT (\sigma_3 \otimes \sigma_0 ) \CNOT^{\dag} = \sigma_3 \otimes \sigma_3,
\end{equation*}
and $\bm{\theta}_{\ell}$ denotes the vector consisted with all parameters in the $\ell$-th layer.
The integration could be performed for parameters $\{\bm{\theta}_{m}, \bm{\theta}_{m-1}, \cdots, \bm{\theta}_{j+1}\}$. It is not hard to find that after the integration of the parameters $\bm{\theta}_{j+1}$, the term $\Tr[\sigma_{\bm{i}} \cdot A_j]^2$ and  $\Tr[\sigma_{\bm{i}} \cdot B_j]^2$ have the opposite coefficients. Besides, the first index of each Pauli tensor product $\sigma_{(i_1, i_2,\cdots,i_{n})}$ could only be $i_1 \in \{1,3\}$ because of the Lemma~\ref{ttn_lemma_wawbwcwd}.
So we could write
\begin{align}
& \E_{\bm{\theta}} 	\left[ \left( \frac{\partial f_{\text{TT}} }{\partial \theta_{j}^{(1)}} \right)^2 - 4 (f_{\text{TT}}-\frac{1}{2})^2 \right] \\
=& \E_{\bm{\theta}_1} \cdots \E_{\bm{\theta}_{j}} \left\{ \sum_{i_1 \in\{1,3\}} \sum_{i_2=0}^{3} \cdots \sum_{i_{n}=0}^{3} a_{\bm{i}} \Tr \left[ \sigma_{\bm{i}} \cdot A_j\right]^2 - a_{\bm{i}} \Tr \left[ \sigma_{\bm{i}} \cdot B_j\right]^2 \right\}
\end{align}
where 
\begin{align*}
A_{j} &= \frac{\partial V_j}{\partial \theta_j^{(1)}} CX_{j-1} \cdots CX_{1} V_{1} \cdot \rho_{\text{in}} \cdot V_{1}^{\dag} CX_{1}^{\dag} \cdots CX_{j-1} V_j^\dag \\
 &= (G_{j}^{(1)} \otimes I^{\otimes (n-1)}) A_{j}^{/\theta_{j}^{(1)}} (W_{j}^{(1)\dag} \otimes I^{\otimes (n-1)}), \\
B_{j} &= V_j CX_{j-1} \cdots CX_{1} V_{1} \cdot \rho_{\text{in}} \cdot V_{1}^{\dag} CX_{1}^{\dag} \cdots CX_{j-1} V_j^\dag \\ 
&= (W_{j}^{(1)} \otimes I^{\otimes (n-1)}) A_{j}^{/\theta_{j}^{(1)}} (W_{j}^{(1)\dag} \otimes I^{\otimes (n-1)}),
\end{align*}
and $a_{\bm{i}}$ is the coefficient of the term $\Tr \left[ \sigma_{\bm{i}} \cdot A_j\right]^2$.
We denote $G_j^{(1)} = \frac{\partial W_{j}^{(1)}}{\partial \theta_{j}^{(1)}}$ and use $A_{j}^{/\theta_{j}^{(1)}}$ to denote the rest part of $A_j$ and $B_j$.
By Lemma~\ref{ttn_lemma_wawbwcwd} and Lemma~\ref{ttn_lemma_gawbgcwd}, we have
\begin{align*}
\E_{\theta_j^{(1)}}	\left[ \Tr \left[ \sigma_{\bm{i}} \cdot A_j \right]^2 - \Tr \left[ \sigma_{\bm{i}} \cdot B_j \right]^2  \right] = 0,
\end{align*}
since for the case $i_1 \in \{1,3\}$, the term $\frac{\delta_{i_1 0} + \delta_{i_1 2}}{2} = 0$, which means Lemma~\ref{ttn_lemma_wawbwcwd} and Lemma~\ref{ttn_lemma_gawbgcwd} have the same formulation. Then, we derive the Eq.~(\ref{ttn_lemma_theta_0_main_result}).

\end{proof}

\begin{lemma}\label{tqnn_ttn_lemma_Ef2}
For the loss function $f$ defined in (\ref{tqnn_ttn_loss}), we have:
\begin{equation}
\E_{\bm{\theta}} (f_{\text{TT}}-\frac{1}{2})^2 \geq \frac{\left\{ \Tr \left[ \sigma_{(1,0,\cdots,0)} \cdot \rho_{\text{in}} \right] \right\}^2 + \left\{ \Tr \left[ \sigma_{(3,0,\cdots,0)} \cdot \rho_{\text{in}} \right] \right\}^2 }{8n},	
\end{equation}
where we denote 
\begin{equation*}
\sigma_{(i_1,i_2,\cdots,i_{n})} \equiv \sigma_{i_1} \otimes \sigma_{i_2} \otimes \cdots \otimes \sigma_{i_{n}},
\end{equation*}
and the expectation is taken for all parameters in $\theta$ with uniform distributions in $[0,2\pi]$.
\end{lemma}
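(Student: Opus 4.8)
The plan is to start from the identity $f_{\text{TT}}-\tfrac12=\tfrac12\Tr[\sigma_{(3,0,\cdots,0)}V_{\text{TT}}(\bm{\theta})\rho_{\text{in}}V_{\text{TT}}(\bm{\theta})^{\dag}]$, so that $\E_{\bm{\theta}}(f_{\text{TT}}-\tfrac12)^2=\tfrac14\,\E_{\bm{\theta}}\Tr[\sigma_{(3,0,\cdots,0)}V_{\text{TT}}\rho_{\text{in}}V_{\text{TT}}^{\dag}]^2$, and then to integrate out the parameters one gate at a time, starting from the measured observable and moving toward $\rho_{\text{in}}$. I would maintain the invariant that, after removing all gates beyond the $\ell$-th layer, the partially integrated quantity equals $\sum_{\bm{i}}\lambda_{\bm{i}}\Tr[\sigma_{\bm{i}}\,\rho_\ell]^2$, where $\rho_\ell$ is the state propagated forward through the first $\ell$ layers and every coefficient satisfies $\lambda_{\bm{i}}\geq 0$. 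This reduces the whole lemma to tracking how a squared trace of a Pauli string transforms under one rotation or one $\CNOT$.

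Two elementary moves drive the recursion. When I remove a $\CNOT$ (Lemma~\ref{ttn_lemma_CNOT}), conjugation sends a single Pauli string to a single \emph{signed} Pauli string; because the quantity of interest is a squared trace $\Tr[\cdot]^2$, the sign disappears and the move merely relabels $\bm{i}$, preserving nonnegativity. When I integrate out an $R_Y=e^{-i\theta\sigma_2}$ rotation on a qubit $q$, I apply Lemma~\ref{ttn_lemma_wawbwcwd} (and Lemma~\ref{ttn_lemma_gawbgcwd} where a derivative appears) on the $q$-th tensor factor with $B=D=\sigma_{i_q}$: for $i_q\in\{0,2\}$ the factor is untouched, while for $i_q\in\{1,3\}$ the cross terms recombine with a $+$ sign (using $\sigma_1\sigma_2=i\sigma_3$ and $\sigma_3\sigma_2=-i\sigma_1$) to give the split $\sigma_{i_q}\mapsto\tfrac12\sigma_1+\tfrac12\sigma_3$ on qubit $q$. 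Both moves keep all coefficients nonnegative, so the fully integrated expression is $\sum_{\bm{i}}\lambda_{\bm{i}}\Tr[\sigma_{\bm{i}}\rho_{\text{in}}]^2$ with $\lambda_{\bm{i}}\geq0$ and each $\Tr[\sigma_{\bm{i}}\rho_{\text{in}}]^2\geq0$.

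Finally I would lower-bound the sum by keeping only $\bm{i}=(1,0,\cdots,0)$ and $\bm{i}=(3,0,\cdots,0)$ and discarding the remaining nonnegative terms. To bound their coefficients, I track the single ``clean'' path along the first qubit: the observable $\sigma_3$ meets the $1+\log n$ rotations $W_{m+1}^{(1)},\dots,W_1^{(1)}$ (with $m=\log n$) and the $\log n$ $\CNOT$s targeting qubit $1$. Selecting the $\sigma_1$ branch at each split keeps qubit $1$ in $\sigma_1$, which passes every $\CNOT$ with identity control without spreading (Lemma~\ref{ttn_lemma_CNOT} with $\sigma_k=\sigma_0$), while rotations on the other active qubits act on identity and pass through; the last rotation $W_1^{(1)}$, adjacent to $\rho_{\text{in}}$, then splits into $\tfrac12\sigma_1+\tfrac12\sigma_3$ with both branches surviving. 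This path alone contributes $(\tfrac12)^{1+\log n}=\tfrac1{2n}$ to each target coefficient, and nonnegativity guarantees that no other path can diminish them. Multiplying the resulting bound on $\E_{\bm{\theta}}\Tr[\cdots]^2$ by $\tfrac14$ gives the stated $\tfrac1{8n}$. I expect the main obstacle to be the sign bookkeeping: rigorously arguing that the squared trace neutralizes the Clifford signs produced by $\CNOT$ conjugation and that the $R_Y$ cross terms recombine with a $+$ sign, so that the entire expansion is manifestly nonnegative and a single path suffices for the lower bound.
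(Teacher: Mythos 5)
Your proposal is correct and follows essentially the same route as the paper's own proof: you integrate the parameters from the observable toward $\rho_{\text{in}}$, use Lemma~\ref{ttn_lemma_CNOT} to absorb each CNOT (with $\sigma_1$ on the target qubit staying localized) and Lemma~\ref{ttn_lemma_wawbwcwd} to split each qubit-1 rotation as $\sigma_i \mapsto \tfrac12\sigma_1 + \tfrac12\sigma_3$, which yields exactly the paper's factor $\tfrac14\cdot(\tfrac12)^{1+\log n} = \tfrac{1}{8n}$. The only cosmetic difference is bookkeeping: the paper discards the unwanted (nonnegative) branches layer by layer, while you carry the full nonnegative expansion and discard only at the end—the same argument in a slightly different formalization.
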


\begin{proof}

First we expand the function $f_{\text{TT}}$ in detail,
\begin{align}
f_{\text{TT}}  = \frac{1}{2} + \frac{1}{2} \Tr \left[ \sigma_{(3,0,\cdots,0)} \cdot V_{m+1} CX_{m} \cdots CX_{1} V_{1} \cdot \rho_{\text{in}} \cdot V_{1}^{\dag} CX_{1}^{\dag} \cdots CX_{m}^{\dag} V_{m+1}^{\dag} \right],
\end{align}
where $m= \log n$.
Now we consider the expectation of $(f_{\text{TT}}-\frac{1}{2})^2$ under the uniform distribution for $\theta_{m+1}^{(1)}$: 
\begin{align}
\E_{\theta_{m+1}^{(1)}} (f_{\text{TT}}-\frac{1}{2})^2 &= \frac{1}{4} \E_{\theta_{m+1}^{(1)}}  \left\{ \Tr \left[ \sigma_{(3,0,\cdots,0)} \cdot V_{m+1} CX_{m} \cdots CX_{1} V_{1} \cdot \rho_{\text{in}} \cdot V_{1}^{\dag} CX_{1}^{\dag} \cdots CX_{m}^{\dag} V_{m+1}^{\dag} \right] \right\}^2 \\
&= \frac{1}{8}  \left\{ \Tr \left[ \sigma_{(3,0,\cdots,0)} \cdot A' \right] \right\}^2 + \frac{1}{8} \left\{ \Tr \left[ \sigma_{(1,0,\cdots,0)} \cdot A' \right] \right\}^2 \label{tqnn_ttn_lemma_Ef2_0_1} \\
&= \frac{1}{8}  \left\{ \Tr \left[ \sigma_{({3,0,\cdots,0},3,0,\cdots,0)} \cdot A \right] \right\}^2 + \frac{1}{8} \left\{ \Tr \left[ \sigma_{(1,0,\cdots,0,0,\cdots,0)} \cdot A \right] \right\}^2 \label{tqnn_ttn_lemma_Ef2_0_2} \\
&\geq  \frac{1}{8} \left\{ \Tr \left[ \sigma_{(1,0,\cdots,0,0,\cdots,0)} \cdot A \right] \right\}^2,
\end{align}
where
\begin{align}
A' &= CX_{m} V_{m} \cdots CX_{1} V_{1} \cdot \rho_{\text{in}} \cdot V_{1}^{\dag} CX_{1}^{\dag} \cdots V_{m} CX_{m}^{\dag},\\
A &= V_{m} \cdots CX_{1} V_{1} \cdot \rho_{\text{in}} \cdot V_{1}^{\dag} CX_{1}^{\dag} \cdots V_{m},
\end{align}
and Eq.~(\ref{tqnn_ttn_lemma_Ef2_0_1}) is derived using Lemma~\ref{ttn_lemma_wawbwcwd}, Eq.~(\ref{tqnn_ttn_lemma_Ef2_0_2}) is derived using the collapsed form of Lemma~\ref{ttn_lemma_CNOT}:
\begin{equation*}
\CNOT (\sigma_1 \otimes \sigma_0) \CNOT^{\dag} = \sigma_1 \otimes \sigma_0, \ \CNOT (\sigma_3 \otimes \sigma_0 ) \CNOT^{\dag} = \sigma_3 \otimes \sigma_3,
\end{equation*}
We remark that during the integration of the parameters $\{\theta_{\ell}^{(j)}\}$ in each layer $\ell \in \{1,2,\cdots,m+1\}$, the coefficient of the term $\{\Tr[\sigma_{(1,0,\cdots,0)} \cdot A]\}^2$ only times a factor $1/2$ for the case $j=1$, and the coefficient remains for the cases $j>1$ (check Lemma~\ref{ttn_lemma_wawbwcwd} for detail). Since the formulation $\{\Tr[\sigma_{(1,0,\cdots,0)} \cdot A]\}^2$ remains the same when merging the operation $CX_{\ell}$ with $\sigma_{(1,0,\cdots,0)}$, for $\ell \in \{1,2,\cdots,m\}$, we could generate the following equation,
\begin{align}
\E_{\bm{\theta}_2} \cdots \E_{\bm{\theta}_{m+1}} (f_{\text{TT}}-\frac{1}{2})^2 \geq (\frac{1}{2})^{m+2} \left\{ \Tr \left[ \sigma_{(1,0,\cdots,0)} \cdot V_1 \cdot \rho_{\text{in}} \cdot V_1^{\dag} \right] \right\}^2,\label{tqnn_ttn_lemma_Ef2_1_0}
\end{align}
where $\bm{\theta}_\ell$ denotes the vector consisted with all parameters in the $\ell$-th layer.

Finally by using Lemma~\ref{ttn_lemma_wawbwcwd}, we could integrate the parameters $\{\theta_{1}^{(j)}\}_{j=1}^{n}$ in (\ref{tqnn_ttn_lemma_Ef2_1_0}):
\begin{align}
\E_{\bm{\theta}} (f_{\text{TT}}-\frac{1}{2})^2 &= \E_{\bm{\theta}_1} \cdots \E_{\bm{\theta}_{m+1}} (f_{\text{TT}}-\frac{1}{2})^2 \\
&\geq \frac{\left\{ \Tr \left[ \sigma_{(1,0,\cdots,0)} \cdot \rho_{\text{in}} \right] \right\}^2 + \left\{ \Tr \left[ \sigma_{(3,0,\cdots,0)} \cdot \rho_{\text{in}} \right] \right\}^2 }{2^{m+3}}\label{tqnn_ttn_lemma_Ef2_2_0}\\
&= \frac{\left\{ \Tr \left[ \sigma_{(1,0,\cdots,0)} \cdot \rho_{\text{in}} \right] \right\}^2 + \left\{ \Tr \left[ \sigma_{(3,0,\cdots,0)} \cdot \rho_{\text{in}} \right] \right\}^2 }{8n}.
\end{align}

\end{proof}

\section{The Quantum Input Model}
\label{tqnn_app_input_model}

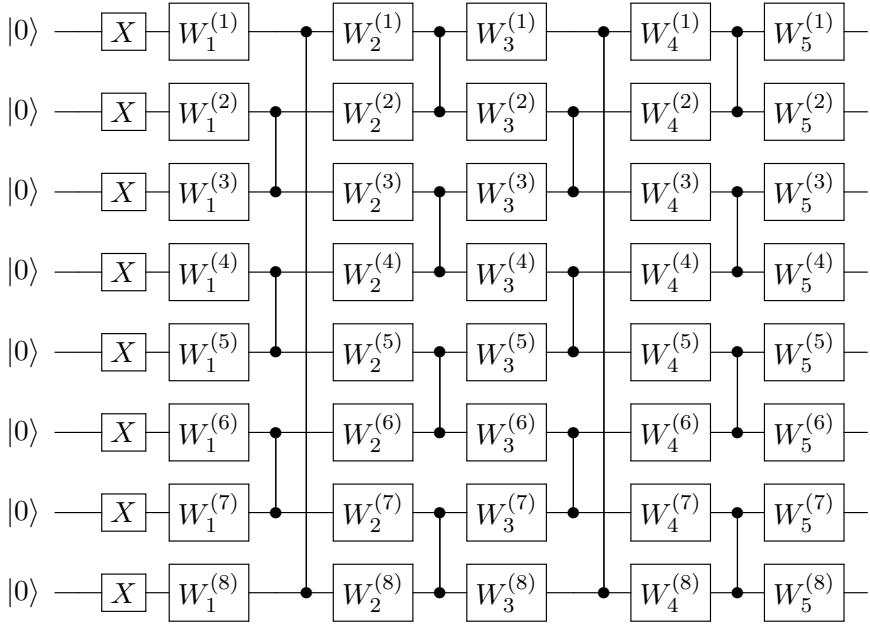
\begin{figure}[t]
\centerline{
\Qcircuit @C=0.8em @R=0.8em {                                 
\lstick{|0\>} & \qw & \gate{X} & \gate{W_{1}^{(1)}} & \qw & \ctrl{7}  & \gate{W_{2}^{(1)}}   & \ctrl{1} & \gate{W_{3}^{(1)}} & \qw & \ctrl{7}  & \gate{W_{4}^{(1)}} & \ctrl{1} & \gate{W_{5}^{(1)}}  & \qw & \\
\lstick{|0\>} & \qw & \gate{X} & \gate{W_{1}^{(2)}} & \ctrl{1} & \qw & \gate{W_{2}^{(2)}}  & \ctrl{-1} & \gate{W_{3}^{(2)}} & \ctrl{1} & \qw & \gate{W_{4}^{(2)}} & \ctrl{-1} & \gate{W_{5}^{(2)}} & \qw & \\
\lstick{|0\>} & \qw & \gate{X} & \gate{W_{1}^{(3)}} &  \ctrl{-1} & \qw & \gate{W_{2}^{(3)}} &  \ctrl{1} & \gate{W_{3}^{(3)}} & \ctrl{-1} & \qw & \gate{W_{4}^{(3)}} & \ctrl{1} & \gate{W_{5}^{(3)}} & \qw & \\
\lstick{|0\>} & \qw & \gate{X} & \gate{W_{1}^{(4)}} & \ctrl{1} & \qw & \gate{W_{2}^{(4)}} & \ctrl{-1} & \gate{W_{3}^{(4)}} & \ctrl{1} & \qw & \gate{W_{4}^{(4)}} & \ctrl{-1} & \gate{W_{5}^{(4)}} & \qw & \\
\lstick{|0\>} & \qw & \gate{X} & \gate{W_{1}^{(5)}} & \ctrl{-1} & \qw & \gate{W_{2}^{(5)}}  & \ctrl{1} & \gate{W_{3}^{(5)}} & \ctrl{-1} & \qw & \gate{W_{4}^{(5)}} & \ctrl{1} & \gate{W_{5}^{(5)}} & \qw & \\
\lstick{|0\>} & \qw & \gate{X} & \gate{W_{1}^{(6)}} & \ctrl{1} & \qw & \gate{W_{2}^{(6)}} & \ctrl{-1} & \gate{W_{3}^{(6)}} & \ctrl{1} & \qw & \gate{W_{4}^{(6)}} & \ctrl{-1} & \gate{W_{5}^{(6)}} & \qw & \\
\lstick{|0\>} & \qw & \gate{X} & \gate{W_{1}^{(7)}} & \ctrl{-1} & \qw & \gate{W_{2}^{(7)}} & \ctrl{1} & \gate{W_{3}^{(7)}} & \ctrl{-1} & \qw & \gate{W_{4}^{(7)}} & \ctrl{1} & \gate{W_{5}^{(7)}} & \qw & \\
\lstick{|0\>} & \qw & \gate{X} & \gate{W_{1}^{(8)}} & \qw & \ctrl{-7} & \gate{W_{2}^{(8)}} & \ctrl{-1} & \gate{W_{3}^{(8)}} & \qw & \ctrl{-7} & \gate{W_{4}^{(8)}} & \ctrl{-1} & \gate{W_{5}^{(8)}} & \qw & 
}
}
\caption{An example of the variational encoding model ($L=2$ case).}
\label{tqnn_app_input_model_circuit}
\end{figure}

For the convenience of the analysis, we consider the encoding model that the number of alternating layers is $L$. The model begins with the inital state $|0\>^{\otimes n}$, where $n$ is the number of the qubit. Then we employ the $X$ gate to each qubit which transform the state into $|1\>^{\otimes n}$. Next we employ $L$ alternating layer operations, each of which contains four parts: a single qubit rotation layer denoted as $V_{2i-1}$, a $CZ$ gate layer denoted as $CZ_2$, again a single qubit rotation layer denoted as $V_{2i}$, and another $CZ$ gate layer with alternating structures denoted as $CZ_1$, for $i \in \{1,2,\cdots,L\}$. Each single qubit gate contains a parameter encoded in the phase: $W_j^{(k)} = e^{-i\sigma_2\beta_{j}^{(k)}}$, and each single qubit rotation layer could be written as
\begin{equation*}
    V_j = V_j(\bm{\beta}_j) =  W_j^{(1)} \otimes W_j^{(2)} \otimes \cdots \otimes W_j^{(n)}.
\end{equation*}
Finally, we could mathematically define the encoding model:
\begin{equation*}
U(\rho_{\text{in}}) = V_{2L+1} U_{L} U_{L-1} \cdots U_{1} X^{\otimes n},
\end{equation*}
where $U_j = CZ_1 V_{2j} CZ_2 V_{2j-1}$ is the $j$-th alternating layer.

By employing the encoding model illustrated in Figure~\ref{tqnn_app_input_model_circuit} for the state preparation, we find that the expectation of the term $\alpha(\rho_{\text{in}})$ defined in Theorem~\ref{tqnn_ttn_theorem_informal} has the lower bound independent from the qubit number.

\begin{theorem}\label{tqnn__app_input_model_bound}
Suppose the input state $\rho_{\text{in}}(\bm{\theta})$ is prepared by the $L$-layer encoding model illustrated in Figure~\ref{tqnn_app_input_model_circuit}. Then, 
\begin{equation*}
    \E_{\bm{\beta}} \alpha(\rho_{\text{in}}) = \E_{\bm{\beta}} \left( \Tr \left[ \sigma_{(1,0,\cdots,0)} \cdot \rho_{\text{in}} \right] ^2 + \Tr \left[ \sigma_{(3,0,\cdots,0)} \cdot \rho_{\text{in}} \right] ^2 \right ) \geq 2^{-2L},
\end{equation*}
where $\bm{\beta}$ denotes all variational parameters in the encoding circuit, and the expectation is taken for all parameters in $\bm{\beta}$ with uniform distribution in $[0,2\pi]$.
\end{theorem}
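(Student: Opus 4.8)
The plan is to strip the circuit from the measured end inward and reduce everything to a one-term recursion over the $L$ alternating layers. Absorbing $X^{\otimes n}$ into the initial projector (so that $|0\>\<0|^{\otimes n}\mapsto|1\>\<1|^{\otimes n}$), I write $\rho_{\text{in}}=V_{2L+1}R_L V_{2L+1}^\dag$ with $R_\ell = U_\ell\cdots U_1\,|1\>\<1|^{\otimes n}\,U_1^\dag\cdots U_\ell^\dag$, and set $u_a^{(\ell)}=\Tr[(\sigma_a\otimes I^{\otimes(n-1)})R_\ell]$ for $a\in\{1,3\}$ and $\Psi_\ell=\E_{\bm{\beta}}[(u_1^{(\ell)})^2+(u_3^{(\ell)})^2]$. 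The goal is to prove $\Psi_\ell\geq 2^{-2\ell}$ by induction on $\ell$; the case $\ell=L$ is the theorem once the outer rotation layer is accounted for.

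First I would peel off $V_{2L+1}$ exactly. Since $\sigma_{(a,0,\cdots,0)}=\sigma_a\otimes I^{\otimes(n-1)}$ lives on the first qubit and every gate is an $R_Y=e^{-i\theta\sigma_2}$ rotation, conjugating by the first-qubit gate of $V_{2L+1}$ sends $\sigma_1\mapsto\cos(2\theta)\sigma_1+\sin(2\theta)\sigma_3$ and $\sigma_3\mapsto-\sin(2\theta)\sigma_1+\cos(2\theta)\sigma_3$, an orthogonal rotation of the pair, while the remaining gates of $V_{2L+1}$ act trivially on the identity factors. Hence $\Tr[\sigma_{(1,0,\cdots,0)}\rho_{\text{in}}]^2+\Tr[\sigma_{(3,0,\cdots,0)}\rho_{\text{in}}]^2=(u_1^{(L)})^2+(u_3^{(L)})^2$ pointwise in $\bm{\beta}$, so that $\E_{\bm{\beta}}\alpha(\rho_{\text{in}})=\Psi_L$ and the $V_{2L+1}$ parameters drop out.

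The core is the recursion $\Psi_\ell\geq\frac14\Psi_{\ell-1}$, proved by tracking only the $\sigma_3\otimes I^{\otimes(n-1)}$ channel on the first qubit and discarding all other contributions via nonnegativity of squared traces. Conjugating $\sigma_3\otimes I$ inward through $U_\ell=\CZ_1 V_{2\ell}\CZ_2 V_{2\ell-1}$, both $\CZ$ layers fix it: by the collapsed case $\CZ(\sigma_3\otimes\sigma_0)\CZ^\dag=\sigma_3\otimes\sigma_0$ of Lemma~\ref{ttn_lemma_CZ} (valid whether the first qubit is control or target), $\sigma_3$ on qubit $1$ never spreads. Integrating the first-qubit $R_Y$ gate of each rotation sublayer is handled by the collapsed form of Lemma~\ref{ttn_lemma_wawbwcwd}: for $B=D=\sigma_j$ with $j\in\{1,3\}$ the coefficient $\tfrac{\delta_{j0}+\delta_{j2}}{2}$ vanishes and $\E_\theta\Tr[WAW^\dag\sigma_j]^2=\tfrac12\Tr[A\sigma_1]^2+\tfrac12\Tr[A\sigma_3]^2$. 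Applying this across $V_{2\ell}$ gives one factor $\tfrac12$ and forces us to retain only the $\sigma_3$ piece (the $\sigma_1$ piece would delocalize at $\CZ_2$ and is dropped); applying it across $V_{2\ell-1}$, which meets $R_{\ell-1}$ directly, gives a second factor $\tfrac12$ and reproduces exactly $(u_1^{(\ell-1)})^2+(u_3^{(\ell-1)})^2$. Taking the expectation over the inner layers yields $\Psi_\ell\geq\frac14\Psi_{\ell-1}$, and the base case $\Psi_0=\<1|\sigma_1|1\>^2+\<1|\sigma_3|1\>^2=1$ closes the induction to give $\Psi_L\geq 2^{-2L}$.

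I expect the main obstacle to be bookkeeping the operator spreading rather than any single hard estimate: one must check that the $\sigma_3$-on-qubit-$1$ component survives each layer intact and that every discarded term is a genuine squared trace, so that dropping it only weakens the inequality. The asymmetry between the two rotation sublayers — keeping both $\sigma_1$ and $\sigma_3$ after $V_{2\ell-1}$ but only $\sigma_3$ after $V_{2\ell}$ — is precisely what produces the exact $\tfrac14$ per layer, and verifying that the connectivity of $\CZ_1,\CZ_2$ at the first qubit never generates a $\sigma_2$ component (which would activate the discarded $\delta_{j0}+\delta_{j2}$ term and break the collapsed lemmas) is the delicate step. A lighter point to confirm is that the all-real $R_Y$/$\CZ$ gate set together with $\rho_{\text{in}}=\rho_{\text{in}}^\dag$ keeps each intermediate trace real, which legitimizes repeated use of Lemma~\ref{ttn_lemma_wawbwcwd} in its collapsed form throughout.
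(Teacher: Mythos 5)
Your proposal is correct and follows essentially the same route as the paper's proof: peel off $V_{2L+1}$ (the paper does this in expectation via Lemma~\ref{ttn_lemma_wawbwcwd}, you do it pointwise via the orthogonal rotation of the $(\sigma_1,\sigma_3)$ pair, which is equivalent here), then establish the per-layer factor $\tfrac14$ by keeping only the first-qubit $\sigma_3$ channel through $\CZ_1,\CZ_2$, discarding the delocalizing $\sigma_1$ squared terms by nonnegativity, and closing with the base case $\alpha(|1\>\<1|^{\otimes n})=1$. The only cosmetic difference is that the paper carries the $\sigma_1$ term through $\CZ_2$ (as $\sigma_{(1,0,\cdots,0,3)}$) before dropping it, whereas you drop it immediately; the two are identical since conjugation preserves the squared trace.
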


\begin{proof}
Define $\rho_{j} = U_j U_{j-1} \cdots U_1 X^{\otimes n} |0\>^{\otimes n} \<0|^{\otimes n} X^{\otimes n} U_1^{\dag} \cdots U_{j-1}^{\dag} U_j^{\dag} $, for $j \in \{0,1,\cdots,L\}$. 
We have:
\begin{align}
  &{}\  \E_{\bm{\beta}} \left( \Tr \left[ \sigma_{(1,0,\cdots,0)} \cdot \rho_{\text{in}} \right] ^2 + \Tr \left[ \sigma_{(3,0,\cdots,0)} \cdot \rho_{\text{in}} \right] ^2 \right ) \\
  =&{}\  \E_{\bm{\beta}_1} \cdots \E_{\bm{\beta}_{2L+1}} \left( \Tr \left[ \sigma_{(1,0,\cdots,0)} \cdot V_{2L+1} \rho_{L} V_{2L+1}^{\dag} \right] ^2 + \Tr \left[ \sigma_{(3,0,\cdots,0)} \cdot V_{2L+1} \rho_{L} V_{2L+1}^{\dag} \right] ^2 \right ) \label{tqnn_app_input_model_proof_1}\\
  =&{}\  \E_{\bm{\beta}_1} \cdots \E_{\bm{\beta}_{2L}} \left( \Tr \left[ \sigma_{(1,0,\cdots,0)} \cdot \rho_{L} \right] ^2 + \Tr \left[ \sigma_{(3,0,\cdots,0)} \cdot \rho_{L} \right] ^2 \right ). \label{tqnn_app_input_model_proof_2} \\
  \geq &\  2^{-2L} \left( \Tr \left[ \sigma_{(1,0,\cdots,0)} \cdot \rho_{0} \right] ^2 + \Tr \left[ \sigma_{(3,0,\cdots,0)} \cdot \rho_{0} \right] ^2 \right ) \label{tqnn_app_input_model_proof_3} \\
  = &{}\  2^{-2L} \cdot (0^2 + (-1)^{2}) = 2^{-2L},
\end{align}
where Eq.~(\ref{tqnn_app_input_model_proof_1}) is derived from the definition of $\rho_L$. Eq.~(\ref{tqnn_app_input_model_proof_2}) is derived using Lemma~\ref{ttn_lemma_wawbwcwd}. Eq.~(\ref{tqnn_app_input_model_proof_3}) is derived by noticing that for each $j\in \{0,1,\cdots,L-1\}$, the following equations holds,
\begin{align}
 &{}\  \E_{\bm{\beta}_1} \cdots \E_{\bm{\beta}_{2j+2}} \left( \Tr \left[ \sigma_{(1,0,\cdots,0)} \cdot \rho_{j+1} \right] ^2 + \Tr \left[ \sigma_{(3,0,\cdots,0)} \cdot \rho_{j+1} \right] ^2 \right ) \label{tqnn_app_input_model_proof_4}\\
 =&{}\  \E_{\bm{\beta}_1} \cdots \E_{\bm{\beta}_{2j+2}} \left( \Tr \left[ \sigma_{(1,0,\cdots,0)} \cdot U_{j+1} \rho_{j} U_{j+1}^{\dag} \right] ^2 + \Tr \left[ \sigma_{(3,0,\cdots,0)} \cdot U_{j+1} \rho_{j} U_{j+1}^{\dag} \right] ^2 \right ) \label{tqnn_app_input_model_proof_5}\\
 =&{}\  \E_{\bm{\beta}_1} \cdots \E_{\bm{\beta}_{2j+2}}  \Tr \left[ \sigma_{(1,0,\cdots,0)} \cdot CZ_1 V_{2j+2} CZ_2 V_{2j+1} \rho_{j} V_{2j+1}^{\dag} CZ_2 V_{2j+2}^{\dag} CZ_1 \right] ^2  \label{tqnn_app_input_model_proof_6} \\
 +&{}\ \E_{\bm{\beta}_1} \cdots \E_{\bm{\beta}_{2j+2}}  \Tr \left[ \sigma_{(3,0,\cdots,0)} \cdot CZ_1 V_{2j+2} CZ_2 V_{2j+1} \rho_{j} V_{2j+1}^{\dag} CZ_2 V_{2j+2}^{\dag} CZ_1 \right] ^2 \label{tqnn_app_input_model_proof_7} \\
=&{}\  \E_{\bm{\beta}_1} \cdots \E_{\bm{\beta}_{2j+2}}  \Tr \left[ \sigma_{(1,3,0,\cdots,0)} \cdot V_{2j+2} CZ_2 V_{2j+1} \rho_{j} V_{2j+1}^{\dag} CZ_2 V_{2j+2}^{\dag} \right]^2 \label{tqnn_app_input_model_proof_8} \\
 +&{}\ \E_{\bm{\beta}_1} \cdots \E_{\bm{\beta}_{2j+2}}  \Tr \left[ \sigma_{(3,0,\cdots,0)} \cdot  V_{2j+2} CZ_2 V_{2j+1} \rho_{j} V_{2j+1}^{\dag} CZ_2 V_{2j+2}^{\dag} \right] ^2 \label{tqnn_app_input_model_proof_9} \\
\geq &{}\  \E_{\bm{\beta}_1} \cdots \E_{\bm{\beta}_{2j+2}}  \Tr \left[ \sigma_{(3,0,\cdots,0)} \cdot  V_{2j+2} CZ_2 V_{2j+1} \rho_{j} V_{2j+1}^{\dag} CZ_2 V_{2j+2}^{\dag} \right] ^2 \label{tqnn_app_input_model_proof_10} \\
=&{}\  \E_{\bm{\beta}_1} \cdots \E_{\bm{\beta}_{2j+1}} \left( \frac{1}{2} \Tr \left[ \sigma_{(1,0,\cdots,0)} \cdot CZ_2 V_{2j+1} \rho_{j} V_{2j+1}^{\dag} CZ_2 \right] ^2  + \frac{1}{2} \Tr \left[ \sigma_{(3,0,\cdots,0)} \cdot  CZ_2 V_{2j+1} \rho_{j} V_{2j+1}^{\dag} CZ_2 \right] ^2 \right) \label{tqnn_app_input_model_proof_11} \\
=&{}\  \E_{\bm{\beta}_1} \cdots \E_{\bm{\beta}_{2j+1}} \left( \frac{1}{2} \Tr \left[ \sigma_{(1,0,\cdots,0,3)} \cdot V_{2j+1} \rho_{j} V_{2j+1}^{\dag} \right] ^2   + \frac{1}{2} \Tr \left[ \sigma_{(3,0,\cdots,0)} \cdot V_{2j+1} \rho_{j} V_{2j+1}^{\dag} \right] ^2 \right) \label{tqnn_app_input_model_proof_12} \\
\geq&{}\  \E_{\bm{\beta}_1} \cdots \E_{\bm{\beta}_{2j+1}}  \frac{1}{2} \Tr \left[ \sigma_{(3,0,\cdots,0)} \cdot V_{2j+1} \rho_{j} V_{2j+1}^{\dag} \right] ^2 \label{tqnn_app_input_model_proof_13} \\
=&{}\  \E_{\bm{\beta}_1} \cdots \E_{\bm{\beta}_{2j}}  \frac{1}{4}  \left( \Tr \left[ \sigma_{(1,0,\cdots,0)} \cdot  \rho_{j}  \right]^2 + \Tr \left[ \sigma_{(3,0,\cdots,0)} \cdot  \rho_{j} \right]^2 \right), \label{tqnn_app_input_model_proof_14}
\end{align}
where Eq.~(\ref{tqnn_app_input_model_proof_5}) is derived from the definition of $\rho_{j+1}$. Eq.~(\ref{tqnn_app_input_model_proof_6}-\ref{tqnn_app_input_model_proof_7}) are derived from the definition of $U_{j+1}$. Eq.~(\ref{tqnn_app_input_model_proof_8}-\ref{tqnn_app_input_model_proof_9}) and Eq.~(\ref{tqnn_app_input_model_proof_12}) are derived using Lemma~\ref{ttn_lemma_CNOT}. Eq.~(\ref{tqnn_app_input_model_proof_11}) and Eq.~(\ref{tqnn_app_input_model_proof_14}) are derived using Lemma~\ref{ttn_lemma_wawbwcwd}.

\end{proof}

\section{The Deformed Tree Tensor QNN}
\label{ttqnn_app_dtt}

Similar to the TT-QNN case, the objective function for the Deformed Tree Tensor (DTT) QNN is given in the form:
\begin{equation}\label{tqnn_app_dtt_loss}
f_{\text{DTT}}(\bm{\theta}) = \frac{1}{2} + \frac{1}{2} \Tr[\sigma_3 \otimes I^{\otimes (n-1)} V_{\text{DTT}}(\bm{\theta}) \rho_{\text{in}} V_{\text{DTT}}(\bm{\theta})^{\dag}],	
\end{equation}
where $V_{\text{DTT}}$ denotes the circuit operation of DTT-QNN which is illustrated in Figure~\ref{tqnn_app_dtt_circuit}. The lower bound result for the gradient norm of DTT-QNNs is provided in Theorem~\ref{tqnn_app_dtt_theorem}.

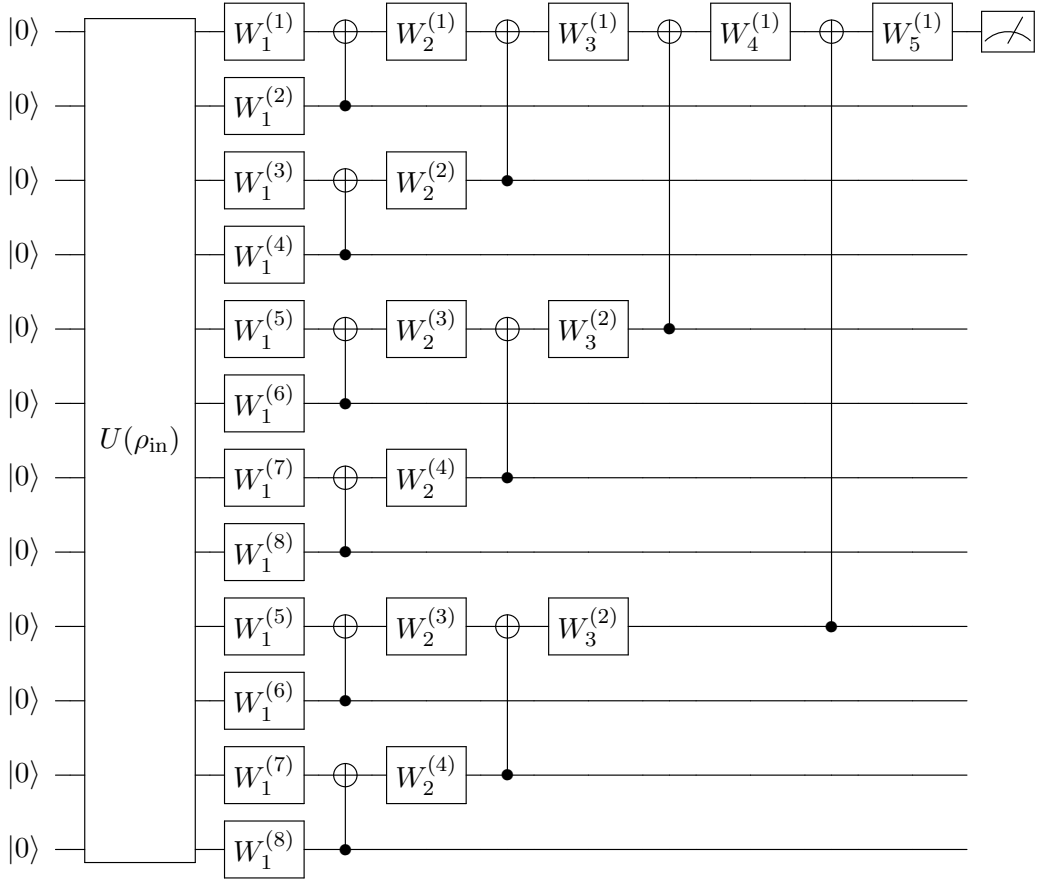
\begin{figure}[t]
\centerline{
\Qcircuit @C=0.5em @R=0.6em {
\lstick{|0\>} & \qw & \multigate{11}{U(\rho_{\text{in}})} & \qw & \gate{W_{1}^{(1)}} & \qw & \targ & \qw & \gate{W_{2}^{(1)}} & \qw & \targ & \qw & \gate{W_{3}^{(1)}} & \qw & \targ & \qw & \gate{W_{4}^{(1)}} & \qw & \targ & \qw & \gate{W_{5}^{(1)}} & \qw & \meter\\
\lstick{|0\>} & \qw & \ghost{U(\rho_{\text{in}})} & \qw & \gate{W_{1}^{(2)}} & \qw & \ctrl{-1} & \qw & \qw & \qw & \qw & \qw & \qw & \qw & \qw & \qw & \qw & \qw &  \qw & \qw & \qw & \qw & \\
\lstick{|0\>} & \qw & \ghost{U(\rho_{\text{in}})} & \qw & \gate{W_{1}^{(3)}} & \qw & \targ & \qw & \gate{W_{2}^{(2)}} & \qw & \ctrl{-2} & \qw & \qw & \qw & \qw & \qw &  \qw & \qw & \qw & \qw & \qw & \qw & \\
\lstick{|0\>} & \qw & \ghost{U(\rho_{\text{in}})} & \qw & \gate{W_{1}^{(4)}} & \qw & \ctrl{-1} & \qw & \qw & \qw & \qw & \qw & \qw & \qw & \qw & \qw & \qw & \qw & \qw &  \qw & \qw & \qw & \\
\lstick{|0\>} & \qw & \ghost{U(\rho_{\text{in}})} & \qw & \gate{W_{1}^{(5)}} & \qw & \targ & \qw & \gate{W_{2}^{(3)}} & \qw & \targ & \qw & \gate{W_{3}^{(2)}} & \qw & \ctrl{-4} & \qw & \qw & \qw & \qw &  \qw & \qw & \qw & \\
\lstick{|0\>} & \qw & \ghost{U(\rho_{\text{in}})} & \qw & \gate{W_{1}^{(6)}} & \qw & \ctrl{-1} & \qw & \qw & \qw & \qw & \qw & \qw & \qw & \qw & \qw & \qw & \qw & \qw &  \qw & \qw & \qw & \\
\lstick{|0\>} & \qw & \ghost{U(\rho_{\text{in}})} & \qw & \gate{W_{1}^{(7)}} & \qw & \targ & \qw & \gate{W_{2}^{(4)}} & \qw & \ctrl{-2} & \qw & \qw & \qw & \qw & \qw & \qw & \qw & \qw &  \qw & \qw & \qw & \\
\lstick{|0\>} & \qw & \ghost{U(\rho_{\text{in}})} & \qw & \gate{W_{1}^{(8)}} & \qw & \ctrl{-1} & \qw & \qw & \qw & \qw & \qw & \qw & \qw & \qw & \qw & \qw & \qw & \qw &  \qw & \qw & \qw & \\
\lstick{|0\>} & \qw & \ghost{U(\rho_{\text{in}})} & \qw & \gate{W_{1}^{(5)}} & \qw & \targ & \qw & \gate{W_{2}^{(3)}} & \qw & \targ & \qw & \gate{W_{3}^{(2)}} & \qw & \qw & \qw & \qw & \qw & \ctrl{-8} & \qw &  \qw & \qw & \\
\lstick{|0\>} & \qw & \ghost{U(\rho_{\text{in}})} & \qw & \gate{W_{1}^{(6)}} & \qw & \ctrl{-1} & \qw & \qw & \qw & \qw & \qw & \qw & \qw & \qw & \qw & \qw & \qw & \qw &  \qw & \qw & \qw & \\
\lstick{|0\>} & \qw & \ghost{U(\rho_{\text{in}})} & \qw & \gate{W_{1}^{(7)}} & \qw & \targ & \qw & \gate{W_{2}^{(4)}} & \qw & \ctrl{-2} & \qw & \qw & \qw & \qw & \qw & \qw & \qw & \qw &  \qw & \qw & \qw & \\
\lstick{|0\>} & \qw & \ghost{U(\rho_{\text{in}})} & \qw & \gate{W_{1}^{(8)}} & \qw & \ctrl{-1} & \qw & \qw & \qw & \qw & \qw & \qw & \qw & \qw & \qw & \qw & \qw & \qw & \qw & \qw & \qw & 
}\normalsize
}
\caption{Quantum Neural Network with the Deformed Tree Tensor structure (qubit number = 12).}
\label{tqnn_app_dtt_circuit}
\end{figure}

\begin{theorem}\label{tqnn_app_dtt_theorem}
Consider the $n$-qubit DTT-QNN defined in Figure~\ref{tqnn_app_dtt_circuit} and the corresponding objective function $f_{\text{DTT}}$ defined in (\ref{tqnn_app_dtt_loss}), then we have:
\begin{equation}\label{tqnn_app_dtt_theorem_gradient_bound}
\frac{1+\log n}{4n} \cdot \alpha(\rho_{\text{in}}) \leq \E_{\bm{\theta}} \|\nabla_{\bm{\theta}} f_{\text{DTT}}\|^2 \leq 2n-1,
\end{equation}
where the expectation is taken for all parameters in $\bm{\theta}$ with uniform distributions in $[0,2\pi]$, $\rho_{\text{in}}\in \C^{2^n \times 2^n}$ denotes the input state, $\alpha(\rho_{\text{in}}) =  \Tr \left[ \sigma_{(1,0,\cdots,0)} \cdot \rho_{\text{in}} \right] ^2 + \Tr \left[ \sigma_{(3,0,\cdots,0)} \cdot \rho_{\text{in}} \right] ^2 $, and 
$ \sigma_{(i_1,i_2,\cdots,i_{n})} \equiv \sigma_{i_1} \otimes \sigma_{i_2} \otimes \cdots \otimes \sigma_{i_{n}}$.
\end{theorem}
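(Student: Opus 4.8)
The plan is to follow the three-step framework used for the TT-QNN in Appendix~\ref{tqnn_app_proof_main_theorem}, since the DTT-QNN is assembled from exactly the same ingredients---single-qubit $R_Y$ rotations $W_j^{(k)}=e^{-i\theta_j^{(k)}\sigma_2}$ and CNOT gates, with the observable $\sigma_3\otimes I^{\otimes(n-1)}$ read out on the first qubit---so the generic Lemmas~\ref{ttn_lemma_CNOT}, \ref{ttn_lemma_wawbwcwd}, and \ref{ttn_lemma_gawbgcwd} apply verbatim. The upper bound is immediate and identical to the TT case: by the parameter-shifting rule (Lemma~\ref{tqnn_params_shifting}) each partial derivative is a difference of two expectations of a unitary observable, each lying in $[-1,1]$, so $(\partial f_{\text{DTT}}/\partial\theta_j^{(k)})^2\leq 1$; summing over the $2n-1$ parameters gives $\E_{\bm{\theta}}\|\nabla_{\bm{\theta}} f_{\text{DTT}}\|^2\leq 2n-1$.

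For the lower bound I would discard all but the $1+\log n$ rotation parameters sitting on the first qubit channel (the gates $W_1^{(1)},W_2^{(1)},\dots$ in Figure~\ref{tqnn_app_dtt_circuit}), bounding $\E_{\bm{\theta}}\|\nabla_{\bm{\theta}} f_{\text{DTT}}\|^2\geq\sum_j\E_{\bm{\theta}}(\partial f_{\text{DTT}}/\partial\theta_j^{(1)})^2$, and then prove two supporting estimates. First, the analog of Lemma~\ref{ttn_lemma_theta_0}, namely $\E_{\bm{\theta}}(\partial f_{\text{DTT}}/\partial\theta_j^{(1)})^2 = 4\,\E_{\bm{\theta}}(f_{\text{DTT}}-\tfrac{1}{2})^2$ for each such parameter; this rests on the fact that when the first Pauli index lies in $\{1,3\}$ the correction term $\tfrac{\delta_{j0}+\delta_{j2}}{2}$ vanishes, so Lemmas~\ref{ttn_lemma_wawbwcwd} and \ref{ttn_lemma_gawbgcwd} collapse to one formula, while the collapsed identities $\CNOT(\sigma_1\otimes\sigma_0)\CNOT^\dag=\sigma_1\otimes\sigma_0$ and $\CNOT(\sigma_3\otimes\sigma_0)\CNOT^\dag=\sigma_3\otimes\sigma_3$ guarantee the first index stays in $\{1,3\}$ as later layers are integrated out. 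Second, the analog of Lemma~\ref{tqnn_ttn_lemma_Ef2}, namely $\E_{\bm{\theta}}(f_{\text{DTT}}-\tfrac{1}{2})^2\geq\alpha(\rho_{\text{in}})/(16n)$: here I would propagate the $\sigma_1$ and $\sigma_3$ strings backward from the first qubit, using that the $\sigma_1$ string remains localized on the first qubit through every CNOT, so integrating each first-qubit rotation layer contributes only a factor $1/2$. Tracking these halvings through the deformed branching is expected to yield $16n$ in the denominator (versus $8n$ for the balanced tree), which accounts for the $\tfrac{1+\log n}{4n}$ prefactor once the two estimates are combined and summed over the $1+\log n$ selected parameters.

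The main obstacle is the irregular branching of the DTT circuit. In the balanced binary tree the backward propagation is self-similar, so the $1/2$ factors and the index bookkeeping follow a clean recursion; in the deformed version the sub-trees feeding the first qubit have unequal sizes and attach at different depths (in Figure~\ref{tqnn_app_dtt_circuit} qubits $3$, $5$, and $9$ control qubit $1$ at increasing depths). I would therefore have to verify carefully that, despite this irregularity, (i) the first Pauli index never becomes $2$ during the entire backward sweep---so that the two integration lemmas keep collapsing---and (ii) the accumulated number of halving steps along the worst-case path stays controlled enough to deliver the uniform $1/(16n)$ bound. Both points reduce to tracking the support of the Heisenberg-propagated Pauli operator through the specific CNOT pattern, which is routine but delicate because the structure is specified only pictorially and the deformation must be handled for general $n$ rather than a single self-similar template.
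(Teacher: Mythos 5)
Your proposal follows essentially the same route as the paper's proof: the identical parameter-shift upper bound, the same restriction of the gradient norm to the $1+\lceil\log n\rceil$ parameters on the first qubit channel, and the same two supporting lemmas (the identity $\E_{\bm{\theta}}\left(\partial f_{\text{DTT}}/\partial\theta_j^{(1)}\right)^2 = 4\,\E_{\bm{\theta}}\left(f_{\text{DTT}}-\tfrac{1}{2}\right)^2$ and the estimate $\E_{\bm{\theta}}\left(f_{\text{DTT}}-\tfrac{1}{2}\right)^2 \geq \alpha(\rho_{\text{in}})/(16n)$), with the $16n$ arising exactly as you predict, from $2^{\lceil\log n\rceil+3}\leq 16n$. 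The "delicate" bookkeeping you flag resolves just as you anticipate: since the observable sits only on the first qubit, the collapsed identities $\CNOT(\sigma_1\otimes\sigma_0)\CNOT^\dag=\sigma_1\otimes\sigma_0$, $\CNOT(\sigma_3\otimes\sigma_0)\CNOT^\dag=\sigma_3\otimes\sigma_3$ together with the $R_Y$ integration lemmas keep the first Pauli index in $\{1,3\}$ irrespective of how the deformed sub-trees attach, so the paper's argument is a verbatim copy of the TT case with $\log n$ replaced by $\lceil\log n\rceil$.
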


\begin{proof}

Firstly we remark that by Lemma~\ref{tqnn_params_shifting}, each partial derivative is calculated as
\begin{equation*}
\frac{\partial f_{\text{DTT}}}{\partial \theta_j^{(k)}} = \frac{1}{2} \left( \Tr [O \cdot V_{\text{DTT}}(\bm{\theta}_+) \rho_{\text{in}} V_{\text{DTT}}(\bm{\theta}_+)^{\dag}] - \Tr [O \cdot V_{\text{DTT}}(\bm{\theta}_-) \rho_{\text{in}} V_{\text{DTT}}(\bm{\theta}_-)^{\dag}] \right),	
\end{equation*}
since the expectation of the quantum observable is bounded by $[-1,1]$, the square of the partial derivative could be easily bounded as:
\begin{equation*}
\left(\frac{\partial f_{\text{DTT}}}{\partial \theta_j^{(k)}}\right)^2 \leq 1.
\end{equation*}
By summing up $2n-1$ parameters, we obtain
\begin{equation*}
 \|\nabla_{\bm{\theta}} f_{\text{DTT}}\|^2 = \sum_{j,k} \left(\frac{\partial f_{\text{DTT}}}{\partial \theta_j^{(k)}}\right)^2 \leq 2n-1.
\end{equation*}

On the other side, the lower bound could be derived as follows,
\begin{align}
\E_{\bm{\theta}} \|\nabla_{\bm{\theta}} f_{\text{DTT}}\|^2 {} & \geq \sum_{j=1}^{1 + \lceil \log n \rceil} \E_{\bm{\theta}} \left(\frac{\partial f_{\text{DTT}}}{\partial \theta_j^{(1)}}\right)^2 \\
&= \sum_{j=1}^{1 + \lceil \log n \rceil} 4 \E_{\bm{\theta}} \left(f_{\text{DTT}}-\frac{1}{2}\right)^2 \label{tqnn_dtt_proof_1} \\
& \geq \frac{1 + \log n}{4n} \cdot \left( \Tr \left[ \sigma_{(1,0,\cdots,0)} \cdot \rho_{\text{in}} \right] ^2 + \Tr \left[ \sigma_{(3,0,\cdots,0)} \cdot \rho_{\text{in}} \right] ^2 \right ), \label{tqnn_dtt_proof_2}
\end{align}
where Eq.~(\ref{tqnn_dtt_proof_1}) is derived using Lemma~\ref{tqnn_app_dtt_lemma_theta_0}, and Eq.~(\ref{tqnn_dtt_proof_2}) is derived using Lemma~\ref{tqnn_app_dtt_lemma_Ef2}.

\end{proof}

\begin{lemma}\label{tqnn_app_dtt_lemma_theta_0}
For the objective function $f_{\text{DTT}}$ defined in Eq.~(\ref{tqnn_app_dtt_loss}), the following formula holds for every $j \in \{1,2,\cdots, 1+\lceil \log n \rceil \}$:
\begin{equation}\label{tqnn_app_dtt_lemma_theta_0_mean_result}
\E_{\bm{\theta}} \left( \frac{\partial f_{\text{DTT}}}{\partial \theta_{j}^{(1)}} \right)^2 =  4 \cdot \E_{\bm{\theta}} (f_{\text{DTT}}-\frac{1}{2})^2,	
\end{equation}
where the expectation is taken for all parameters in $\theta$ with uniform distribution in $[0,2\pi]$.
\end{lemma}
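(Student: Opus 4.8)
The plan is to mirror the proof of Lemma~\ref{ttn_lemma_theta_0} for the tree tensor case, adapting it to the deformed connectivity of the DTT circuit. First I would write $f_{\text{DTT}}$ explicitly as
\begin{equation*}
f_{\text{DTT}} = \frac{1}{2} + \frac{1}{2} \Tr\left[\sigma_{(3,0,\cdots,0)} \cdot V_{m+1} CX_m \cdots CX_1 V_1 \cdot \rho_{\text{in}} \cdot V_1^\dag CX_1^\dag \cdots CX_m^\dag V_{m+1}^\dag\right],
\end{equation*}
where $m = \lceil \log n\rceil$, each $V_\ell$ collects the single-qubit $R_Y$ rotations of the $\ell$-th layer and each $CX_\ell$ collects its CNOT gates according to the deformed tree of Figure~\ref{tqnn_app_dtt_circuit}. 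Because every gate ($R_Y$ and CNOT) is a real matrix and $\rho_{\text{in}} = \rho_{\text{in}}^\dag$, the two terms produced by the product rule when differentiating with respect to $\theta_j^{(1)}$ coincide, exactly as in Eq.~(\ref{ttn_theta_0_1_3}); this yields a single-term expression for $\partial f_{\text{DTT}} / \partial \theta_j^{(1)}$ in which $W_j^{(1)}$ is replaced by $G_j^{(1)} = \partial W_j^{(1)}/\partial \theta_j^{(1)}$.

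Next I would form the difference $\E_{\bm{\theta}}[(\partial f_{\text{DTT}}/\partial\theta_j^{(1)})^2 - 4(f_{\text{DTT}} - \tfrac{1}{2})^2]$ and integrate the parameters layer by layer from the outermost layer $m+1$ inward, applying Lemma~\ref{ttn_lemma_wawbwcwd} to the $f_{\text{DTT}}$ square and Lemma~\ref{ttn_lemma_gawbgcwd} to the derivative square. The decisive observation, as in the TT argument, is that the measured observable is supported on the first qubit with Pauli index $3 \in \{1,3\}$, and by the collapsed form of Lemma~\ref{ttn_lemma_CNOT} together with Lemma~\ref{ttn_lemma_wawbwcwd}, the first Pauli index of every surviving tensor string stays within $\{1,3\}$ at every stage of the integration (it can never become $0$ or $2$). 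For such indices $i_1 \in \{1,3\}$ the factor $(\delta_{i_1 0} + \delta_{i_1 2})/2$ vanishes, so Lemma~\ref{ttn_lemma_wawbwcwd} and Lemma~\ref{ttn_lemma_gawbgcwd} produce identical output, and each pair of contributions $\Tr[\sigma_{\bm{i}} A_j]^2$ and $\Tr[\sigma_{\bm{i}} B_j]^2$ carries opposite coefficients that cancel in expectation. Carrying out the integration down to the layer $j$ containing $\theta_j^{(1)}$ then gives $\E_{\theta_j^{(1)}}[\Tr[\sigma_{\bm{i}} A_j]^2 - \Tr[\sigma_{\bm{i}} B_j]^2] = 0$, which is Eq.~(\ref{tqnn_app_dtt_lemma_theta_0_mean_result}).

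The main obstacle will be verifying the invariant that the first Pauli index remains in $\{1,3\}$ throughout the deformed tree, since the DTT connects the first qubit to controls arriving from several unequal sub-blocks rather than from a clean binary halving. I would handle this by checking that each CNOT touching the first channel has that channel as its \emph{target}, so that the collapsed Lemma~\ref{ttn_lemma_CNOT} with $\sigma_1$ or $\sigma_3$ on the target maps $\{1,3\} \to \{1,3\}$, and that every first-qubit rotation $W_\ell^{(1)}$ feeds Lemma~\ref{ttn_lemma_wawbwcwd}, whose output on a string with $i_1 \in \{1,3\}$ again has first index in $\{1,3\}$. A secondary point is bookkeeping the depth as $1 + \lceil \log n\rceil$ rather than $1 + \log n$, which only changes the range of $j$ and propagates the extra factor reflected in the $\frac{1+\log n}{4n}$ bound of Theorem~\ref{tqnn_app_dtt_theorem}; the per-$j$ identity itself is unaffected.
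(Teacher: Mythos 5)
Your proposal follows essentially the same route as the paper's own proof: the realness argument collapsing the product rule to a single term, layer-by-layer integration with Lemma~\ref{ttn_lemma_wawbwcwd} and Lemma~\ref{ttn_lemma_gawbgcwd}, and the key cancellation from the fact that the first Pauli index stays in $\{1,3\}$ (via the collapsed form of Lemma~\ref{ttn_lemma_CNOT}, since the first qubit is always the CNOT target), making $(\delta_{i_1 0}+\delta_{i_1 2})/2 = 0$ so both lemmas coincide. Your explicit verification of the $\{1,3\}$ invariant under the deformed connectivity is a point the paper leaves implicit, but the argument is the same and is correct.
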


\begin{proof}
The proof has a very similar formulation compare to the original tree tensor case.
First we rewrite the formulation of $f_{\text{DTT}}$ in detail:
\begin{equation}\label{tqnn_app_dtt_lemma_theta_0_0}
f_{\text{DTT}} =\frac{1}{2} + \frac{1}{2} \Tr \left[ \sigma_{(3,0,\cdots,0)} \cdot V_{m+1} CX_{m} \cdots CX_{1} V_{1} \cdot \rho_{\text{in}} \cdot V_{1}^{\dag} CX_{1}^{\dag} \cdots CX_{m}^{\dag} V_{m+1}^{\dag} \right],	
\end{equation}
where $m=\lceil \log n \rceil$ and we denote 
\begin{equation*}
\sigma_{(i_1,i_2,\cdots,i_{n})} \equiv \sigma_{i_1} \otimes \sigma_{i_2} \otimes \cdots \otimes \sigma_{i_{n}}.
\end{equation*}
The operation $V_\ell$ consists of $\lfloor n \cdot 2^{1-\ell} \rfloor$ single qubit rotations $W_{\ell}^{(j)} = e^{-i \sigma_2 \theta_{\ell}^{(j)}}$ on the $(j-1)\cdot 2^{\ell-1}+1$-th qubit, where $j = 1,2,\cdots, \lfloor n \cdot 2^{1-\ell} \rfloor$. The operation $CX_{\ell}$ consists of $ \lfloor n\cdot 2^{-\ell} \rfloor$ CNOT gates, where each of them acts on the $(j-1)\cdot 2^\ell+1$-th and $(j-0.5)\cdot 2^\ell+1$-th qubit, for $j = 1,2,\cdots, \lfloor n \cdot 2^{-\ell} \rfloor$.

Now we focus on the partial derivative of the function $f$ to the parameter $\theta_{j}^{(1)}$. We have:
\begin{align}
\frac{\partial f_{\text{DTT}}}{\partial \theta_{j}^{(1)}} &= \frac{1}{2} \Tr \left[ \sigma_{(3,0,\cdots,0)} \cdot V_{m+1} CX_{m} \cdots \frac{\partial V_j}{\theta_j^{(1)}} \cdots CX_{1} V_{1} \cdot \rho_{\text{in}} \cdot V_{1}^{\dag} CX_{1}^{\dag} \cdots V_j^\dag \cdots CX_{m}^{\dag} V_{m+1}^{\dag} \right] \label{tqnn_app_dtt_lemma_theta_0_1_1} \\
&+ \frac{1}{2} \Tr \left[ \sigma_{(3,0,\cdots,0)} \cdot V_{m+1} CX_{m} \cdots V_j \cdots CX_{1} V_{1} \cdot \rho_{\text{in}} \cdot V_{1}^{\dag} CX_{1}^{\dag} \cdots \frac{\partial V_j^\dag}{\theta_j^{(1)}} \cdots CX_{m}^{\dag} V_{m+1}^{\dag} \right] \label{tqnn_app_dtt_lemma_theta_0_1_2} \\
&= \Tr \left[ \sigma_{(3,0,\cdots,0)} \cdot V_{m+1} CX_{m} \cdots \frac{\partial V_j}{\theta_j^{(1)}} \cdots CX_{1} V_{1} \cdot \rho_{\text{in}} \cdot V_{1}^{\dag} CX_{1}^{\dag} \cdots V_j^\dag \cdots CX_{m}^{\dag} V_{m+1}^{\dag} \right]. \label{tqnn_app_dtt_lemma_theta_0_1_3}
\end{align}
The Eq.~(\ref{tqnn_app_dtt_lemma_theta_0_1_3}) holds because both terms in (\ref{tqnn_app_dtt_lemma_theta_0_1_1}) and (\ref{tqnn_app_dtt_lemma_theta_0_1_2}) except $\rho_{\text{in}}$ are real matrices, and $\rho_{\text{in}} = \rho_{\text{in}}^{\dag}$.
Similar to the tree tensor case, the key idea to derive $\E_{\bm{\theta}} \left( \frac{\partial f_{\text{DTT}}}{\partial \theta_{j}^{(1)}} \right)^2 =  4 \cdot \E_{\theta} (f_{\text{DTT}}-\frac{1}{2})^2$ is that for cases $B=D=\sigma_j \in \{\sigma_1,\sigma_3\}$, the term $\frac{\delta_{j0} + \delta_{j2}}{2}=0$, which means Lemma~\ref{ttn_lemma_wawbwcwd} and Lemma~\ref{ttn_lemma_gawbgcwd} collapse to the same formulation:
\begin{eqnarray*}
\E_{\theta} \Tr[W A W^\dag B] \Tr[W C W^\dag D] &=& 	\E_{\theta} \Tr[G A W^\dag B] \Tr[G C W^\dag D] \\  &=&  \frac{1}{2}  \Tr[A\sigma_1] \Tr[C\sigma_1] + \frac{1}{2} \Tr[A\sigma_3] \Tr[C\sigma_3].
\end{eqnarray*}

Now we write the analysis in detail.

\begin{align}
&\E_{\bm{\theta}}  \left[	\left( \frac{\partial f_{\text{DTT}} }{\partial \theta_{j}^{(1)}} \right)^2 - 4 (f_{\text{DTT}}-\frac{1}{2})^2 \right]\\
=& \E_{\bm{\theta}_1} \cdots \E_{\bm{\theta}_{m}} \E_{\theta_{m+1}^{(1)}} \Tr \left[\sigma_{(3,0,\cdots,0)} \cdot V_{m+1} CX_{m} A_{m} CX_{m} V_{m+1}^{\dag} \right]^2 \\
-& \E_{\bm{\theta}_1} \cdots \E_{\bm{\theta}_{m}} \E_{\theta_{m+1}^{(1)}} \Tr \left[ \sigma_{(3,0,\cdots,0)} \cdot V_{m+1} CX_{m} B_{m} CX_{m} V_{m+1}^{\dag}\right]^2 \\
=& \E_{\bm{\theta}_1} \cdots \E_{\bm{\theta}_{m}} \left\{ \frac{1}{2} \Tr \left[ \sigma_{(3,0,\cdots,0,3,0,\cdots,0)} \cdot A_{m}\right]^2 + \frac{1}{2} \Tr \left[ \sigma_{(1,0,\cdots,0,0,0,\cdots,0)} \cdot A_{m}\right]^2 \right\} \label{tqnn_app_dtt_lemma_theta_0_2_3} \\
-& \E_{\bm{\theta}_1} \cdots \E_{\bm{\theta}_{m}} \left\{ \frac{1}{2} \Tr \left[ \sigma_{(3,0,\cdots,0,3,0,\cdots,0)} \cdot B_{m}\right]^2 + \frac{1}{2} \Tr \left[ \sigma_{(1,0,\cdots,0,0,0,\cdots,0)} \cdot B_{m}\right]^2 \right\},\label{tqnn_app_dtt_lemma_theta_0_2_4}
\end{align}
where 
\begin{align*}
A_{m} &= 	V_{m} CX_{m-1} \cdots \frac{\partial V_j}{\theta_j^{(1)}} \cdots CX_{1} V_{1} \cdot \rho_{\text{in}} \cdot V_{1}^{\dag} CX_{1}^{\dag} \cdots V_j^\dag \cdots CX_{m-1}^{\dag} V_{m}^{\dag}, \\
B_{m} &= V_{m} CX_{m-1} \cdots V_j \cdots CX_{1} V_{1} \cdot \rho_{\text{in}} \cdot V_{1}^{\dag} CX_{1}^{\dag} \cdots V_j^\dag \cdots CX_{m-1}^{\dag} V_{m}^{\dag}.
\end{align*}
Eq.~(\ref{tqnn_app_dtt_lemma_theta_0_2_3}) and Eq.~(\ref{tqnn_app_dtt_lemma_theta_0_2_4}) are derived using the collapsed form of Lemma~\ref{ttn_lemma_CNOT}:
\begin{equation*}
\CNOT (\sigma_1 \otimes \sigma_0) \CNOT^{\dag} = \sigma_1 \otimes \sigma_0, \ \CNOT (\sigma_3 \otimes \sigma_0 ) \CNOT^{\dag} = \sigma_3 \otimes \sigma_3,
\end{equation*}
and $\bm{\theta}_{\ell}$ denotes the vector consisted with all parameters in the $\ell$-th layer.
The integration could be performed for parameters $\{\bm{\theta}_{m}, \bm{\theta}_{m-1}, \cdots, \bm{\theta}_{j+1}\}$. It is not hard to find that after the integration of the parameters $\bm{\theta}_{j+1}$, the term $\Tr[\sigma_{\bm{i}} \cdot A_j]^2$ and  $\Tr[\sigma_{\bm{i}} \cdot B_j]^2$ have the opposite coefficients. Besides, the first index of each Pauli tensor product $\sigma_{(i_1, i_2,\cdots,i_{n})}$ could only be $i_1 \in \{1,3\}$ because of the Lemma~\ref{ttn_lemma_wawbwcwd}.
So we could write
\begin{align}
& \E_{\bm{\theta}} 	\left[ \left( \frac{\partial f_{\text{DTT}} }{\partial \theta_{j}^{(1)}} \right)^2 - 4 (f_{\text{DTT}}-\frac{1}{2})^2 \right] \\
=& \E_{\bm{\theta}_1} \cdots \E_{\bm{\theta}_{j}} \left\{ \sum_{i_1 \in\{1,3\}} \sum_{i_2=0}^{3} \cdots \sum_{i_{n}=0}^{3} a_{\bm{i}} \Tr \left[ \sigma_{\bm{i}} \cdot A_j\right]^2 - a_{\bm{i}} \Tr \left[ \sigma_{\bm{i}} \cdot B_j\right]^2 \right\}
\end{align}
where 
\begin{align*}
A_{j} &= \frac{\partial V_j}{\partial \theta_j^{(1)}} CX_{j-1} \cdots CX_{1} V_{1} \cdot \rho_{\text{in}} \cdot V_{1}^{\dag} CX_{1}^{\dag} \cdots CX_{j-1} V_j^\dag \\
 &= (G_{j}^{(1)} \otimes I^{\otimes (n-1)}) A_{j}^{/\theta_{j}^{(1)}} (W_{j}^{(1)\dag} \otimes I^{\otimes (n-1)}), \\
B_{j} &= V_j CX_{j-1} \cdots CX_{1} V_{1} \cdot \rho_{\text{in}} \cdot V_{1}^{\dag} CX_{1}^{\dag} \cdots CX_{j-1} V_j^\dag \\ 
&= (W_{j}^{(1)} \otimes I^{\otimes (n-1)}) A_{j}^{/\theta_{j}^{(1)}} (W_{j}^{(1)\dag} \otimes I^{\otimes (n-1)}),
\end{align*}
and $a_{\bm{i}}$ is the coefficient of the term $\Tr \left[ \sigma_{\bm{i}} \cdot A_j\right]^2$.
We denote $G_j^{(1)} = \frac{\partial W_{j}^{(1)}}{\partial \theta_{j}^{(1)}}$ and use $A_{j}^{/\theta_{j}^{(1)}}$ to denote the rest part of $A_j$ and $B_j$.
By Lemma~\ref{ttn_lemma_wawbwcwd} and Lemma~\ref{ttn_lemma_gawbgcwd}, we have
\begin{align*}
\E_{\theta_j^{(1)}}	\left[ \Tr \left[ \sigma_{\bm{i}} \cdot A_j \right]^2 - \Tr \left[ \sigma_{\bm{i}} \cdot B_j \right]^2  \right] = 0,
\end{align*}
since for the case $i_1 \in \{1,3\}$, the term $\frac{\delta_{i_1 0} + \delta_{i_1 2}}{2} = 0$, which means Lemma~\ref{ttn_lemma_wawbwcwd} and Lemma~\ref{ttn_lemma_gawbgcwd} have the same formulation. Then, we derive the Eq.~(\ref{tqnn_app_dtt_lemma_theta_0_mean_result}).

\end{proof}

\begin{lemma}\label{tqnn_app_dtt_lemma_Ef2}
For the loss function $f_{\text{DTT}}$ defined in (\ref{tqnn_app_dtt_loss}), we have:
\begin{equation}
\E_{\bm{\theta}} (f_{\text{DTT}}-\frac{1}{2})^2 \geq \frac{\left\{ \Tr \left[ \sigma_{(1,0,\cdots,0)} \cdot \rho_{\text{in}} \right] \right\}^2 + \left\{ \Tr \left[ \sigma_{(3,0,\cdots,0)} \cdot \rho_{\text{in}} \right] \right\}^2 }{16n},	
\end{equation}
where we denote 
\begin{equation*}
\sigma_{(i_1,i_2,\cdots,i_{n})} \equiv \sigma_{i_1} \otimes \sigma_{i_2} \otimes \cdots \otimes \sigma_{i_{n}},
\end{equation*}
and the expectation is taken for all parameters in $\theta$ with uniform distributions in $[0,2\pi]$.
\end{lemma}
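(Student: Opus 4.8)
The plan is to mirror the proof of Lemma~\ref{tqnn_ttn_lemma_Ef2} almost verbatim, replacing the balanced depth $m=\log n$ by the deformed depth $m=\lceil \log n\rceil$ and tracking the single Pauli string $\sigma_{(1,0,\cdots,0)}$ backward through the circuit. First I would expand $f_{\text{DTT}}$ exactly as in Eq.~(\ref{tqnn_app_dtt_lemma_theta_0_0}), so that $(f_{\text{DTT}}-\tfrac{1}{2})^2 = \tfrac{1}{4}\{\Tr[\sigma_{(3,0,\cdots,0)}\, V_{m+1}CX_m\cdots CX_1 V_1 \rho_{\text{in}} V_1^\dag CX_1^\dag \cdots CX_m^\dag V_{m+1}^\dag]\}^2$, and then integrate the parameters layer by layer, starting from the outermost rotation layer $V_{m+1}$ and working down toward $\rho_{\text{in}}$.

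Second, integrating the top first-qubit parameter $\theta_{m+1}^{(1)}$ with Lemma~\ref{ttn_lemma_wawbwcwd} (with $B=D=\sigma_3$, so $\delta_{j0}+\delta_{jk}=0$) splits the square into a $\sigma_3$-term and a $\sigma_1$-term, each carrying coefficient $\tfrac{1}{8}$. Pushing these through $CX_m$ by the collapsed form of Lemma~\ref{ttn_lemma_CNOT}, namely $\CNOT(\sigma_1\otimes\sigma_0)\CNOT^\dag=\sigma_1\otimes\sigma_0$ and $\CNOT(\sigma_3\otimes\sigma_0)\CNOT^\dag=\sigma_3\otimes\sigma_3$, keeps the leading index in $\{1,3\}$, and I would then discard the nonnegative $\sigma_3$-term and retain only the $\sigma_{(1,0,\cdots,0)}$ contribution as a lower bound. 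The decisive observation, identical to the TT case, is that $\sigma_{(1,0,\cdots,0)}$ is invariant under every tree CNOT (the first qubit being the target whose partner carries $\sigma_0$), so this string survives unchanged through all of $CX_m,\ldots,CX_1$, while each integration of the first-qubit rotation in layers $m+1,m,\ldots,2$ multiplies its coefficient by exactly $\tfrac{1}{2}$ via Lemma~\ref{ttn_lemma_wawbwcwd}, and the rotations on the remaining qubits (where the string carries $\sigma_0$, giving $\delta_{j0}+\delta_{jk}=1$) leave the coefficient unchanged.

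Third, after integrating layers $2$ through $m+1$ I expect to reach $(\tfrac{1}{2})^{m+2}\{\Tr[\sigma_{(1,0,\cdots,0)}\,V_1\rho_{\text{in}}V_1^\dag]\}^2$, and a final application of Lemma~\ref{ttn_lemma_wawbwcwd} to the first-qubit rotation of layer $1$ (now with $B=D=\sigma_1$) restores both the $\sigma_1$ and the $\sigma_3$ components of $\rho_{\text{in}}$, yielding $\tfrac{1}{2^{m+3}}\big(\{\Tr[\sigma_{(1,0,\cdots,0)}\rho_{\text{in}}]\}^2 + \{\Tr[\sigma_{(3,0,\cdots,0)}\rho_{\text{in}}]\}^2\big)$. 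The stated inequality then follows from the ceiling estimate $2^{\lceil \log n\rceil}\le 2n$ (since $\lceil \log n\rceil \le \log n + 1$), which gives $2^{m+3}\le 16n$ and hence the claimed $\tfrac{1}{16n}$ factor; this extra factor of two relative to the $\tfrac{1}{8n}$ of Lemma~\ref{tqnn_ttn_lemma_Ef2} is precisely the price of the ceiling.

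The main obstacle I anticipate is the bookkeeping for the irregular CNOT pattern of the deformed tree in Figure~\ref{tqnn_app_dtt_circuit}. Unlike the balanced tree, the qubits merged at each layer do not form a clean geometric progression, so I must verify carefully that (i) at each merging step the first qubit channel acts as a CNOT target with its partner carrying $\sigma_0$, so the invariance $\sigma_{(1,0,\cdots,0)}\mapsto\sigma_{(1,0,\cdots,0)}$ genuinely holds layer by layer, and (ii) there are exactly $m+1=\lceil \log n\rceil+1$ layers whose first-qubit rotation contributes a $\tfrac{1}{2}$ factor. Once the layer-by-layer invariance of $\sigma_{(1,0,\cdots,0)}$ and the count of active first-qubit rotations are settled, the remaining computation is routine and reproduces the TT argument line for line.
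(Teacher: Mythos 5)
Your proposal is correct and follows essentially the same route as the paper's own proof: expand $f_{\text{DTT}}$, integrate the top first-qubit rotation via Lemma~\ref{ttn_lemma_wawbwcwd}, discard the $\sigma_3$-term and track the CNOT-invariant string $\sigma_{(1,0,\cdots,0)}$ down through the layers (picking up a factor $\tfrac{1}{2}$ per first-qubit rotation and nothing from the other rotations), recover both Pauli components at layer $1$ to get the $\tfrac{1}{2^{m+3}}$ bound with $m=\lceil\log n\rceil$, and finish with $2^{\lceil\log n\rceil}\le 2n$. The two verification points you flag (invariance of $\sigma_{(1,0,\cdots,0)}$ under the deformed tree's CNOTs, and the count of first-qubit rotations) are exactly the facts the paper relies on, and both hold for the DTT structure.
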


\begin{proof}

First we expand the function $f_{\text{DTT}}$ in detail,
\begin{align}
f_{\text{TT}}  = \frac{1}{2} + \frac{1}{2} \Tr \left[ \sigma_{(3,0,\cdots,0)} \cdot V_{m+1} CX_{m} \cdots CX_{1} V_{1} \cdot \rho_{\text{in}} \cdot V_{1}^{\dag} CX_{1}^{\dag} \cdots CX_{m}^{\dag} V_{m+1}^{\dag} \right],
\end{align}
where $m= \lceil \log n \rceil$.
Now we consider the expectation of $(f_{\text{DTT}}-\frac{1}{2})^2$ under the uniform distribution for $\theta_{m+1}^{(1)}$: 
\begin{align}
\E_{\theta_{m+1}^{(1)}} (f_{\text{DTT}}-\frac{1}{2})^2 &= \frac{1}{4} \E_{\theta_{m+1}^{(1)}}  \left\{ \Tr \left[ \sigma_{(3,0,\cdots,0)} \cdot V_{m+1} CX_{m} \cdots CX_{1} V_{1} \cdot \rho_{\text{in}} \cdot V_{1}^{\dag} CX_{1}^{\dag} \cdots CX_{m}^{\dag} V_{m+1}^{\dag} \right] \right\}^2 \\
&= \frac{1}{8}  \left\{ \Tr \left[ \sigma_{(3,0,\cdots,0)} \cdot A' \right] \right\}^2 + \frac{1}{8} \left\{ \Tr \left[ \sigma_{(1,0,\cdots,0)} \cdot A' \right] \right\}^2 \label{tqnn_dtt_lemma_Ef2_0_1} \\
&= \frac{1}{8}  \left\{ \Tr \left[ \sigma_{({3,0,\cdots,0},3,0,\cdots,0)} \cdot A \right] \right\}^2 + \frac{1}{8} \left\{ \Tr \left[ \sigma_{(1,0,\cdots,0,0,\cdots,0)} \cdot A \right] \right\}^2 \label{tqnn_dtt_lemma_Ef2_0_2} \\
&\geq  \frac{1}{8} \left\{ \Tr \left[ \sigma_{(1,0,\cdots,0,0,\cdots,0)} \cdot A \right] \right\}^2,
\end{align}
where
\begin{align}
A' &= CX_{m} V_{m} \cdots CX_{1} V_{1} \cdot \rho_{\text{in}} \cdot V_{1}^{\dag} CX_{1}^{\dag} \cdots V_{m} CX_{m}^{\dag},\\
A &= V_{m} \cdots CX_{1} V_{1} \cdot \rho_{\text{in}} \cdot V_{1}^{\dag} CX_{1}^{\dag} \cdots V_{m}. 
\end{align}
Eq.~(\ref{tqnn_dtt_lemma_Ef2_0_1}) is derived using Lemma~\ref{ttn_lemma_wawbwcwd}, and Eq.~(\ref{tqnn_dtt_lemma_Ef2_0_2}) is derived using the collapsed form of Lemma~\ref{ttn_lemma_CNOT}:
\begin{equation*}
\CNOT (\sigma_1 \otimes \sigma_0) \CNOT^{\dag} = \sigma_1 \otimes \sigma_0, \ \CNOT (\sigma_3 \otimes \sigma_0 ) \CNOT^{\dag} = \sigma_3 \otimes \sigma_3,
\end{equation*}
We remark that during the integration of the parameters $\{\theta_{\ell}^{(j)}\}$ in each layer $\ell \in \{1,2,\cdots,m+1\}$, the coefficient of the term $\{\Tr[\sigma_{(1,0,\cdots,0)} \cdot A]\}^2$ only times a factor $1/2$ for the case $j=1$, and the coefficient remains for the cases $j>1$ (check Lemma~\ref{ttn_lemma_wawbwcwd} for detail). Since the formulation $\{\Tr[\sigma_{(1,0,\cdots,0)} \cdot A]\}^2$ remains the same when merging the operation $CX_{\ell}$ with $\sigma_{(1,0,\cdots,0)}$, for $\ell \in \{1,2,\cdots,m\}$, we could generate the following equation,
\begin{align}
\E_{\bm{\theta}_2} \cdots \E_{\bm{\theta}_{m+1}} (f_{\text{DTT}}-\frac{1}{2})^2 \geq (\frac{1}{2})^{m+2} \left\{ \Tr \left[ \sigma_{(1,0,\cdots,0)} \cdot V_1 \cdot \rho_{\text{in}} \cdot V_1^{\dag} \right] \right\}^2,\label{tqnn_app_dtt_lemma_Ef2_1_0}
\end{align}
where $\bm{\theta}_\ell$ denotes the vector consisted with all parameters in the $\ell$-th layer.

Finally by using Lemma~\ref{ttn_lemma_wawbwcwd}, we could integrate the parameters $\{\theta_{1}^{(j)}\}_{j=1}^{n}$ in (\ref{tqnn_app_dtt_lemma_Ef2_1_0}):
\begin{align}
\E_{\bm{\theta}} (f_{\text{DTT}}-\frac{1}{2})^2 &= \E_{\bm{\theta}_1} \cdots \E_{\bm{\theta}_{m+1}} (f_{\text{DTT}}-\frac{1}{2})^2 \\
&\geq \frac{\left\{ \Tr \left[ \sigma_{(1,0,\cdots,0)} \cdot \rho_{\text{in}} \right] \right\}^2 + \left\{ \Tr \left[ \sigma_{(3,0,\cdots,0)} \cdot \rho_{\text{in}} \right] \right\}^2 }{2^{m+3}}\label{tqnn_dtt_lemma_Ef2_2_0}\\
&\geq \frac{\left\{ \Tr \left[ \sigma_{(1,0,\cdots,0)} \cdot \rho_{\text{in}} \right] \right\}^2 + \left\{ \Tr \left[ \sigma_{(3,0,\cdots,0)} \cdot \rho_{\text{in}} \right] \right\}^2 }{16n}.
\end{align}

\end{proof}

\section{The Proof of Theorem~\ref{tqnn_ttn_theorem}: the SC part}
\label{tqnn_app_sc}

\begin{theorem}\label{tqnn_app_sc_theorem}
Consider the $n$-qubit SC-QNN defined in Figure~\ref{tqnn_sc_circuit} and the corresponding objective function $f_{\text{SC}}$ defined in (\ref{tqnn_sc_loss}), then we have:
\begin{equation}\label{tqnn_app_sc_theorem_gradient_bound}
\frac{1+n_c}{2^{1+n_c}} \cdot \alpha(\rho_{\text{in}}) \leq \E_{\bm{\theta}} \|\nabla_{\bm{\theta}} f_{\text{SC}}\|^2 \leq 2n-1,
\end{equation}
where $n_c$ is the number of the control operation CNOT that directly links to the first qubit channel, and the expectation is taken for all parameters in $\bm{\theta}$ with uniform distributions in $[0,2\pi]$, $\rho_{\text{in}}\in \C^{2^n \times 2^n}$ denotes the input state, $\alpha(\rho_{\text{in}}) =  \Tr \left[ \sigma_{(1,0,\cdots,0)} \cdot \rho_{\text{in}} \right] ^2 + \Tr \left[ \sigma_{(3,0,\cdots,0)} \cdot \rho_{\text{in}} \right] ^2 $, and 
$ \sigma_{(i_1,i_2,\cdots,i_{n})} \equiv \sigma_{i_1} \otimes \sigma_{i_2} \otimes \cdots \otimes \sigma_{i_{n}}$.
\end{theorem}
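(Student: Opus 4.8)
The plan is to follow the same three-step template used for the TT-QNN in Appendix~\ref{tqnn_app_proof_main_theorem}, adapting each step to the step-controlled geometry. The upper bound is immediate: by Lemma~\ref{tqnn_params_shifting} each partial derivative $\partial f_{\text{SC}}/\partial\theta_j^{(k)}$ is a difference of two expectation values of the unitary observable $\sigma_{(3,0,\cdots,0)}$, hence lies in $[-1,1]$; squaring and summing the $2n-1$ parameters gives $\|\nabla_{\bm{\theta}} f_{\text{SC}}\|^2 \leq 2n-1$. For the lower bound I would discard all parameters except the $1+n_c$ single-qubit $R_Y$ rotations acting directly on the first wire (exactly one before the control block and one after each of the $n_c$ CNOTs targeting the first qubit), since every squared derivative is non-negative. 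It then suffices to prove two lemmas.

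First, an SC-analogue of Lemma~\ref{ttn_lemma_theta_0}: for each first-wire parameter $\theta_j^{(1)}$,
\begin{equation*}
\E_{\bm{\theta}}\left(\frac{\partial f_{\text{SC}}}{\partial\theta_j^{(1)}}\right)^2 = 4\,\E_{\bm{\theta}}\left(f_{\text{SC}}-\tfrac12\right)^2 .
\end{equation*}
The mechanism is identical to the TT case: after writing the derivative via Lemma~\ref{tqnn_params_shifting} and squaring, I would integrate the parameters layer by layer from the observable side using Lemmas~\ref{ttn_lemma_wawbwcwd} and~\ref{ttn_lemma_gawbgcwd}. Because the observable is $\sigma_3$ on the first wire and every Pauli that can appear on that wire lies in $\{\sigma_1,\sigma_3\}$ (indices with $\delta_{j0}+\delta_{j2}=0$), the two integration lemmas collapse to the common form $\tfrac12\Tr[A\sigma_1]\Tr[C\sigma_1]+\tfrac12\Tr[A\sigma_3]\Tr[C\sigma_3]$, so the $W$-conjugation (giving $f_{\text{SC}}-\tfrac12$) and the $G$-conjugation (giving the derivative) coincide up to the factor $4$, with all remaining terms cancelling in pairs exactly as in Eqs.~(\ref{ttn_theta_0_2_3}-\ref{ttn_theta_0_2_4}).

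Second, an SC-analogue of Lemma~\ref{tqnn_ttn_lemma_Ef2}:
\begin{equation*}
\E_{\bm{\theta}}\left(f_{\text{SC}}-\tfrac12\right)^2 \geq \frac{\alpha(\rho_{\text{in}})}{2^{\,n_c+3}} .
\end{equation*}
Here I would track only the $\sigma_{(1,0,\cdots,0)}$ branch, which is precisely what makes the step-controlled geometry work: by the collapsed form of Lemma~\ref{ttn_lemma_CNOT}, $\CNOT(\sigma_1\otimes\sigma_0)\CNOT^\dag=\sigma_1\otimes\sigma_0$, so a $\sigma_1$ on the first wire (the common \emph{target} of all $n_c$ control operations) is left invariant by every CNOT and never leaks onto a control wire, whereas the $\sigma_3$ branch spreads and is dropped as a non-negative term. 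Integrating from the observable side, the top rotation turns $\sigma_3$ into $\tfrac12$ of the surviving $\sigma_1$ term, each of the remaining $n_c$ first-wire rotations contributes another factor $\tfrac12$ while the interleaved CNOTs and all gates on the other wires act trivially on $\sigma_{(1,0,\cdots,0)}$, and the bottom rotation evaluated against $\rho_{\text{in}}$ yields $\tfrac12\big(\Tr[\sigma_{(1,0,\cdots,0)}\rho_{\text{in}}]^2+\Tr[\sigma_{(3,0,\cdots,0)}\rho_{\text{in}}]^2\big)=\tfrac12\alpha(\rho_{\text{in}})$. Collecting the prefactor $\tfrac14$ from $(f_{\text{SC}}-\tfrac12)^2$, the $n_c$ intermediate halvings, and the final $\tfrac12\alpha$ gives exactly $2^{-(n_c+3)}\alpha(\rho_{\text{in}})$.

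Combining the two lemmas over the $1+n_c$ first-wire parameters yields
\begin{equation*}
\E_{\bm{\theta}}\|\nabla_{\bm{\theta}} f_{\text{SC}}\|^2 \geq (1+n_c)\cdot 4\cdot\frac{\alpha(\rho_{\text{in}})}{2^{\,n_c+3}} = \frac{1+n_c}{2^{\,1+n_c}}\,\alpha(\rho_{\text{in}}),
\end{equation*}
as required. I expect the main obstacle to be the bookkeeping in the second lemma: unlike the balanced binary tree, the control operations here all share the first wire as target, so I must verify that the $\sigma_1$ branch genuinely stays localized on the first wire through the entire ladder of $n_c$ CNOTs (so that every other-wire gate integrates trivially and no spurious factors appear), and that exactly $n_c$ intermediate factors of $\tfrac12$ are collected — this precise count is what distinguishes the $2^{-n_c}$ scaling from the $1/n$ scaling of the tree case.
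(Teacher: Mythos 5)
Your proposal is correct and follows essentially the same route as the paper's Appendix~\ref{tqnn_app_sc}: the identical upper bound via parameter-shift boundedness, and the identical two-lemma lower bound, namely the equality $\E_{\bm{\theta}}(\partial f_{\text{SC}}/\partial\theta_j^{(1)})^2 = 4\,\E_{\bm{\theta}}(f_{\text{SC}}-\frac12)^2$ for the $1+n_c$ first-wire parameters (Lemma~\ref{tqnn_app_sc_lemma_theta_0}) together with $\E_{\bm{\theta}}(f_{\text{SC}}-\frac12)^2 \geq \alpha(\rho_{\text{in}})/2^{3+n_c}$ obtained by tracking only the $\sigma_1$ branch, which is invariant under the CNOTs targeting the first wire (Lemma~\ref{tqnn_app_sc_lemma_Ef2}). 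One small slip in your prose: saying the top rotation gives $\frac12$, ``each of the remaining $n_c$ first-wire rotations'' gives another $\frac12$, \emph{and} the bottom rotation yields $\frac12\alpha(\rho_{\text{in}})$ double-counts the bottom gate; the correct tally---which your final formula actually uses---is one halving from the top rotation, $n_c-1$ from the intermediate first-wire rotations, and the final $\frac12\alpha(\rho_{\text{in}})$ from the bottom one, giving $\frac14\cdot 2^{-n_c}\cdot\frac12\,\alpha(\rho_{\text{in}}) = 2^{-(n_c+3)}\alpha(\rho_{\text{in}})$ as in the paper.
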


\begin{proof}

Firstly we remark that by Lemma~\ref{tqnn_params_shifting}, each partial derivative is calculated as
\begin{equation*}
\frac{\partial f_{\text{SC}}}{\partial \theta_j^{(k)}} = \frac{1}{2} \left( \Tr [O \cdot V_{\text{SC}}(\bm{\theta}_+) \rho_{\text{in}} V_{\text{SC}}(\bm{\theta}_+)^{\dag}] - \Tr [O \cdot V_{\text{SC}}(\bm{\theta}_-) \rho_{\text{in}} V_{\text{SC}}(\bm{\theta}_-)^{\dag}] \right),	
\end{equation*}
since the expectation of the quantum observable is bounded by $[-1,1]$, the square of the partial derivative could be easily bounded as:
\begin{equation*}
\left(\frac{\partial f_{\text{SC}}}{\partial \theta_j^{(k)}}\right)^2 \leq 1.
\end{equation*}
By summing up $2n-1$ parameters, we obtain
\begin{equation*}
 \|\nabla_{\bm{\theta}} f_{\text{SC}}\|^2 = \sum_{j,k} \left(\frac{\partial f_{\text{SC}}}{\partial \theta_j^{(k)}}\right)^2 \leq 2n-1.
\end{equation*}

On the other side, the lower bound could be derived as follows,
\begin{align}
\E_{\bm{\theta}} \|\nabla_{\bm{\theta}} f_{\text{SC}}\|^2 {} & \geq \E_{\bm{\theta}} \left(\frac{\partial f_{\text{SC}}}{\partial \theta_1^{(1)}}\right)^2 + \sum_{j=n-n_c+1}^{n} \E_{\bm{\theta}} \left(\frac{\partial f_{\text{SC}}}{\partial \theta_j^{(1)}}\right)^2 \\
&= (1+n_c) \cdot 4 \cdot \E_{\bm{\theta}} \left(f_{\text{SC}}-\frac{1}{2}\right)^2 \label{tqnn_sc_proof_1} \\
& \geq \frac{1+n_c}{2^{1+n_c}} \cdot \left( \Tr \left[ \sigma_{(1,0,\cdots,0)} \cdot \rho_{\text{in}} \right] ^2 + \Tr \left[ \sigma_{(3,0,\cdots,0)} \cdot \rho_{\text{in}} \right] ^2 \right ), \label{tqnn_sc_proof_2}
\end{align}
where Eq.~(\ref{tqnn_sc_proof_1}) is derived using Lemma~\ref{tqnn_app_sc_lemma_theta_0}, and Eq.~(\ref{tqnn_sc_proof_2}) is derived using Lemma~\ref{tqnn_app_sc_lemma_Ef2}.

\end{proof}

\begin{lemma}\label{tqnn_app_sc_lemma_theta_0}
For the objective function $f_{\text{SC}}$ defined in Eq.~(\ref{tqnn_app_dtt_loss}), the following formula holds for every $j$ such that $\theta_{j}^{(1)}$ tunes the single qubit gate on the first qubit channel:
\begin{equation}\label{tqnn_app_sc_lemma_theta_0_main_result}
\E_{\bm{\theta}} \left( \frac{\partial f_{\text{SC}}}{\partial \theta_{j}^{(1)}} \right)^2 =  4 \cdot \E_{\bm{\theta}} \left( f_{\text{SC}}-\frac{1}{2} \right)^2,	
\end{equation}
where the expectation is taken for all parameters in $\bm{\theta}$ with uniform distribution in $[0,2\pi]$.
\end{lemma}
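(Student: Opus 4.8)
The plan is to follow the template of the tree-tensor case (Lemma~\ref{ttn_lemma_theta_0}) almost verbatim, since the only structural facts that proof uses are (i) the observable $\sigma_{(3,0,\cdots,0)}$ touches the first qubit alone, (ii) every single-qubit gate is an $R_Y=e^{-i\theta\sigma_2}$ rotation, and (iii) each CNOT acting on the first channel does so with the first qubit as its \emph{target}. First I would expand $f_{\text{SC}}-\tfrac12 = \tfrac12\Tr[\sigma_{(3,0,\cdots,0)}V_{\text{SC}}\rho_{\text{in}}V_{\text{SC}}^\dag]$, apply Lemma~\ref{tqnn_params_shifting} to write $\partial f_{\text{SC}}/\partial\theta_j^{(1)}$, and use the reality of all factors except $\rho_{\text{in}}=\rho_{\text{in}}^\dag$ to collapse the two symmetric derivative terms into a single trace in which the gate $W_j^{(1)}$ on the first channel is replaced by $G_j^{(1)}=\partial W_j^{(1)}/\partial\theta_j^{(1)}$. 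The target identity is then equivalent to showing $\E_{\bm\theta}\bigl[(\partial f_{\text{SC}}/\partial\theta_j^{(1)})^2-4(f_{\text{SC}}-\tfrac12)^2\bigr]=0$.

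The engine is the pair Lemma~\ref{ttn_lemma_wawbwcwd} and Lemma~\ref{ttn_lemma_gawbgcwd}: for a rotation with $k=2$ and an adjacent Pauli $B=D=\sigma_{i_1}$, both evaluate to the \emph{same} expression precisely when $\delta_{i_1 0}+\delta_{i_1 2}=0$, i.e.\ when $i_1\in\{1,3\}$. Thus I would integrate the parameters one layer at a time, starting from the observable side, and carry the single invariant that the Pauli index on the first channel always lies in $\{1,3\}$. Under this invariant the contribution of the differentiated gate to the squared-gradient term and to the $4(f-\tfrac12)^2$ term is produced by the two lemmas with identical coefficients, so the two expressions cancel term by term and the difference integrates to zero, exactly as in~\eqref{ttn_lemma_theta_0_main_result}.

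What remains, and what is genuinely specific to the SC architecture, is to certify that the $\{1,3\}$ invariant survives the whole backward sweep. Three facts do the work: an $R_Y$ rotation keeps the traceless part of a first-qubit operator inside $\mathrm{span}\{\sigma_1,\sigma_3\}$ (it never creates a $\sigma_2$ component), so integrating a first-channel gate preserves $i_1\in\{1,3\}$ by Lemma~\ref{ttn_lemma_wawbwcwd}; a CNOT whose target is the first qubit and whose control carries $\sigma_0$ obeys the collapsed identities $\CNOT(\sigma_1\otimes\sigma_0)\CNOT^\dag=\sigma_1\otimes\sigma_0$ and $\CNOT(\sigma_3\otimes\sigma_0)\CNOT^\dag=\sigma_3\otimes\sigma_3$ of Lemma~\ref{ttn_lemma_CNOT}, again preserving $i_1\in\{1,3\}$; and any gate or CNOT not touching the first channel leaves its Pauli unchanged. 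The delicate hypothesis is the ``control carries $\sigma_0$'' clause: because the $n_c$ CNOTs linking to the first channel use \emph{distinct} control qubits, and because the only non-identity Pauli present before the sweep starts lives on the first qubit, each such control qubit is still untouched (carries $\sigma_0$) at the moment the backward sweep reaches its CNOT. This is what lets me invoke the collapsed CNOT form rather than the full, $\sigma_2$-spreading form.

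The main obstacle is precisely this bookkeeping step: in the TT case the clean binary-tree layout makes the ``fresh control'' property automatic, whereas here I must order the first-channel CNOTs $C_1,\dots,C_{n_c}$ by circuit position and argue, by induction on the backward sweep, that no operation lying between $C_\ell$ and the measurement has yet deposited a non-identity Pauli on the control of $C_\ell$. Once this is in place, the adjacent-channel CNOTs of the first part and all rotations off the first channel are absorbed by the generic integrations of Lemma~\ref{ttn_lemma_wawbwcwd}, which never alter the first index, and the cancellation argument closes identically to the tree-tensor proof.
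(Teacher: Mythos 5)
Your proposal follows essentially the same route as the paper's own proof: reduce the claim to showing $\E_{\bm{\theta}}\bigl[(\partial f_{\text{SC}}/\partial\theta_j^{(1)})^2-4(f_{\text{SC}}-\tfrac{1}{2})^2\bigr]=0$, sweep the layers backward from the observable using Lemma~\ref{ttn_lemma_wawbwcwd} together with the collapsed form of Lemma~\ref{ttn_lemma_CNOT} so that the first-qubit Pauli index stays in $\{1,3\}$, and close with the coincidence of Lemmas~\ref{ttn_lemma_wawbwcwd} and~\ref{ttn_lemma_gawbgcwd} at the differentiated gate. Your explicit ``fresh control'' induction is precisely the bookkeeping the paper performs implicitly (the controls $2,3,\dots,n_c+1$ of the first-channel CNOTs are encountered in increasing order during the backward sweep, so each still carries $\sigma_0$ when reached), so the proposal is correct and matches the paper's argument.
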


\begin{proof}

First we write the formulation of $f_{\text{SC}}$ in detail:
\begin{equation}\label{tqnn_app_sc_lemma_theta_0_0}
f_{\text{SC}} =\frac{1}{2} + \frac{1}{2} \Tr \left[ \sigma_{(3,0,\cdots,0)} \cdot V_{n} CX_{n-1} \cdots CX_{1} V_{1} \cdot \rho_{\text{in}} \cdot V_{1}^{\dag} CX_{1}^{\dag} \cdots CX_{n-1}^{\dag} V_{n}^{\dag} \right],	
\end{equation}
where we denote $\sigma_{(i_1,i_2,\cdots,i_{n})} \equiv \sigma_{i_1} \otimes \sigma_{i_2} \otimes \cdots \otimes \sigma_{i_{n}}$.
The operation $CX_\ell$ is defined as,
\begin{equation*}
CX_\ell = \left\{
\begin{array}{lr}
\text{CNOT operation on qubits pair } (n+1-\ell, n-\ell) & (1 \leq \ell \leq n-1-n_c), \\
\text{CNOT operation on qubits pair } (n+1-\ell, 1) & (n-n_c \leq \ell \leq n-1).
\end{array}
\right.
\end{equation*}
The operation $V_\ell$ is defined as,
\begin{equation*}
V_\ell = \left\{
\begin{array}{lr}
W_{1}^{(1)} \otimes W_{1}^{(2)} \otimes \cdots \otimes W_{1}^{(n)} & (\ell=1), \\
I^{\otimes (n-\ell)} \otimes W_{\ell}^{(1)} \otimes I^{\otimes (\ell-1)} & (1 \leq \ell \leq n-1-n_c), \\
W_{\ell}^{(1)} \otimes I \otimes I \otimes \cdots \otimes I & (n-n_c \leq \ell \leq n).
\end{array}
\right.
\end{equation*}

Now we focus on the partial derivative of the function $f_{\text{SC}}$ to the parameter $\theta_{j}^{(1)}$. We have:
\begin{align}
\frac{\partial f_{\text{SC}}}{\partial \theta_{j}^{(1)}} &= \frac{1}{2} \Tr \left[ \sigma_{(3,0,\cdots,0)} \cdot V_{n} CX_{n-1} \cdots \frac{\partial V_j}{\theta_j^{(1)}} \cdots CX_{1} V_{1} \cdot \rho_{\text{in}} \cdot V_{1}^{\dag} CX_{1}^{\dag} \cdots V_j^\dag \cdots CX_{n-1}^{\dag} V_{n}^{\dag} \right] \label{tqnn_app_sc_theta_0_1_1} \\
&+ \frac{1}{2} \Tr \left[ \sigma_{(3,0,\cdots,0)} \cdot V_{n} CX_{n-1} \cdots V_j \cdots CX_{1} V_{1} \cdot \rho_{\text{in}} \cdot V_{1}^{\dag} CX_{1}^{\dag} \cdots \frac{\partial V_j^\dag}{\theta_j^{(1)}} \cdots CX_{n-1}^{\dag} V_{n}^{\dag} \right] \label{tqnn_app_sc_theta_0_1_2} \\
&= \Tr \left[ \sigma_{(3,0,\cdots,0)} \cdot V_{n} CX_{n-1} \cdots \frac{\partial V_j}{\theta_j^{(1)}} \cdots CX_{1} V_{1} \cdot \rho_{\text{in}} \cdot V_{1}^{\dag} CX_{1}^{\dag} \cdots V_j^\dag \cdots CX_{n-1}^{\dag} V_{n}^{\dag} \right]. \label{tqnn_app_sc_theta_0_1_3}
\end{align}
The Eq.~(\ref{tqnn_app_sc_theta_0_1_3}) holds because both terms in (\ref{tqnn_app_sc_theta_0_1_1}) and (\ref{tqnn_app_sc_theta_0_1_2}) except $\rho_{\text{in}}$ are real matrices, and $\rho_{\text{in}} = \rho_{\text{in}}^{\dag}$.
The key idea to derive $\E_{\bm{\theta}} \left( \frac{\partial f_{\text{SC}}}{\partial \theta_{j}^{(1)}} \right)^2 =  4 \cdot \E_{\theta} (f_{\text{SC}}-\frac{1}{2})^2$ is that for cases $B=D=\sigma_j \in \{\sigma_1,\sigma_3\}$ in Lemma~\ref{ttn_lemma_wawbwcwd} and Lemma~\ref{ttn_lemma_gawbgcwd}, the term $\frac{\delta_{j0} + \delta_{j2}}{2}=0$, which means both lemma collapse to the same formulation:
\begin{eqnarray*}
\E_{\theta} \Tr[W A W^\dag B] \Tr[W C W^\dag D] &=& 	\E_{\theta} \Tr[G A W^\dag B] \Tr[G C W^\dag D] \\  &=&  \frac{1}{2}  \Tr[A\sigma_1] \Tr[C\sigma_1] + \frac{1}{2} \Tr[A\sigma_3] \Tr[C\sigma_3].
\end{eqnarray*}

Now we write the analysis in detail.

\begin{align}
&\E_{\bm{\theta}}  \left[	\left( \frac{\partial f_{\text{SC}} }{\partial \theta_{j}^{(1)}} \right)^2 - 4 \left(f_{\text{SC}}-\frac{1}{2} \right)^2 \right]\\
=& \E_{\bm{\theta}_1} \cdots \E_{\bm{\theta}_{n-1}} \E_{\bm{\theta}_{n}} \Tr \left[\sigma_{(3,0,\cdots,0)} \cdot V_{n} CX_{n-1} A_{n-1} CX_{n-1} V_{n}^{\dag} \right]^2 \label{tqnn_app_sc_theta_0_2_1}\\
-& \E_{\bm{\theta}_1} \cdots \E_{\bm{\theta}_{n-1}} \E_{\bm{\theta}_{n}} \Tr \left[ \sigma_{(3,0,\cdots,0)} \cdot V_{n} CX_{n-1} B_{n-1} CX_{n-1} V_{n}^{\dag} \right]^2 \\
=& \E_{\bm{\theta}_1} \cdots \E_{\bm{\theta}_{n-1}} \left\{ \frac{1}{2} \Tr \left[ \sigma_{(3,3,0,\cdots,0)} \cdot A_{n-1}\right]^2 + \frac{1}{2} \Tr \left[ \sigma_{(1,0,0,\cdots,0)} \cdot A_{n-1}\right]^2 \right\} \label{tqnn_app_sc_theta_0_2_3} \\
-& \E_{\bm{\theta}_1} \cdots \E_{\bm{\theta}_{n-1}} \left\{ \frac{1}{2} \Tr \left[ \sigma_{(3,3,0,\cdots,0)} \cdot B_{n-1}\right]^2 + \frac{1}{2} \Tr \left[ \sigma_{(1,0,0,\cdots,0)} \cdot B_{n-1}\right]^2 \right\},\label{tqnn_app_sc_theta_0_2_4}
\end{align}
where 
\begin{align*}
A_{n-1} &= 	V_{n-1} CX_{n-2} \cdots \frac{\partial V_j}{\theta_j^{(1)}} \cdots CX_{1} V_{1} \cdot \rho_{\text{in}} \cdot V_{1}^{\dag} CX_{1}^{\dag} \cdots V_j^\dag \cdots CX_{n-2}^{\dag} V_{n-1}^{\dag}, \\
B_{n-1} &= V_{n-1} CX_{n-2} \cdots V_j \cdots CX_{1} V_{1} \cdot \rho_{\text{in}} \cdot V_{1}^{\dag} CX_{1}^{\dag} \cdots V_j^\dag \cdots CX_{n-2}^{\dag} V_{n-1}^{\dag}.
\end{align*}
Eq.~(\ref{tqnn_app_sc_theta_0_2_3}) and Eq.~(\ref{tqnn_app_sc_theta_0_2_4}) are derived using Lemma~\ref{ttn_lemma_wawbwcwd} and the collapsed form of Lemma~\ref{ttn_lemma_CNOT}:
\begin{equation*}
\CNOT (\sigma_1 \otimes \sigma_0) \CNOT^{\dag} = \sigma_1 \otimes \sigma_0, \ \CNOT (\sigma_3 \otimes \sigma_0 ) \CNOT^{\dag} = \sigma_3 \otimes \sigma_3,
\end{equation*}
and $\bm{\theta}_{\ell}$ denotes the vector consisted with all parameters in the $\ell$-th layer.
The integration (\ref{tqnn_app_sc_theta_0_2_1})-(\ref{tqnn_app_sc_theta_0_2_4}) could be performed similarly for parameters $\{\bm{\theta}_{n-1}, \bm{\theta}_{n-2}, \cdots, \bm{\theta}_{j+1}\}$. It is not hard to find that after the integration of the parameters $\bm{\theta}_{j+1}$, the term $\Tr[\sigma_{\bm{i}} \cdot A_j]^2$ and  $\Tr[\sigma_{\bm{i}} \cdot B_j]^2$ have the opposite coefficients. Besides, the first index of each Pauli tensor product $\sigma_{(i_1, i_2,\cdots,i_{n})}$ could only be $i_1 \in \{1,3\}$ because of the Lemma~\ref{ttn_lemma_wawbwcwd}.
So we could write
\begin{align}
& \E_{\bm{\theta}} 	\left[ \left( \frac{\partial f_{\text{SC}} }{\partial \theta_{j}^{(1)}} \right)^2 - 4 \left( f_{\text{SC}}-\frac{1}{2} \right)^2 \right] \\
=& \E_{\bm{\theta}_1} \cdots \E_{\bm{\theta}_{j}} \left\{ \sum_{i_1 \in\{1,3\}} \sum_{i_2=0}^{3} \cdots \sum_{i_{n}=0}^{3} a_{\bm{i}} \Tr \left[ \sigma_{\bm{i}} \cdot A_j\right]^2 - a_{\bm{i}} \Tr \left[ \sigma_{\bm{i}} \cdot B_j\right]^2 \right\},
\end{align}
where 
\begin{align*}
A_{j} &= \frac{\partial V_j}{\partial \theta_j^{(1)}} CX_{j-1} \cdots CX_{1} V_{1} \cdot \rho_{\text{in}} \cdot V_{1}^{\dag} CX_{1}^{\dag} \cdots CX_{j-1} V_j^\dag \\
 &= (G_{j}^{(1)} \otimes I^{\otimes (n-1)}) A_{j}^{/\theta_{j}^{(1)}} (W_{j}^{(1)\dag} \otimes I^{\otimes (n-1)}), \\
B_{j} &= V_j CX_{j-1} \cdots CX_{1} V_{1} \cdot \rho_{\text{in}} \cdot V_{1}^{\dag} CX_{1}^{\dag} \cdots CX_{j-1} V_j^\dag \\ 
&= (W_{j}^{(1)} \otimes I^{\otimes (n-1)}) A_{j}^{/\theta_{j}^{(1)}} (W_{j}^{(1)\dag} \otimes I^{\otimes (n-1)}),
\end{align*}
and $\pm a_{\bm{i}}$ are coefficients of the term $\Tr \left[ \sigma_{\bm{i}} \cdot A_j\right]^2$, $\Tr \left[ \sigma_{\bm{i}} \cdot B_j\right]^2$, respectively.
We denote $G_j^{(1)} = \frac{\partial W_{j}^{(1)}}{\partial \theta_{j}^{(1)}}$ and use $A_{j}^{/\theta_{j}^{(1)}}$ to denote the rest part of $A_j$ and $B_j$.
By Lemma~\ref{ttn_lemma_wawbwcwd} and Lemma~\ref{ttn_lemma_gawbgcwd}, we have
\begin{align*}
\E_{\theta_j^{(1)}}	\left[ \Tr \left[ \sigma_{\bm{i}} \cdot A_j \right]^2 - \Tr \left[ \sigma_{\bm{i}} \cdot B_j \right]^2  \right] = 0,
\end{align*}
since for the case $i_1 \in \{1,3\}$, the term $\frac{\delta_{i_1 0} + \delta_{i_1 2}}{2} = 0$ in Lemma~\ref{ttn_lemma_wawbwcwd} and Lemma~\ref{ttn_lemma_gawbgcwd}, which means both lemmas have the same formulation. Then, we derive the Eq.~(\ref{tqnn_app_sc_lemma_theta_0_main_result}).

\end{proof}

\begin{lemma}\label{tqnn_app_sc_lemma_Ef2}
For the loss function $f_{\text{SC}}$ defined in (\ref{tqnn_sc_loss}), we have:
\begin{equation}
\E_{\bm{\theta}} \left( f_{\text{SC}}-\frac{1}{2} \right)^2 \geq \frac{\left\{ \Tr \left[ \sigma_{(1,0,\cdots,0)} \cdot \rho_{\text{in}} \right] \right\}^2 + \left\{ \Tr \left[ \sigma_{(3,0,\cdots,0)} \cdot \rho_{\text{in}} \right] \right\}^2 }{2^{3+n_c}},	
\end{equation}
where we denote 
\begin{equation*}
\sigma_{(i_1,i_2,\cdots,i_{n})} \equiv \sigma_{i_1} \otimes \sigma_{i_2} \otimes \cdots \otimes \sigma_{i_{n}},
\end{equation*}
and the expectation is taken for all parameters in $\bm{\theta}$ with uniform distributions in $[0,2\pi]$.
\end{lemma}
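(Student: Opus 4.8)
The plan is to mirror the tree-tensor argument of Lemma~\ref{tqnn_ttn_lemma_Ef2}, integrating the parameters one layer at a time starting from the measured observable $\sigma_{(3,0,\cdots,0)}$ and working inward toward $\rho_{\text{in}}$, while tracking only the weight that the first qubit carries in the Pauli basis. The engine is the observation that for a first-qubit Pauli $\sigma_j$ with $j\in\{1,3\}$ and rotation generator $\sigma_2$ (so $\delta_{j0}+\delta_{j2}=0$), Lemma~\ref{ttn_lemma_wawbwcwd} and Lemma~\ref{ttn_lemma_gawbgcwd} collapse to the single identity $\E_{\theta}\Tr[WAW^\dag\sigma_j]^2=\tfrac12\Tr[A\sigma_1]^2+\tfrac12\Tr[A\sigma_3]^2$. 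Writing $(f_{\text{SC}}-\tfrac12)^2=\tfrac14\Tr[\sigma_{(3,0,\cdots,0)}V_n CX_{n-1}\cdots V_1\rho_{\text{in}}V_1^\dag\cdots CX_{n-1}^\dag V_n^\dag]^2$, I would integrate the outermost first-qubit rotation (inside $V_n$) to produce $\tfrac18\Tr[\sigma_{(3,0,\cdots,0)}\cdot]^2+\tfrac18\Tr[\sigma_{(1,0,\cdots,0)}\cdot]^2$ and then discard the non-negative $\sigma_3$ summand, keeping $\tfrac18\Tr[\sigma_{(1,0,\cdots,0)}\cdot]^2$.

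The crucial invariant is that the retained term always has the form $\Tr[\sigma_{(1,0,\cdots,0)}\,\cdot\,]^2$, i.e. $\sigma_1$ on the first qubit and identity elsewhere. This is self-sustaining: passing $\sigma_{(1,0,\cdots,0)}$ back through a second-part CNOT (whose target is the first qubit and whose control currently carries $\sigma_0$) leaves it unchanged by the collapsed form $\CNOT(\sigma_1\otimes\sigma_0)\CNOT^\dag=\sigma_1\otimes\sigma_0$ of Lemma~\ref{ttn_lemma_CNOT}, and the next first-qubit rotation again yields $\tfrac12\Tr[\cdot\,\sigma_1]^2+\tfrac12\Tr[\cdot\,\sigma_3]^2$, from which I keep the $\sigma_1$ piece at the cost of one factor $\tfrac12$. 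Each of the $n_c$ CNOTs wired to the first qubit is shielded on its outer side by exactly one first-qubit rotation, so processing the whole second part accumulates a factor $2^{-n_c}$. Once inside the first part, every remaining gate (the adjacent CNOTs $CX_{n-n_c-1},\dots,CX_1$ and the rotations $V_{n-n_c},\dots,V_2$) acts only on qubits $2,\dots,n$, where the retained string is the identity; these integrate out with coefficient $1$ by the $\delta_{j0}=1$ case of Lemma~\ref{ttn_lemma_wawbwcwd} and preserve $\sigma_{(1,0,\cdots,0)}$. The final step integrates the first-qubit rotation inside $V_1$: here I keep both summands, turning $\Tr[\sigma_{(1,0,\cdots,0)}V_1\rho_{\text{in}}V_1^\dag]^2$ into $\tfrac12(\Tr[\sigma_{(1,0,\cdots,0)}\rho_{\text{in}}]^2+\Tr[\sigma_{(3,0,\cdots,0)}\rho_{\text{in}}]^2)$, and the remaining first-layer rotations on qubits $2,\dots,n$ integrate trivially. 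Collecting the base factor $\tfrac14$, the $n_c$ halvings from the second part, and the final halving gives $2^{-2}\cdot 2^{-n_c}\cdot 2^{-1}=2^{-(n_c+3)}$ multiplying $\alpha(\rho_{\text{in}})$, which is the claim.

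The main obstacle is not any single integral but maintaining the invariant cleanly and getting the power of two exactly right. I must verify that after each rotation integration the kept term is literally $\sigma_1\otimes I^{\otimes(n-1)}$, so that (i) every second-part CNOT sees $\sigma_0$ on its control and the collapse applies, and (ii) all first-part gates genuinely act on identity factors; this is where the step-controlled geometry, rather than the uniform tree, must be used. The delicate bookkeeping is the count of first-qubit rotations that actually contribute a factor $\tfrac12$: there are exactly $n_c$ of them interleaved with the $n_c$ first-qubit-targeted CNOTs in the second part, plus the single one in $V_1$, for a total of $n_c+1$ halvings on top of the base $\tfrac14$. I would double-check this count against the $n=4,\,n_c=2$ circuit of Figure~\ref{tqnn_sc_circuit} to rule out an off-by-one, since miscounting by one rotation would change the exponent to $n_c+4$ and break the subsequent $\tilde{\Omega}(2^{-n_c})$ bound on the gradient norm.
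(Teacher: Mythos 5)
Your proposal is correct and follows essentially the same route as the paper's proof: the paper formalizes your invariant via the intermediate states $\rho^{(j)}$ and telescopes exactly as you describe, keeping only the $\Tr[\sigma_{(1,0,\cdots,0)}\,\cdot\,]^2$ term through the $n_c$ first-qubit rotations and first-qubit-targeted CNOTs (Lemmas~\ref{ttn_lemma_wawbwcwd} and \ref{ttn_lemma_CNOT}), passing the first part with coefficient $1$, and recovering both Pauli terms at $V_1$, for the factor $\tfrac14\cdot 2^{-n_c}\cdot\tfrac12=2^{-(3+n_c)}$. Your bookkeeping ($n_c$ halvings in the second part plus one at $V_1$) matches the paper's Eqs.~(\ref{tqnn_app_sc_lemma_Ef2_mean_1})--(\ref{tqnn_app_sc_lemma_Ef2_mean_6}), and your reading of the gate layout agrees with Figure~\ref{tqnn_sc_circuit}.
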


\begin{proof}

We expand the function $f_{\text{SC}}$ in detail,
\begin{align*}
f_{\text{SC}}  &= \frac{1}{2} + \frac{1}{2} \Tr \left[ \sigma_{(3,0,\cdots,0)} \cdot V_{n} CX_{n-1} \cdots CX_{1} V_{1} \cdot \rho_{\text{in}} \cdot V_{1}^{\dag} CX_{1}^{\dag} \cdots CX_{n-1}^{\dag} V_{n}^{\dag} \right] \\
&= \frac{1}{2} + \frac{1}{2} \Tr \left[ \sigma_{3,0,\cdots, 0} \cdot \rho^{(n)} \right],
\end{align*}
where
\begin{equation}\label{tqnn_app_sc_lemma_Ef2_rho_j}
\rho^{(j)} = V_j CX_{j-1} \cdots CX_1 V_1  \rho_{\text{in}} \cdot V_{1}^{\dag} CX_{1}^{\dag} \cdots CX_{j-1}^{\dag} V_{j}^{\dag},\ \forall j \in [n].
\end{equation}
Now we focus on the expectation of $(f_{\text{SC}}-\frac{1}{2})^2$:
\begin{align}
\E_{\bm{\theta}} \left(f_{\text{SC}}-\frac{1}{2} \right)^2 =&\ {} \E_{\bm{\theta}_1} \E_{\bm{\theta}_2} \cdots \E_{\bm{\theta}_n} \frac{1}{4}  \Tr \left[ \sigma_{3,0,\cdots, 0} \cdot \rho^{(n)} \right]^2 \label{tqnn_app_sc_lemma_Ef2_mean_1}\\
\geq &\ {} \E_{\bm{\theta}_1} \E_{\bm{\theta}_2} \cdots \E_{\bm{\theta}_{n-1}} \frac{1}{8}  \Tr \left[ \sigma_{1,0,\cdots, 0} \cdot \rho^{(n-1)} \right]^2 \label{tqnn_app_sc_lemma_Ef2_mean_2}\\
\geq &\ {} \E_{\bm{\theta}_1} \E_{\bm{\theta}_2} \cdots \E_{\bm{\theta}_{n-n_c}}  \frac{1}{2^{2+n_c}} \Tr \left[ \sigma_{1,0,\cdots, 0} \cdot \rho^{(n-n_c)} \right]^2 \label{tqnn_app_sc_lemma_Ef2_mean_3}\\ 
=&\ {} \E_{\bm{\theta}_1} \E_{\bm{\theta}_2} \cdots \E_{\bm{\theta}_{n-n_c-1}}  \frac{1}{2^{2+n_c}} \Tr \left[ \sigma_{1,0,\cdots, 0} \cdot \rho^{(n-n_c-1)} \right]^2 \label{tqnn_app_sc_lemma_Ef2_mean_4} \\
=&\ {} \E_{\bm{\theta}_1}  \frac{1}{2^{2+n_c}} \Tr \left[ \sigma_{1,0,\cdots, 0} \cdot \rho^{(1)} \right]^2 \label{tqnn_app_sc_lemma_Ef2_mean_5} \\
=&\ {} \frac{1}{2^{3+n_c}} \left( \Tr \left[ \sigma_{1,0,\cdots, 0} \cdot \rho_{\text{in}} \right]^2 + \Tr \left[ \sigma_{3,0,\cdots, 0} \cdot \rho_{\text{in}} \right]^2 \right), \label{tqnn_app_sc_lemma_Ef2_mean_6}
\end{align}
where
Eq.~(\ref{tqnn_app_sc_lemma_Ef2_mean_6}) is derived from Lemma~\ref{ttn_lemma_wawbwcwd}.
We derive Eqs.~(\ref{tqnn_app_sc_lemma_Ef2_mean_2}-\ref{tqnn_app_sc_lemma_Ef2_mean_3}) by noticing that following equations hold for $n-n_c+1 \leq j \leq n$ and $i \in \{1,3\}$,
\begin{align}
\E_{\bm{\theta}_j} \Tr \left[ \sigma_{i,0,\cdots, 0} \cdot \rho^{(j)} \right]^2 
=&\ {} \E_{\bm{\theta}_j} \Tr \left[ \sigma_{i,0,\cdots, 0} \cdot V_j CX_{j-1} \rho^{(j-1)} CX_{j-1}^{\dag} V_j^{\dag} \right]^2 \label{tqnn_app_sc_lemma_Ef2_small_1_1} \\
=&\ {}  \frac{1}{2} \Tr \left[ \sigma_{1,0,\cdots, 0} \cdot CX_{j-1} \rho^{(j-1)} CX_{j-1}^{\dag} \right]^2 \label{tqnn_app_sc_lemma_Ef2_small_1_2} \\
+&\ {}  \frac{1}{2} \Tr \left[ \sigma_{3,0,\cdots, 0} \cdot CX_{j-1} \rho^{(j-1)} CX_{j-1}^{\dag} \right]^2 \label{tqnn_app_sc_lemma_Ef2_small_1_3} \\
\geq &\ {}  \frac{1}{2} \Tr \left[ \sigma_{1,0,\cdots, 0} \cdot CX_{j-1} \rho^{(j-1)} CX_{j-1}^{\dag} \right]^2 \label{tqnn_app_sc_lemma_Ef2_small_1_4} \\
= &\ {} \frac{1}{2} \Tr \left[ \sigma_{1,0,\cdots, 0} \cdot  \rho^{(j-1)} \right]^2 ,\label{tqnn_app_sc_lemma_Ef2_small_1_5}
\end{align}
where Eq.~(\ref{tqnn_app_sc_lemma_Ef2_small_1_1}) is derived based on the definition of $\rho^{(j)}$ in Eq.~(\ref{tqnn_app_sc_lemma_Ef2_rho_j}), Eqs.~(\ref{tqnn_app_sc_lemma_Ef2_small_1_2}-\ref{tqnn_app_sc_lemma_Ef2_small_1_3}) are derived based on Lemma~\ref{ttn_lemma_wawbwcwd}, and Eq.~(\ref{tqnn_app_sc_lemma_Ef2_small_1_5}) is derived based on Lemma~\ref{ttn_lemma_CNOT}.

We derive Eq.~(\ref{tqnn_app_sc_lemma_Ef2_mean_4}-\ref{tqnn_app_sc_lemma_Ef2_mean_5}) by noticing that following equations hold for $2 \leq j \leq n-n_c$,
\begin{align}
\E_{\bm{\theta}_j} \Tr \left[ \sigma_{1,0,\cdots, 0} \cdot \rho^{(j)} \right]^2 
=&\ {} \E_{\bm{\theta}_j} \Tr \left[ \sigma_{1,0,\cdots, 0} \cdot V_j CX_{j-1} \rho^{(j-1)} CX_{j-1}^{\dag} V_j^{\dag} \right]^2 \label{tqnn_app_sc_lemma_Ef2_small_2_1} \\
=&\ {}  \Tr \left[ \sigma_{1,0,\cdots, 0} \cdot CX_{j-1} \rho^{(j-1)} CX_{j-1}^{\dag} \right]^2 \label{tqnn_app_sc_lemma_Ef2_small_2_2} \\
=&\ {}  \Tr \left[ \sigma_{1,0,\cdots, 0} \cdot \rho^{(j-1)} \right]^2, \label{tqnn_app_sc_lemma_Ef2_small_2_3}
\end{align}
where Eq.~(\ref{tqnn_app_sc_lemma_Ef2_small_2_1}) is based on the definition of $\rho^{(j)}$ in Eq.~(\ref{tqnn_app_sc_lemma_Ef2_rho_j}), Eq.~(\ref{tqnn_app_sc_lemma_Ef2_small_2_2}) is based on Lemma~\ref{ttn_lemma_wawbwcwd}, and Eq.~(\ref{tqnn_app_sc_lemma_Ef2_small_2_3}) is based on Lemma~\ref{ttn_lemma_CNOT}.

\end{proof}

\end{document}